\g@addto@macro\bfseries{\boldmath}
\numberwithin{equation}{section}
\newtheorem{prop}{Proposition}
\begin{document}
\bibliographystyle{apsrev4-1}
\title{Analytic $SU(N)$ Skyrmions at finite Baryon density}

\author{Pedro D. Alvarez$^\dagger$}
\affiliation{Departamento de Fisica, Universidad de Antofagasta, Aptdo 02800, Chile.}

\author{Sergio L. Cacciatori$^{\dagger\dagger}$}
\affiliation{\textit{Dipartimento di Scienza e Alta Tecnologia, Universit\`a dell'Insubria, via Valleggio 11, Como, Italia Milan, Italia.}}
\affiliation{INFN sezione di Milano, via Celoria 16, 20133 Milan, Italy}

\author{Fabrizio Canfora$^\star$}
\affiliation{\textit{Centro de Estudios Cient\'{\i}ficos (CECS), Casilla 1469, Valdivia, Chile.}}

\author{Bianca L. Cerchiai$^{\star\star}$}
\affiliation{Museo Storico della Fisica e Centro Studi e Ricerche \textquotedblleft Enrico Fermi\textquotedblright , Piazza del Viminale 1,00184 Roma, Italy}
\affiliation{Politecnico di Torino, Dip. DISAT. Corso Duca degli Abruzzi 24, 10129 Torino, Italy}
\affiliation{Istituto Nazionale di Fisica Nucleare (INFN) Sezione diTorino, Italy}
\affiliation{Arnold-Regge Center, via P. Giuria 1, 10125 Torino, Italy}

\email{\small emails: $\star$ canfora@cecs.cl, \\
$\dagger$ pd.alvarez.n@gmail.com,\\
$\dagger\dagger$ sergio.cacciatori@uninsubria.it,\\
$\star\star$ bianca.cerchiai@polito.it}

\begin{abstract}
We construct analytic (3+1)-dimensional Skyr\-mions living at finite Baryon density in the $SU(N)$ Skyrme model that are not trivial embeddings of $SU(2) $ into $SU(N)$. We used Euler angles decomposition for arbitrary $N$ and the generalized 
hedgehog Ansatz at finite Baryon density. The Skyrmions of high topological charge that we find represent smooth Baryonic layers whose properties can be computed explicitly. In particular, we determine the energy to Baryon charge ratio for any $N$ 
showing the smoothness of the large $N$ limit. The closeness to the BPS bound of these configurations can also be analyzed. The energy density profiles of these finite density Skyrmions have \textit{lasagna-like} shape in agreement with recent 
experimental findings. The shear modulus can be precisely estimated as well and our analytical result is close to recent numerical studies in the literature.
\end{abstract}

\keywords{nuclear pasta, large N, skyrmions, multi-solitons}

\maketitle


\section{Introduction}

The characterization of the phase diagram of the low energy limit of QCD at finite Baryon density and low temperatures has motivated intense research in the last two decades, see \cite{Brambilla:2014jmp} and references therein. Analytic models are scarce and new exact results are hard to obtain. A well known example is the (3+1)-dimensional Nambu-Jona-Lasinio (NJL) model that shares some of the analytical difficulties of the low energy limit of QCD (see \cite{Nambu:1961tp} for a review). Together with the uselessness of perturbation theory at low energy, this means that the complicated phase diagram of low energy QCD cannot be easily analyzed with the available analytic techniques (see \cite{Rajagopal:2000wf, Alford:2000ze, Casalbuoni:2003wh} and references therein).

A remarkable feature of low energy QCD at finite Baryon density is that at low temperature very complex structures appear. When the Baryon density is increased, a phase that is commonly defined as \textit{nuclear pasta} appears. In \cite{Ravenhall:1983uh,pasta2,Horowitz:2014xca,Horowitz:2015gda,Dorso:2018lkv,daSilvaSchneider:2018yby,Caplan:2018gkr,Nandi:2018czr}, the presence of ``baryonic layers'' was disclosed, which will be the main focus of the present paper. Such a name arises from the fact that most of the baryonic charge and energy density is concentrated within lasagna-shaped regions in three dimensions\footnote{\textit{Nuclear spaghetti} and \textit{nuclear gnocchi} phases are also known to appear: see the references quoted above.}. Many physical properties of these configurations are currently under investigation, such as the elasticity of nuclear pasta and their transport properties \cite{Dorso:2018lkv,daSilvaSchneider:2018yby,Caplan:2018gkr,Nandi:2018czr}. The high topological charge of nuclear pasta makes it hard to study analytically.

As powerful numerical techniques are available to analyze these configurations (see, for instance, \cite{Dorso:2018lkv,daSilvaSchneider:2018yby,Caplan:2018gkr,Nandi:2018czr} and references therein), why should one insist so much in finding analytic solutions?
There are many reasons to strive for analytic solutions even when numerical techniques are available. \textit{Firstly}, it could be enough to remind all the fundamental concepts that we have understood thanks to the availability of the Kerr solutions in General Relativity and of the non-Abelian monopoles and instantons in Yang-Mills-Higgs theory. 
\textit{Secondly}, as in the present case, analytic solutions can disclose relevant physical properties of very complex structures which are difficult to analyze even numerically.

Until recently, these types of non-homogeneous condensates in the low energy limit of QCD in (3+1)-dimensions could not be properly understood analytically. A further problem is that, computationally, the large $N_f$ and large $N_c$ limits must be addressed carefully \cite{McLerran:2007qj,*Hidaka:2008yy, Glozman:2007tv}. One of the goals of the present paper is to shed light on the large $N_f$ behavior of these complex structures.

A simplified version of the low energy limit of QCD that encodes many relevant features is the (1+1)-dimensional version of the NJL model, also known as chiral Gross-Neveu model \cite{Gross:1974jv,Dashen:1975xh,Shei:1976mn,Feinberg:1996gz}. Such a model possesses a crystalline phase at low temperature and finite Baryon density \cite{Basar:2008ki,Schon:2000he,Thies:2006ti,Basar:2009fg,Torrieri:2010gz,Lottini:2012as,Lottini:2011zp,Torrieri:2011dg}. These results suggest that ordered structures must also appear in the low energy limit of QCD. At leading order in the \'{}t Hooft expansions \cite{tHooft:1974pnl,Veneziano:1976wm,Witten:1979kh,*Witten:1983tx}, the low energy limit of QCD is described by the Skyrme theory \cite{Skyrme:1961vq,*Skyrme:1961vr,*Skyrme:1962vh} (see \cite{Manton:2004tk,Shifman:2009zz} for reviews). Despite the Bosonic nature of the Skyrmion field $U$, its solitons represent Baryons (see \cite{Witten:1979kh,*Witten:1983tx,Finkelstein:1968hy,Balachandran:1982ty,Balachandran:1983dj, *Balachandran:1985fb,Adkins:1983ya}).

Here, we will analyze the appearance of complex structures at finite Baryon density in the $SU(N)$ Skyrme model in (3+1)-dimensions. We will focus on the analytic computations of relevant physical properties, such as the energy density, the energy per Baryon and the shear modulus of nuclear-lasagna like structures living at finite density \footnote{Pioneering works on the Skyrme model at finite density are \cite{Brey:1995zz,Klebanov:1985qi,Wuest:1987rc,Manton:1987xf,Goldhaber:1987pb}\ and references therein.}. We will compute their corresponding scaling with $N$.\\
We will combine the use of Euler angles for $SU(N)$ developed in \cite{Bertini:2005rc,Cacciatori:2012qi,Tilma:2004kp} together with the use of non-spherical hedgehog Ansatz introduced in (see \cite{Canfora:2013xja,*Canfora:2018rdz,Chen:2013qha,Canfora:2015xra,Ayon-Beato:2015eca,Alvarez:2017cjm,Aviles:2017hro,Canfora:2018clt,Canfora:2019asc,Canfora:2019kzd}).

\section{Skyrme action}

The action of the Skyrme model in four dimensions is 
\begin{equation}
S=\frac{K}{4}\int d^{4}x\sqrt{-g} \ \mathrm{tr}\left( R^{\mu }R_{\mu }+\frac{\lambda }{8}F_{\mu\nu}F^{\mu\nu}\right) \,,
\label{sky1}
\end{equation}
where $R_{\mu }=U^{-1}\partial _{\mu }U=R_{\mu }^{j}t_{j}$ with $U\in SU(N)$, $t_{i}$ the $SU(N)$ generators, and where $K$ and $\lambda \ $are the Skyrme couplings, $g$ is the the metric determinant,\footnote{We remind the reader that the $N$ of the $SU(N)$ of the Skyrme model corresponds to \textbf{N}$_{f}$.} and $F_{\mu \nu }=[R_{\mu },R_{\nu }]$. The field equations are
\begin{equation}
\nabla ^{\mu }\left( R_{\mu }+\frac{\lambda }{4}\left[ R^{\nu },F_{\mu\nu} \right] \right) =0\,.  \label{sky2.6}
\end{equation}
We will construct topologically non-trivial solutions at finite Baryon density. Our main goal is to determine the scaling with $N$ of relevant physical quantities. As we want to analyze Skyrmions of high topological charge living in flat spaces at finite Baryon density, we will consider the following metric: 
\begin{equation}
ds^{2}=-dt^{2}+L_{r}^{2}dr^{2}+L_{\gamma }^{2}d\gamma ^{2}+L_{\varphi}^{2}d\varphi ^{2}\ ,  \label{Ansatz-metric}
\end{equation}
while the range of coordinates is 
\begin{equation}
0\leq r\leq 2\pi \ ,\ 0\leq \gamma \leq 2\pi \ ,\ 0\leq \varphi \leq 2\pi \ .
\end{equation}
with the caveat that, despite the chosen values, \textit{they are not periodic!} The parameters \ $L_{r}$, $L_{\gamma }$ and $L_{\varphi }$ represent the size of the box within which the Skyrmion is confined.

\subsection{Quantities of high physical interest.}

\textit{Firstly}, the main goal of the paper is to compute the energy per Baryon and its large
N behavior. Therefore, only solutions with non-vanishing Baryon charge have
been considered. The usual definition of Baryon charge in the Skyrme model
(see \cite{Skyrme:1961vq,*Skyrme:1961vr,*Skyrme:1962vh,Witten:1979kh,*Witten:1983tx,Balachandran:1982ty,Balachandran:1983dj,*Balachandran:1985fb}) is%
\begin{equation}
W=B=\frac{1}{24\pi ^{2}}\int_{\left\{ t=const\right\} }\rho _{B}\ ,
\label{rational4}
\end{equation}%
\begin{equation}
\rho _{B}=\epsilon ^{ijk} \mathrm{tr}\left( U^{-1}\partial _{i}U\right) \left(
U^{-1}\partial _{j}U\right) \left( U^{-1}\partial _{k}U\right) \ ,
\label{rational4.1}
\end{equation}
so that a necessary condition in order to have non-trivial topological
charge is%
\begin{equation}
\rho _{B}\neq 0\ .  \label{necscond}
\end{equation}%
From the geometrical point of view, the above condition can be interpreted as
saying that the Skyrmion ``fills a three-dimensional spatial volume", at least
locally. On the other hand, such a condition is not sufficient in general.
One also has to require that the spatial integral of $\rho _{B}$ be a 
\textit{non-vanishing integer}:%
\begin{equation}
\frac{1}{24\pi ^{2}}\int_{\left\{ t=const\right\} }\rho _{B}\in 
\mathbb{Z}
\ .  \label{necscond2}
\end{equation}%
Usually, this second requirement allows to fix some of the parameters and
integration constants of the Ansatz, as we will see in the following.
However, there are more global conditions to be satisfied, as it will be
explained below. Hence, in the following we will only consider solutions
satisfying both the condition in Eq. (\ref{necscond}) and the one in Eq. (%
\ref{necscond2}).

\textit{Secondly}, the energy density (the $0-0$ component of the
energy-momentum tensor) reads 
\begin{align}
T_{00}=&-\frac{K}{2}\mathrm{tr}\left[ R_{0}R_{0}-\frac{1}{2}g_{00}R^{\alpha
}R_{\alpha }\right.\nonumber\\
&\left.+\frac{\lambda }{4}\left( g^{\alpha \beta }F_{0\alpha
}F_{0\beta }-\frac{g_{00}}{4}F_{\sigma \rho }F^{\sigma \rho }\right) \right]
\ ,  \label{timunu1}
\end{align}%
where $F_{\mu\nu}=[R_\mu,R_\nu]$. Thus, the total energy $E$ of the Skyrmion
is the spatial integral of the above quantity%
\begin{equation*}
E=\int_{\left\{ t=const\right\} }\sqrt{-g} \, T_{00}\ .
\end{equation*}
We \textit{define a Skyrmion }$U$\textit{\ to be static if its energy
density defined above is static}. In other words, a Skyrmion is static, if it
corresponds to a static distribution of energy density. It is worth to note
that this definition is more general than the naive definition of a static
Skyrmion as a static $SU(N)$-valued configuration $U$ which does not depend
on time. In particular an elegant approach to avoid Derrick's famous no-go
theorem on the existence of solitons corresponds to search for a
time-periodic Ansatz such that the energy density of the configuration is
still static, as it happens for boson stars \cite{Kaup:1968zz} (in the simpler case of 
$U(1)$-charged scalar field: see \cite{Liebling:2012fv} and references therein). The
Ansatz to be defined in the next sections will have exactly this property.
Moreover, unlike what happens for the usual Bosons star Ansatz for $U(1)$%
-charged scalar fields, the present Ansatz for $SU(N)$-valued scalar field
also possesses a non-trivial topological charge. Thus, we are interested in
solutions in which the energy density has non-trivial local maxima, which
could be identified with the position of the Skyrmions.

Given a solution of $SU(N)$ with Baryonic charge $B$ and energy $E$ living
in the metric (\ref{Ansatz-metric}) we have already mentioned, it is very interesting to analyze
the following quantity (which is nothing but the energy per Baryon of the
configuration $g\left( N,a\right) $)%
\begin{equation}
\frac{E}{B}\overset{def}{=}g\left( N,a\right) \ ,  \label{defratio}
\end{equation}%
where $a$ is any set of integration constants which characterizes the given
solution. It is especially interesting to understand the behavior of $%
g\left( N,a\right) $ defined above when $N$ is large (the \'{}t Hooft limit). Here and in the following we will call the function $g\left( N,a\right)$ the \textquotedblleft \textit{g-factor\textquotedblright }.
The very deep question is whether or not, in the given family of solutions
one is considering, one can define%
\begin{equation}
g^{\ast }\left( a\right) =\underset{N\rightarrow \infty }{\lim }g\left(
N,a\right)   \label{limitratio}
\end{equation}%
and if this limit is well defined. In particular, one would like to know
whether or not ``the closeness to the BPS bound" improves when $N$ is large.
Indeed, it is worth to remind that in the $SU(2)$ case all the known
solutions with non-vanishing topological charge exceed the bound by at least
the 20\%. Hence, one would like to know whether, in the \'{}t Hooft limit, the
``closeness of Skyrmions to the BPS bound" is finite or whether it grows without
bound. This issue is deeply related with the so-called Veneziano limit \cite%
{Veneziano:1976wm}, which is a variant of the \'{}t Hooft limit in which also the flavour 
number $N_{f}$ goes to infinity in such a way that $N_c/N_f$ stays finite. The Veneziano limit allows to take into
account the effects of quarks while keeping the advantages of the \'{}t Hooft
topological expansion. Since, to arrive at the Skyrme model as an effective
low energy limit of QCD, $N_{c}$ must be already large, the large $N$ limit which we are considering here (in which $N$ is
the one of the $SU(N)$ Skyrme model), can be considered as a sort
of Veneziano limit applied to the Skyrme model itself. The fact that such a
limit is smooth is a very non-trivial result which would be very difficult
to prove directly on the QCD Lagrangian.

The above discussion clearly shows that in order to declare a solution of
the Skyrme field equations as "physically interesting" two criteria must be
satisfied:

\textbf{1)} \textit{The topological charge of the solution must be
non-vanishing}

\textbf{2)} \textit{The energy density }$T_{00}$\textit{\ as function of the
coordinates must have an interesting pattern.}


\section{Local solutions}

Using the Euler angles for $SU(N)$ determined in \cite{Bertini:2005rc,Cacciatori:2012qi} together with the Ansatz for non-spherical Skyrmions living at finite Baryon density 
in \cite{Canfora:2013xja,*Canfora:2018rdz,Chen:2013qha,Canfora:2015xra,Ayon-Beato:2015eca,Alvarez:2017cjm,Aviles:2017hro,Canfora:2018clt,Canfora:2019asc,Canfora:2019kzd}, one arrives at the following Ansatz for the $SU(N)$ Skyrmion: 
\begin{align}
U[t,r,\varphi ,\gamma ]& =e^{\Phi k}e^{h(r)}e^{m\gamma k},  \label{Ansatz-FP}
\\
\Phi &=\frac{t}{L_{\varphi }}-\varphi \ ,  \label{Ansatz-FP1}
\end{align}
with a suitable choice of $k$ in $\mathfrak{su}(N)$ and $h(r)$ in the Cartan subalgebra 
$H$ to be specified below, $m$ a nonvanishing integer number, and where we
recall that the metric is given by (\ref{Ansatz-metric}).
When necessary to expand with respect to the basis of $\mathfrak{su}(N)$, we will also
write 
\begin{align}
h(r)=y_{1}(r)J_{1}+\ldots+y_{N-1}(r)J_{N-1},  \label{acca}
\end{align}
with (see app. \ref{app:su(n)}) 
\begin{align}
J_k=i(E_{k,k}-E_{k+1,k+1})\ , \quad k=1,\ldots,N-1.
\end{align}
In general we will use the simplifying notations
\begin{align}
 h'=\frac {d}{dr} h(r), \qquad h''=\frac {d^2}{dr^2} h(r).
\end{align}
As for $k$, for $c_j$ arbitrary complex numbers, forming the components of
the vector $\underline c\in \mathbb{C}^{N-1}$, we choose 
\begin{align}
k\equiv k_{\underline c}=\sum_{j=1}^{N-1} (c_j \lambda_j -c^*_j
\lambda_j^\dagger)\ ,  \label{kappa}
\end{align}
$\lambda_j\equiv \lambda_{\alpha_j}$ being the eigenmatrices of the simple
roots (App.\ref{app:su(n)}). We get 

\begin{prop}
\label{prop:eqmotion} From the Ansatz (\ref{Ansatz-FP}), (\ref{Ansatz-FP1}),
(\ref{Ansatz-metric}), the equations of motion reduce to 
\begin{align}
h^{\prime \prime }=\frac {\lambda m^2}{4L_\gamma^2} \left( [k,[k,h^{\prime
\prime }]]-[k,[h^{\prime },[h^{\prime },k]]] \right)\ ,  \label{eqmotion}
\end{align}
where the prime indicates derivation w.r.t. $r$.
\end{prop}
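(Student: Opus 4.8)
The plan is to evaluate the field equation (\ref{sky2.6}) on the Ansatz (\ref{Ansatz-FP})--(\ref{Ansatz-FP1}) by first computing the left-invariant current $R_\mu=U^{-1}\partial_\mu U$, and then to exploit the fact that the metric (\ref{Ansatz-metric}) has constant coefficients and is therefore flat, so that $\nabla^\mu J_\mu=g^{\mu\nu}\partial_\mu J_\nu$ for the Lie-algebra-valued covector $J_\mu:=R_\mu+\frac{\lambda}{4}[R^\nu,F_{\mu\nu}]$. The two structural inputs that make this tractable are that $k$ commutes with both $e^{\Phi k}$ and $e^{m\gamma k}$, and that $h(r)$ lies in the \emph{abelian} Cartan subalgebra $H$, so that $[h(r),h'(r)]=0$ and hence $\partial_r e^{h(r)}=h'e^{h(r)}=e^{h(r)}h'$. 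With these, a direct computation gives
\begin{align*}
R_r &= e^{-m\gamma k}\,h'\,e^{m\gamma k},\qquad R_\gamma=mk, \\
R_t &= \frac{1}{L_\varphi}\,e^{-m\gamma k}\,(e^{-h}ke^{h})\,e^{m\gamma k},\qquad R_\varphi=-L_\varphi R_t\ .
\end{align*}

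In particular every $R_\mu$ has the form $g^{-1}\tilde R_\mu\, g$ with $g:=e^{m\gamma k}$ and $\tilde R_\mu$ a function of $r$ alone. Since conjugation by $g$ is a Lie-algebra homomorphism, the same holds for $F_{\mu\nu}$ and hence for $J_\mu=g^{-1}\tilde J_\mu\, g$. First I would expand $g^{\mu\nu}\partial_\mu J_\nu$ on the diagonal metric: the $\partial_t J_t$ and $\partial_\varphi J_\varphi$ terms vanish because $\tilde J_t,\tilde J_\varphi$ and $g$ are all independent of $t$ and $\varphi$; furthermore $\partial_r J_r=g^{-1}(\partial_r\tilde J_r)g$ because $g$ is independent of $r$; and $\partial_\gamma J_\gamma=-m\,g^{-1}[k,\tilde J_\gamma]g$ because $[k,g]=0$. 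Conjugating away the overall factor $g^{-1}(\cdot)g$, the field equation reduces to the purely $r$-dependent identity
\begin{equation*}
\frac{1}{L_r^2}\,\partial_r\tilde J_r-\frac{m}{L_\gamma^2}\,[k,\tilde J_\gamma]=0\ .
\end{equation*}

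It then remains to compute $\tilde J_r$ and $\tilde J_\gamma$ from their definitions. Here the key simplification is that $\tilde R_\varphi=-L_\varphi\tilde R_t$ forces $\tilde F_{\mu\varphi}=-L_\varphi\tilde F_{\mu t}$, so that with $g^{tt}=-1$ and $g^{\varphi\varphi}=L_\varphi^{-2}$ the two ``time-like'' contributions $g^{tt}[\tilde R_t,\tilde F_{\mu t}]$ and $g^{\varphi\varphi}[\tilde R_\varphi,\tilde F_{\mu\varphi}]$ cancel identically in both $\tilde J_r$ and $\tilde J_\gamma$. What survives is $\tilde J_r=h'+\frac{\lambda m^2}{4L_\gamma^2}[k,[h',k]]$ and $\tilde J_\gamma=mk-\frac{\lambda m}{4L_r^2}[h',[h',k]]$. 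Substituting, multiplying through by $L_r^2$, and rewriting $[k,[k,h'']]=-[k,[h'',k]]$ gives exactly (\ref{eqmotion}); note that $L_r$ and $L_\varphi$ drop out entirely.

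I expect the main obstacle to be the bookkeeping in the last step: keeping track of signs and of the placement of the inverse-metric factors inside the nested commutators of $[R^\nu,F_{\mu\nu}]$, and checking that the time-like pieces really cancel rather than reinforce. A subtler point, used essentially throughout, is the identity $\partial_r e^{h(r)}=h'e^{h(r)}$, which is false for a generic curve in $\mathfrak{su}(N)$ and holds here only because $h(r)$ stays inside the abelian Cartan subalgebra $H$.
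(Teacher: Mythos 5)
Your proof is correct, and its computational core coincides with the paper's: you compute $R_r=e^{-m\gamma k}h'e^{m\gamma k}$, $R_\gamma=mk$, $R_t=\frac{1}{L_\varphi}e^{-m\gamma k}x\,e^{m\gamma k}$ with $x=e^{-h}ke^{h}$, exploit the conjugation structure by $e^{m\gamma k}$ (and the abelian nature of $H$, so $\partial_r e^{h}=h'e^{h}$), and reduce the divergence equation to an $r$-dependent identity whose two surviving pieces are exactly $\partial_r\bigl(h'+\tfrac{\lambda m^2}{4L_\gamma^2}[k,[h',k]]\bigr)$ and $-\tfrac{m}{L_\gamma^2}[k,\,-\tfrac{\lambda m}{4L_r^2}[h',[h',k]]]$, giving (\ref{eqmotion}). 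The only point where you genuinely diverge from the paper is the treatment of the $(t,\varphi)$ sector: the paper changes variables to $T=t+L_\varphi\varphi$ and $\Phi$, in which the metric becomes $ds^2=-L_\varphi\,dT\,d\Phi+L_r^2dr^2+L_\gamma^2d\gamma^2$ and one has $R_T=0$, $R^\Phi=0$, so all contributions from that sector drop out identically and no cancellation ever needs to be checked; you instead stay in the diagonal coordinates and verify by hand that $g^{tt}[\tilde R_t,\tilde F_{\mu t}]+g^{\varphi\varphi}[\tilde R_\varphi,\tilde F_{\mu\varphi}]=0$, which follows from $\tilde R_\varphi=-L_\varphi\tilde R_t$ and hence $\tilde F_{\mu\varphi}=-L_\varphi\tilde F_{\mu t}$. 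The two routes are equivalent: the paper's null-like coordinates automate the cancellation you carry out explicitly (the fact that the Ansatz depends on $t,\varphi$ only through $\Phi$ is precisely why $R_T$ vanishes there), while your version keeps the metric diagonal so the replacement $\nabla^\mu J_\mu=g^{\mu\nu}\partial_\mu J_\nu$ and the bookkeeping of inverse-metric factors are more transparent. Your sign conventions check out, and the final identity $[k,[k,h'']]=-[k,[h'',k]]$ is used the same way in the paper's last display.
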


The proof is given in appendix \ref{app:prop1}. Exploiting (\ref{acca}) and (%
\ref{kappa}) we can further simplify the equations of motion, which can be
put in the following form.

\begin{align}
& h^{\prime \prime }+\frac{\lambda m^{2}}{2L_{\gamma }^{2}}%
\sum_{j=1}^{N-1}\alpha _{j}(h^{\prime \prime })|c_{j}|^{2}J_{j}=0,
\label{ridotta} 
\end{align}
\begin{widetext}
\begin{align}
& \sum_{j<k}\left( \alpha _{j}(h^{\prime})^2-\alpha _{k}(h^{\prime})^2-i\left(
\alpha _{j}(h^{\prime \prime })-\alpha _{k}(h^{\prime \prime })\right)
\right) c_{j}c_{k}[\lambda _{j},\lambda _{k}]  \qquad-h.c.=0,  \label{vincolo}
\end{align}%
where $h.c.$ stays for Hermitian conjugate, and $\alpha_j$ are a suitable choice of simple roots of $SU(N)$, defined in App. \ref{sec: simple roots}. 
Indeed, using (\ref{khhk}) and (%
\ref{kkh}), we can rewrite (\ref{eqmotion}) as 
\begin{equation}
h^{\prime \prime }=\frac{\lambda m^{2}}{4L_{\gamma }^{2}}\left\{ \sum_{j<k}%
\left[ i(\alpha _{j}(h^{\prime \prime })-\alpha _{k}(h^{\prime \prime
}))-(\alpha _{j}(h^{\prime})^2-\alpha _{k}(h^{\prime})^2)\right]
c_{j}c_{k}[\lambda _{j},\lambda _{k}]-h.c.-2\sum_{j=1}^{N-1}\alpha
_{j}(h^{\prime \prime })|c_{j}|^{2}J_{j}\right\} . \label{equazionciona}
\end{equation}
\end{widetext}
Now we use general properties of simple roots. Since $\lambda _{j}$ are
eigenmatrices relative to simple roots, it happens that or $[\lambda
_{j},\lambda _{k}]=0$ or it is an eigenmatrix relative to a positive root.%
\footnote{%
That is a linear combination of simple roots with non negative integer
coefficients} Similar considerations follow for $\lambda _{j}^{\dagger }$
w.r.t. negative roots. It follows that none of these terms can lie in $H$,
so that projecting equation (\ref{equazionciona}) on $H$ we get ($h^{\prime
\prime }$ belongs in $H$ by definition) we get (\ref{ridotta}), while
projecting on the complement we get (\ref{vincolo}).
These equations could be expressed even more explicitly in components, by exploiting (\ref{acca}) and using that $\alpha_j(J_k)={C_{A_{N-1}}}_{j,k}$ are the components of 
the Cartan matrix of $SU(N)$, as defined in App. \ref{app:sftf}, so that $\alpha_j(h^{(n)})=2y_j^{(n)}-y_{j+1}^{(n)}-y_{j-1}^{(n)}$. However, such explicit expression is not necessary in order to get the general solution.


\subsection{Explicit solutions}
We want now to find all the solutions of the equations (\ref{ridotta}) and
(\ref{vincolo}). To these hand we will make use of some technical fact
explained in appendix \ref{app:sftf}. Let us first consider (\ref{vincolo}).
Using (\ref{lambdajlambdak}) it becomes 
\begin{widetext}
\begin{align}
\sum_{j=1}^{N-2} \left(\alpha_j(h^{\prime})^2-\alpha_{j+1}(h^{\prime})^2-i
\left(\alpha_j(h^{\prime \prime })-\alpha_{j+1}(h^{\prime \prime })\right)
\right) c_jc_{j+1} E_{j,j+2} -h.c. =0.
\end{align} 
\end{widetext}
We will assume $\underline c$ to be generic, with this meaning that all the $%
c_j$ are non zero. Since $E_{j,j+2}$, including their conjugates, are all
linearly independent, this gives 
\begin{align*}
\alpha_j(h^{\prime})^2-\alpha_{j+1}(h^{\prime})^2-i \left(\alpha_j(h^{\prime
\prime })-\alpha_{j+1}(h^{\prime \prime })\right)=0, \\ \quad j=1,\ldots, N-2.
\label{queste}
\end{align*}
Since $\alpha_j$ are real valued, we get also 
\begin{align}
\alpha_j(h^{\prime \prime })=\alpha_{j+1}(h^{\prime \prime }), \quad\
\alpha_j(h^{\prime})^2-\alpha_{j+1}(h^{\prime})^2=0,\cr \quad\ j=1,\ldots, N-2.
\end{align}
The firsts of these give 
\begin{align}
\alpha_j(h^{\prime \prime })=\alpha_{1} (h^{\prime \prime }), \quad\
j=2,\ldots,N-1.  \label{alpha hsec}
\end{align}
We have two possibilities. Or $h^{\prime \prime }=0$, or not. We will now
show that the second case leads to a contradiction. First, notice that if $%
h^{\prime \prime }\neq 0$ then it must be $\alpha_j(h^{\prime \prime })\neq 0
$ for at least one $j$ (since the $\alpha_j$ are linearly independent), so
that all $\alpha_j(h^{\prime \prime })$ are equal and different from zero.
From the second of (\ref{queste}) we have that there must exist signs $%
\varepsilon_j$ such that 
\begin{align}
\alpha_j(h^{\prime })=\varepsilon_j \alpha_1(h^{\prime }), \quad\ j={%
2,\ldots,N-1}.  \label{alpha hpr}
\end{align}
Deriving it w.r.t. $r$ must give (\ref{alpha hsec}), so that $\varepsilon_j=1
$ for all $j$, and we are left with the linear system of equations 
\begin{align}
\alpha_j(h^{\prime })=\alpha_1(h^{\prime }), \quad\ j={2,\ldots,N-1}.
\end{align}
Since the $\alpha_j$ are linearly independent (of rank $N-1$) this is a set
of $N-2$ linearly independent equations for $h^{\prime }\in H$. Since $H$ is 
$N-1$ dimensional, the space of solutions is one dimensional and the general
solution of it is 
\begin{align*}
h^{\prime }(r) =f(r) v,
\end{align*}
where $f$ is an arbitrary function and $v\in H$ is the unique matrix
satisfying $\alpha_j(v)=1$ for all $j$ (which we will compute later, for now
it is sufficient to know it exists). We now replace this solution in (\ref%
{ridotta}). We immediately get 
\begin{align*}
f^{\prime }(r)\left( v+ \frac {\lambda m^2}{2L^2_\gamma} \sum_{j=1}^{N-1}
|c_j|^2 J_j \right)=0.
\end{align*}
Since we have assumed $h^{\prime \prime }\neq 0$, we have $f^{\prime }\neq 0$
and, therefore, 
\begin{align*}
v= -\frac {\lambda m^2}{2L^2_\gamma} \sum_{j=1}^{N-1} |c_j|^2 J_j \ .
\end{align*}
After applying $\alpha_k$ to this equality, using that $\alpha_k(v)=1$ and
noticing that $\alpha_k(J_j)= {C_{A_{N-1}}}_{k,j}$ are the components of the
Cartan matrix, we get 
\begin{align*}
1= -\frac {\lambda m^2}{2L^2_\gamma} \sum_{j=1}^{N-1} {C_{A_{N-1}}}_{k,j}
|c_j|^2, \qquad j=1,\ldots, N-1
\end{align*}
This relation can be inverted easily: if we consider $1$ at varying $j$ to
be the components of a vector in $\mathbb{R}^{N-1}$, we can apply the
inverse Cartan matrix to both members, thus getting 
\begin{align*}
|c_j|^2=-\frac {2L^2_\gamma}{\lambda m^2} \sum_{k=1}^{N-1} {C^{-1}_{A_{N-1}}}%
_{k,j}.
\end{align*}
Since $\lambda$ is positive and the same is true for the elements of the
inverse Cartan matrix (\ref{Cartinversa}), we see that this led us to a
contradiction. Therefore, the only possibility is that $f^{\prime }(r)=0$,
which is equivalent to $h^{\prime \prime }(r)=0$.\newline
Hence, we proceed in investigating the first possibility, $h^{\prime \prime
}=0$. In this case (\ref{ridotta}) is automatically satisfied and (\ref%
{vincolo}) reduces to (\ref{alpha hpr}). Its solution is 
\begin{align}
h^{\prime }(r)= a v
\end{align}
where $a$ is a constant and $v\in H$ is the unique matrix solving $%
\alpha_j(v)=\varepsilon_j$, $j=1,\ldots,N-1$ where $\varepsilon_j\in\{0,1\}$
(and $\varepsilon_1=1$). Since $\varepsilon_1$ is fixed, this gives $2^{N-2}$
solution for every choice of $c_j$ in $k$. As we will see in the explicit
example of $SU(4)$, however, not all of these are really distinct solutions.
There is a convenient way to express $v$ explicitly. Indeed, let us write%
\footnote{%
We omit an irrelevant additive integration constant} $h=a r v_\varepsilon$,
where $a$ is a constant and $v_\varepsilon \in H$ is a matrix 
\begin{align}
v_\varepsilon =\mathrm{diag} (v_1,\ldots,v_N) \label{vvarepsilon}
\end{align}
such that $\alpha_i (v_\varepsilon) =\varepsilon_i$, $\varepsilon_i=\pm 1$, $%
i=1,\ldots,N-1$ and of course $\sum_{i=1}^N v_i=0$. These equations are
easily solved by writing $v=\sum_{j=1}^{n-1}w_j J_j$ so that the equations are 
\begin{align*}
\varepsilon_k=\sum_{j=1}^{N-1} {C_{A_{N-1}}}_{k,j} w_j
\end{align*}
and the solution is 
\begin{align}
w_j =\sum_{k=1}^{N-1} {C^{-1}_{A_{N-1}}}_{j,k} \varepsilon_j
\end{align}
and 
\begin{align}
v_{\varepsilon}=\sum_{j,k} {C^{-1}_{A_{N-1}}}_{j,k} \varepsilon_k J_j.
\end{align}
We have thus proved 

\begin{prop}
All the solutions of the equations of motion (\ref{sky2.6}) determined by
the ans\"atze (\ref{Ansatz-FP}), (\ref{Ansatz-FP1}), (\ref{Ansatz-metric})
are given by 
\begin{align}
h(r)&=ar v_{\varepsilon}, \\
v_{\varepsilon}&=\sum_{j,k} {C^{-1}_{A_{N-1}}}_{j,k} \varepsilon_k J_j,
\label{vepsilon}
\end{align}
where $a$ is a real constant and $\varepsilon_j$ are signs, with $%
\varepsilon_1=1$.
\end{prop}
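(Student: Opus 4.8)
The plan is to push the reduction of the field equations as far as it will go. By Proposition~\ref{prop:eqmotion}, which I take for granted, the full equations \eqref{sky2.6} evaluated on the ansatz \eqref{Ansatz-FP}, \eqref{Ansatz-FP1}, \eqref{Ansatz-metric} collapse to the single matrix equation \eqref{eqmotion} for the curve $h(r)$ in the Cartan subalgebra $H$. I would then expand $h$ in the Cartan basis \eqref{acca} and substitute the explicit form \eqref{kappa} of $k$, using the bracket identities \eqref{khhk} and \eqref{kkh} to rewrite \eqref{eqmotion} as \eqref{equazionciona}. The structural point that makes the splitting work is that, since the $\lambda_j$ are eigenmatrices of simple roots, every commutator $[\lambda_j,\lambda_k]$ is either zero or an eigenmatrix for a positive root, and likewise the conjugated brackets for negative roots, so none of these terms meets $H$. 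Projecting \eqref{equazionciona} onto $H$ then produces \eqref{ridotta}, and projecting onto the complement produces \eqref{vincolo}.

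Next I would impose the \emph{generic} choice of $\underline c$, i.e. all $c_j\neq 0$, and analyse \eqref{vincolo}. Using \eqref{lambdajlambdak} the constraint becomes a combination of the matrices $E_{j,j+2}$ and their Hermitian conjugates; since these are linearly independent, each coefficient must vanish, and separating the pieces (the $\alpha_j$ being real-valued) gives $\alpha_j(h^{\prime\prime})=\alpha_{j+1}(h^{\prime\prime})$ together with $\alpha_j(h^{\prime})^2=\alpha_{j+1}(h^{\prime})^2$ for $j=1,\dots,N-2$. In particular all the $\alpha_j(h^{\prime\prime})$ coincide with $\alpha_1(h^{\prime\prime})$ and all the $\alpha_j(h^{\prime})^2$ coincide with $\alpha_1(h^{\prime})^2$.

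The heart of the argument — and the step I expect to be the real obstacle — is excluding $h^{\prime\prime}\neq 0$. If $h^{\prime\prime}\neq 0$, then linear independence of the simple roots forces $\alpha_1(h^{\prime\prime})\neq 0$, hence all $\alpha_j(h^{\prime\prime})\neq 0$; the relations $\alpha_j(h^{\prime})^2=\alpha_1(h^{\prime})^2$ give $\alpha_j(h^{\prime})=\varepsilon_j\alpha_1(h^{\prime})$ for some signs $\varepsilon_j$, and differentiating and comparing with $\alpha_j(h^{\prime\prime})=\alpha_1(h^{\prime\prime})$ forces $\varepsilon_j=1$ for all $j$. Thus $h^{\prime}(r)=f(r)\,v$ with $v\in H$ the unique element satisfying $\alpha_j(v)=1$ for every $j$ and $f$ some scalar function; inserting this into \eqref{ridotta} and using $h^{\prime\prime}=f^{\prime}v$ gives $f^{\prime}(v+\tfrac{\lambda m^2}{2L_\gamma^2}\sum_j|c_j|^2 J_j)=0$. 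Since $h^{\prime\prime}\neq 0$ means $f^{\prime}\neq 0$, applying $\alpha_k$ and inverting the Cartan matrix of $A_{N-1}$ yields $|c_j|^2=-\tfrac{2L_\gamma^2}{\lambda m^2}\sum_k {C^{-1}_{A_{N-1}}}_{k,j}$, which is strictly negative because $\lambda>0$ and every entry of the inverse Cartan matrix \eqref{Cartinversa} is positive — a contradiction. So the positivity of the inverse Cartan matrix of $A_{N-1}$ is exactly the fact that closes the classification.

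It then follows that $h^{\prime\prime}=0$, so \eqref{ridotta} holds trivially and \eqref{vincolo} reduces to $\alpha_j(h^{\prime})=\varepsilon_j\alpha_1(h^{\prime})$ for $j=2,\dots,N-1$ with signs $\varepsilon_j$; normalising $\varepsilon_1=1$ by absorbing an overall sign into the integration constant, $h^{\prime}$ is constant and equal to $a\,v_\varepsilon$, where $v_\varepsilon\in H$ is the unique diagonal traceless anti-Hermitian matrix with $\alpha_i(v_\varepsilon)=\varepsilon_i\in\{\pm 1\}$. Writing $v_\varepsilon=\sum_j w_j J_j$ and using $\alpha_k(J_j)={C_{A_{N-1}}}_{k,j}$, the defining conditions become $\varepsilon_k=\sum_j {C_{A_{N-1}}}_{k,j} w_j$, which invert to $w_j=\sum_k {C^{-1}_{A_{N-1}}}_{j,k}\varepsilon_k$. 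Dropping the irrelevant additive constant, $h(r)=a r\, v_\varepsilon$ with $v_\varepsilon=\sum_{j,k}{C^{-1}_{A_{N-1}}}_{j,k}\varepsilon_k J_j$ and $a$ real, which is the asserted classification.
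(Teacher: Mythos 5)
Your proposal is correct and follows essentially the same route as the paper: projecting \eqref{equazionciona} onto $H$ and its complement to obtain \eqref{ridotta}--\eqref{vincolo}, splitting the generic-$\underline c$ constraint via the independence of the $E_{j,j+2}$, ruling out $h''\neq 0$ by the sign contradiction with the strictly positive entries of $C^{-1}_{A_{N-1}}$, and then inverting the Cartan matrix to get $h=ar\,v_\varepsilon$. No gaps to report.
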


This solutions are only local solutions, which means that they solve the
differential equations. They do not extend automatically to global
solutions, that are solution with a well defined Baryon number. Looking for
global solutions is the task of the next section.


\section{Global solutions}

\label{sec:global} Up to now we have found the most general solution of the
differential Skyrme equation. Nevertheless, it is not sufficient to
determine a Skyrmion, since global conditions have to be imposed in order to
get a solution with a well defined topological charge. This condition is not
simply equivalent to impose that the topological charge must be integer
(this is just a consequence of the right topological condition) but that it
has to wrap a homological cycle an entire number of times (mathematically,
it has to cover a cycle, that means to be a surjective map with a well
defined degree). We will normalize the parametrizations so to have all
ranges in $[0,2\pi ]$. 

\subsection{Statement of the problem}

The difficulty in passing from local solutions to global solutions is
twofold. In order to illustrate it, let us consider the specific example of $%
SU(4)$ when $k$ is given by $c_i=1$. For getting a well defined global
solution, the function 
\begin{align}
g(\gamma)=e^{m\gamma k}
\end{align}
is expected to provide a good coordinate of the image, of the solution.
Since the target space of the map is compact, this requires that if we
extend the range of $\gamma$ to the whole $\mathbb{R}$, $g(\gamma)$ must
result to be a periodic function. Now, a simple calculation show that the
eigenvalues of $k$ are $\pm \mu_+, \pm \mu_-$, with 
\begin{align}
\mu_{\pm}= \frac i2 ( \sqrt 5 \pm 1).
\end{align}
This means that, for a suitable unitary constant matrix $U$, we have 
\begin{align}
g(\gamma)=U\mathrm{diag}(e^{m\gamma \mu_+}, e^{-m\gamma \mu_+}, e^{m\gamma
\mu_-}, e^{-m\gamma \mu_-}) U^\dagger.
\end{align}
In particular, its elements have periodicities $T_\pm$ with 
\begin{align}
T_{\pm}= \frac {2\pi}{m|\mu_\pm|}.
\end{align}
But since 
\begin{align}
\frac {T_+}{T_-} =\frac 12 (3+\sqrt 5)
\end{align}
is not rational, they have not a common period and the orbit never close, so
is not a periodic function but, rather, its orbit describes a curve which
densely covers a bi-torus in $SU(4)$. In particular, it is not possible to
use $g(\gamma)$ as a good factor to get a finite covering of a cycle,
despite it gives a solution of the equations of motion. It doesn't provide a
solution with a well defined topological number and must be discarded. One
has to tackle the problem of looking for acceptable matrices $k$, t hat are
matrices generating a well defined period.

\ 

Assuming we have solved the periodicity problem, there is a second subtlety
to be tackled: how to determine the right range of the coordinates in order
to correctly cover a cycle. First notice that $\pi_3(SU(N))=\mathbb{Z}$.
This suggests that homotopically we have just one representative for any
given topological (Baryonic) charge. Moreover, since $\pi_2(SU(N))=0$, we
have also $H_3(SU(N),\mathbb{Z})=\mathbb{Z}$, so we have also a unique
homological representative. Nevertheless, the solutions have not to be
identified under deformation, but at most under gauge equivalence. But since
the action is not gauge invariant, in our case all different representatives
in a given equivalence class must to be considered as different solutions.

We will distinguish three different classes of solutions. The first two
classes have canonical representatives: the ones of $SU(2)$-type, which
belong in every class, and the ones of $SO(3)$-type, which belong in even
classes only. They can be simply understood as follows. For any given $N$ we
can embed the representations of $su(2)$ into $su(N)$. Exponentiating, they
will give realisations of $SU(2)$ or $SO(3)$, depending on the specific
representation. This give rise to pure $SU(2)$-type or $SO(3)$-type
solutions. However, they can be continuously deformed, by varying the
corresponding $\underline{c}$ when allowed, giving rise to solutions that
are not embeddings, so we can consider them as \textit{true $SU(N)$ solutions%
}. But there exist a third class of solutions that cannot be obtained as
continuous deformations of embeddings. Their existence is due to the fact
that $SU(N)$ has a center isomorphic to $\mathbb{Z}_{N}$, which acts
continuously on $SU(N)$, see appendix \ref{app:su(n)}. In particular, if $%
\Gamma $ is a normal subgroup of the center, then one can construct the
group $SU(N)_{\Gamma }:=SU(N)/\Gamma $. The new class of solutions are
generated by cycles in $SU(N)$ that reduce to cycles of $SU(N)_{\Gamma }$
after the quotient. We will call them \textit{genuine $SU(N)$ solutions}. We
will consider them carefully in the explicit examples of $SU(3)$ and $SU(4)$%
, where everything is exactly computable, but now we shortly describe the $%
SU(2)$-type and $SO(3)$-type, where some details are a priory known, see
App. \ref{app:SO-SU}. \newline
An $SU(2)$-type cycle has the form 
\begin{equation*}
U(\phi ,\gamma ,\theta )=e^{\phi k}e^{h^{\prime }r}e^{m\gamma k},
\end{equation*}%
where $h^{\prime }$ is constant and the coordinate must run as follows. The
range of $r$ must be $T/4$, where $T$ is the period of $e^{h^{\prime }r}$.
The range of $\gamma $ must be $T_{k}$, the period of $e^{\gamma k}$ (with $%
m=1$!), and the range of $\phi $ must be $T_{k}/2$. 
Therefore, the convenient choice for the coordinates is 
\begin{equation*}
\varphi \in \lbrack 0,T_{k}/2],\quad r\in \lbrack 0,T/4],\quad \gamma \in
\lbrack 0,T_{k}]\ ,
\end{equation*}%
corresponding to the Baryon number 
\begin{equation*}
B=mB_{0},
\end{equation*}%
where $B_{0}$ is the fundamental charge of the given Skyrmion.\newline
For $SO(3)$-type cycles the interval for $\phi $ must cover an integer
period, so that the ranges must be 
\begin{equation*}
\varphi \in \lbrack 0,T_{k}],\quad r\in \lbrack 0,T/2],\quad \gamma \in
\lbrack 0,T_{k}]\ ,
\end{equation*}%
and the corresponding Baryon number is 
\begin{equation*}
B=2mB_{0}.
\end{equation*}The $SO(3)$-type can be defined as \textquotedblleft di-Baryon
class\textquotedblright\ after the seminal works \cite{Balachandran:1982ty} \cite{Balachandran:1983dj, *Balachandran:1985fb}.
These results were extended, keeping spherical symmetry, to the $SU(N)$ case
in \cite{Kopeliovich:1995mqa} \cite{Din:1980jg} \cite{Ioannidou:1999mk} \cite{Brihaye:2003qs} leading to numerical
non-embedded configurations in the $SU(N)$ Skyrme model. In the present
paper we will generalize those findings to the non-spherical case at finite
Baryon density achieving, moreover, analytic solutions. 


\subsection{$SU(3)$ Skyrmions}

Let us apply the above formalism to the case $N=3$. In this case we will see
that the problem of periodicity will not arise. 

\subsubsection{{$SO(3)$}-type solutions and genuine $SU(3)$ solutions}

The matrix $k$ is 
\begin{align}
k_{\underline c}= 
\begin{pmatrix}
0 & c_1 & 0 \\ 
-c_1^* & 0 & c_2 \\ 
0 & -c_2^* & 0%
\end{pmatrix}%
\ .
\end{align}
We put $\|\underline c\|^2=|c_1|^2+|c_2|^2$. Then, the characteristic
equation is 
\begin{align}
(\lambda^2+\|\underline c\|^2)\lambda=0.
\end{align}
The eigenvalues are $\lambda_0=0$ and $\lambda_\pm=\pm i\|\underline c \|$,
so that 
\begin{align}
g(\gamma) =e^{\gamma k_{\underline c}}
\end{align}
is periodic with period 
\begin{align}
T_k=\frac {2\pi}{\|\underline c\|}.
\end{align}
Now, we pass to determine the Cartan element. We have two possibilities
according to the two possible choices for $\underline \varepsilon$: 
\begin{align}
\underline \varepsilon_\pm = 
\begin{pmatrix}
1 \\ 
\pm 1%
\end{pmatrix}%
.
\end{align}
The inverse Cartan matrix for $SU(3)$ is 
\begin{align}
C_{A_2}^{-1}=\frac 13 
\begin{pmatrix}
2 & 1 \\ 
1 & 2%
\end{pmatrix}%
\ .
\end{align}
Thus, we find the two solutions 
\begin{align}
h_+(r)&= a r (J_1+J_2), \\
h_-(r)&= \frac a3 r (J_1-J_2).
\end{align}
The period of $\exp h_+(r)$ is 
\begin{align}
T_{h_+}=\frac {2\pi}a,
\end{align}
while the one of $\exp h_+(r)$ is 
\begin{align}
T_{h_-}=\frac {6\pi}a.
\end{align}
Now, we have to discuss the global properties in order to fix the ranges of
the parameters. To this end, accordingly to appendix \ref{app:SO-SU}, we
have to look for the intersection between the orbit of $h_\pm$ and the one
of $\gamma k_{\underline c}$. Using the characteristic equation we
immediately see that, see App. \ref{app:su3case}, 
\begin{align}
e^{\gamma k_{\underline c}} =I+\frac {\sin (\|\underline c\|\gamma)}{%
\|\underline c\|} k_{\underline c} +2\frac {\sin^2 (\frac {\|\underline
c\|}2 \gamma)}{\|\underline c\|^2} k^2_{\underline c}\ ,
\end{align}
so that the intersection we are looking for is just the unit matrix $I$.
However, we can notice that the orbit of $\exp h_-(r)$ contains the elements 
\begin{align}
\exp h_-(2\pi/a)=e^{\frac 23 \pi i} I, \qquad \exp h_-(4\pi/a)=e^{\frac 43
\pi i} I,
\end{align}
which are both in the center of $SU(3)$. Following App. \ref{app:SO-SU}, we
conclude that $h_-(r)$ defines a genuine $SU(3)$ solution, while only $h_+(r)
$ is of $SO(3)$-type.

In order to correctly define the solution we thus have to identify the
ranges as follows. First, it is
convenient to normalise $\underline{c}$ so that $\Vert \underline{c}\Vert =1$%
. This is just equivalent to re-scale the coordinates $\Phi $ and $\gamma $.
Therefore, we fix once for all the metric to be 
\begin{equation}
ds^{2}=-dt^{2}+L_{r}^{2}dr^{2}+L_{\gamma }^{2}d\gamma ^{2}+L_{\varphi
}^{2}d\varphi ^{2}\ ,  \label{Ansatz-metric-1}
\end{equation}%
with range of coordinates 
\begin{equation}
0\leq r\leq 2\pi \ ,\ 0\leq \gamma \leq 2\pi \ ,\ 0\leq \varphi \leq 2\pi \ ,
\end{equation}%
with the caveat that, despite the chosen values, \textit{none of the
coordinates is periodic!} Our Skyrmions are living in a rectangular box.%
\newline
\textbf{\boldmath{$SO(3)$} type solutions.} We already know that $r$ must
cover $1/2$ of the period of the Cartan torus, which implies that we have to
fix $a=\frac{1}{2}$. Hence, our solutions are 
\begin{align}
U_{\pm }^{\underline{c}}[t,r,\varphi ,\gamma ]& =e^{\Phi k_{\underline{c}%
}}e^{ar(J_{1}\pm J_{2})}e^{m\gamma k_{\underline{c}}},  \label{Ansatz-FP-0}
\\
\Phi & =\frac{t}{L_{\varphi }}-\varphi \ , \\
\varphi ,\gamma ,r& \in \lbrack 0,2\pi ], \\
B& =2m.
\end{align}%
More explicitly 
\begin{widetext}
\begin{equation*}
U_{+}^{\underline{c}}[t,r,\varphi ,\gamma ]=\left( I+\sin (\Phi )k _{%
\underline{c}}+2\sin ^{2}\frac{\Phi }{2}k _{\underline{c}}^{2}\right) 
\mathrm{diag}(e^{i\frac{r}{2}},1,e^{-i\frac{r}{2}})\left( I+\sin (m\gamma
)k _{\underline{c}}+2\sin ^{2}\frac{m\gamma }{2}k _{\underline{c}%
}^{2}\right) \ .
\end{equation*} 
\end{widetext}
We can now compute the energy end the factor $g_{+}=\frac{E}{2m}$. We omit
details here, since are particular cases of the general one for generic $N$
considered below. We get 
\begin{widetext}
\begin{equation*}
g_{+}(m,\underline{c})=L_{r}L_{\gamma }L_{\varphi }\frac{K\pi ^{3}}{m}\left[ 
\frac{4}{L_{\phi }^{2}}+\frac{1}{8L_{r}^{2}}+\frac{\lambda }{16L_{\phi
}^{2}L_{r}^{2}}+\frac{m^{2}}{L_{\gamma }^{2}}\left( 2+\frac{\lambda }{%
32L_{r}^{2}}+\frac{2\lambda }{L_{\phi }^{2}}(1-3|c_{1}|^{2}|c_{2}|^{2})%
\right) \right] ,
\end{equation*}%
where $|c_{1}|^{2}+|c_{2}|^{2}=1$. In particular, for each value of $m$, $%
|g_{+}(m,\underline{c})|$ takes its minimum at $|c_{1}|=|c_{2}|$, which is 
\begin{equation*}
g_{+}(m,\underline{c})=L_{r}L_{\gamma }L_{\varphi }\frac{K\pi ^{3}}{m}\left[ 
\frac{4}{L_{\phi }^{2}}+\frac{1}{8L_{r}^{2}}+\frac{\lambda }{16L_{\phi
}^{2}L_{r}^{2}}+\frac{m^{2}}{L_{\gamma }^{2}}\left( 2+\frac{\lambda }{%
32L_{r}^{2}}+\frac{\lambda }{2L_{\phi }^{2}}\right) \right] .
\end{equation*} 
\end{widetext}
Some comments are in order now. The reason for which the solution we have
just described are of $SO(3)$-type can be understood remembering that we are
working with $3\times 3$ matrices, which carry naturally a representation of
spin 1 of the rotation group. Indeed, the minimum energy case just
discussed, in which $|c_{j}|=1/\sqrt{2}$, corresponds exactly to the case
when the matrices $h_{+}$ and $k _{\underline{c}}$ are the generators
of the group $SO(3)$ in the representation of spin 1. The other solution,
for every fixed $m$, are continuous deformations obtained varying $%
\underline{c}$, which does not changes their topological nature, and in
particular the Baryon number, but it changes the energy. One can easily
check that for generic $\underline{c}$ the matrices $h_{+}$ and $k _{%
\underline{c}}$ do not generate a subgroup. One may wander if this is
related to the fact that their energy is not a minimum.\newline
The present remark suggests how to look for $SU(2)$-type solutions.\newline
\textbf{Genuine \boldmath$SU(3)$ type solutions.} Since this case does not
enter in the canonical classes, we have to manage separately the
determination of the correct ranges (then normalised to $2\pi $ as specified
above). As for $r$, we will prove in proposition \ref{prop:3} that in order
to have $r$ ranging in $[0,2\pi ]$, one has always to fix $a=\frac{1}{2}$.
For what concerns the other coordinates, let us notice that $h_{-}(r)$ does
not commute with $k_{\underline{c}}$ but it commutes with $k_{\underline{c}%
}^{2}$. Therefore, for $g(\gamma )=e^{\gamma k_{\underline{c}}}$, we see
that $g(T_{k}/2)$ commutes with $e^{h_{-}(r)}$. This means that we can write 
\begin{widetext}
\begin{equation*}
g(\Phi +T_{k}/2)e^{h_{-}(r)}g(\gamma )=g(\Phi
)g(T_{k}/2)e^{h_{-}(r)}g(\gamma )=g(\Phi )e^{h_{-}(r)}g(T_{k}/2)g(\gamma
)=g(\Phi )e^{h_{-}(r)}g(\gamma +T_{k}/2).
\end{equation*} 
\end{widetext}
If we assume that $U_{-}^{\underline{c}}[\Phi ,r,\gamma ]=g(\Phi
)e^{h_{-}(r)}g(\gamma )$ is covering a cycle, the relation $U_{-}^{%
\underline{c}}[\Phi +T_{k}/2,r,\gamma ]=U_{-}^{\underline{c}}[\Phi ,r,\gamma
+T_{k}/2]$ shows that we are covering it twice unless we restrict one of the
two ranges, of $\Phi $ and of $\gamma $, to one half the period of $g$. We
choose to reduce $\Phi $, so we replace $\Phi $ with $\Phi /2$. So, our
solution is 
\begin{align}
U_{-}^{\underline{c}}[t,r,\varphi ,\gamma ]& =e^{\frac{\Phi }{2}k_{%
\underline{c}}}e^{ar(J_{1}\pm J_{2})}e^{m\gamma k_{\underline{c}}}, \\
\Phi & =\frac{t}{L_{\varphi }}-\varphi \ , \\
\varphi ,\gamma ,r& \in \lbrack 0,2\pi ], \\
B& =m,
\end{align}%
where $B$ has been computed as in App. \ref{app:integrals}. Explicitly, 
\begin{widetext}
\begin{equation*}
U_{-}^{\underline{c}}[t,r,\varphi ,\gamma ]=\left( I+\sin \frac{\Phi }{2}%
k _{\underline{c}}+2\sin ^{2}\frac{\Phi }{4}k _{\underline{c}%
}^{2}\right) \mathrm{diag}(e^{i\frac{r}{6}},e^{-i\frac{r}{3}},e^{i\frac{r}{6}%
})\left( I+\sin (m\gamma )k _{\underline{c}}+2\sin ^{2}\frac{m\gamma }{2%
}k _{\underline{c}}^{2}\right) \ .
\end{equation*} 
For $U_{-}$, $g$ results to be independent from $\underline{c}$: 
\begin{equation*}
g_{-}(m,\underline{c})=L_{r}L_{\gamma }L_{\varphi }\frac{K\pi ^{3}}{2m}\left[
\frac{4}{L_{\phi }^{2}}+\frac{2}{3L_{r}^{2}}+\frac{\lambda }{4L_{\phi
}^{2}L_{r}^{2}}+8\frac{m^{2}}{L_{\gamma }^{2}}\left( 1+\frac{\lambda }{%
16L_{r}^{2}}+\frac{\lambda }{4L_{\phi }^{2}}\right) \right] .
\end{equation*}
\end{widetext}


\subsubsection{{$SU(2)$}-type solutions}

It is now clear that in order to find $SU(2)$-type solutions we have to
consider deformations of spin $\frac 12$ representations. This can be
obtained by ``reducing matrices'' down to $2\times 2$, and can be achieved
by choosing 
\begin{align}
k\equiv k_c = 
\begin{pmatrix}
0 & c & 0 \\ 
-c^* & 0 & 0 \\ 
0 & 0 & 0%
\end{pmatrix}%
\ ,
\end{align}
where $c$ is a phase. This is not the same thing as simply putting $c_2=0$
in $k_{\underline c}$ in the sense that we have to choose $k=k_c$ 
\emph{before} solving equation (\ref{vincolo}). Indeed, in (\ref{vincolo})
we assumed that all simple roots enter the game. This fixes the set of
possible choices of $h(r)$, and if in the above solutions we deform smoothly 
$\underline c$ to $(c,0)$, we cannot move away from our topological classes.
This is confirmed by the fact that if we put $c_2=0$, the matrix $k$ reduces
to a $2\times 2$ matrix, but the $k_\pm$ do not allow to reduce the
representation down to $\mathbb{C}^2$. We have to make a discontinuous
deformation. The point is that for $c_2=0$ the root $\alpha_2$ does not
enter into equation (\ref{vincolo}) that, indeed, for $N=3$ becomes just an
identity. This means that when $c_2=0$ we can choose for $h(r)$ any
combination 
\begin{align}
h(r)=a r J_1+ b r J_2,
\end{align}
with the only caveat that $e^{h(r)}$ must be periodic, so that $a$ and $b$
must be in rational ratio. We can set 
\begin{align}
h_q(r)= a r J_1+ a q r J_2, \qquad q\in \mathbb{Q}\ .
\end{align}
For $q=\pm 1$ we fall down to the previous $SO(3)$-solutions, while, of
course, $q=0$ provides a canonical embedding of $SU(2)$ into $SU(3)$, thus
identifying an $SU(2)$-type solution. It is worth to mention that, since $%
q\in \mathbb{Q}$ it cannot be deformed continuously among the three values,
compatibly with the fact that the case $q=0$ is not in the same topological
class of the other ones and, indeed, we may wander what happens for all the
other values of $q$, since they would generate new genuine $SU(3)$
solutions. However, it results that they have vanishing Baryon number, so
that we will not consider them further.\newline
Thus we get the solutions 
\begin{align}
U^c_0 [t,r,\varphi ,\gamma ]& =e^{\frac 12 \Phi k_{c}}e^{\frac r4
J_1}e^{n\gamma k_{c}},  \label{Ansatz-FP-1} \\
\Phi &=\frac{t}{L_{\varphi }}-\varphi \ , \\
\varphi,\gamma, r & \in [0,2\pi], \\
B&= n.
\end{align}
The $1/2$ factor in the first exponent has been added to ensure that when $%
\Phi$ varies in $[0,2\pi]$ it covers half of the period. Finally, we can
compute the factor $g$: 
\begin{widetext}
\begin{align}
g_0(n,c)=\frac {K\pi^3}n \left[ \frac 2{L_\phi^2} +\frac 1{4L_r^2} +\frac {%
\lambda}{8L_\phi^2 L_r^2} +\frac {n^2}{L_\gamma^2} \left( 4+\frac {\lambda}{%
4 L_r^2} +\frac {\lambda}{L_\phi^2} \right) \right].
\end{align}
\end{widetext}


\subsection{$SU(N)$ Skyrmions}

We will now consider the class of Skyrmions associated to the matrix $k$
given by 
\begin{align}
k_{\underline c}=\sum_{j=1}^{N-1} (c_j E_{j,j+1}-c^*_j E_{j+1,j}).
\end{align}
We will limit ourselves to the case when all the $c_j$ are different from
zero. Here, we have to face the problem of establish for which choices of $%
c_j$ the matrix $e^{\gamma k_{\underline c}}$ is periodic. By now, let
us assume to have solved it and write down the corresponding solution: 
\begin{align}
U^{\underline c}_\varepsilon [t,r,\varphi ,\gamma ]& =e^{\sigma \Phi
k_{\underline c}}e^{a v_\varepsilon r}e^{m\gamma k_{\underline c}},
\label{Ansatz-FP-2} \\
\Phi &=\frac{t}{L_{\varphi }}-\varphi \ , \\
\varphi,\gamma, r & \in [0,2\pi], \\
B&= \sigma 2m \|\underline c\|^2,
\end{align}
where $\sigma=1$ for $SO(3)$-type solutions and $\sigma=1/2$ for $SU(2)$%
-type solutions, and $v_\varepsilon$ is given by (\ref{vepsilon}). For
general genuine solutions the value of $\sigma$ must be computed case by
case. For any admissible $\underline c$ these are $2^{N-2}$ solutions (since 
$\varepsilon_1=1$). In principle $a$ could depend on $N$ and $\underline
\varepsilon$. However, we will now show that this is not the case and the
value of $a$ is completely fixed by requiring that the normalized interval $%
[0,2\pi]$ for $r$ must have the extension necessary to cover once a cycle:

\begin{prop}
\label{prop:3}  If $\exp (a v_{\underline \varepsilon} r)$ is such that $%
r\in [0,2\pi]$, and the corresponding map $U^{\underline c}_\varepsilon
[t,r,\varphi ,\gamma ]$ has not to cover a cycle more than once, then
necessarily $a=\frac 12$.
\end{prop}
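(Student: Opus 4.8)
The plan is to analyze directly the eigenvalue structure of the matrix $a v_\varepsilon$ appearing in the Cartan factor $e^{a v_\varepsilon r}$. From the preceding construction $v_\varepsilon = \sum_{j,k} {C^{-1}_{A_{N-1}}}_{j,k} \varepsilon_k J_j$ is a fixed element of the Cartan subalgebra $H$, hence it is a purely imaginary diagonal traceless matrix $\mathrm{diag}(i\mu_1,\dots,i\mu_N)$ with $\sum_l \mu_l = 0$; the $\mu_l$ are explicit rational combinations of the entries of the inverse Cartan matrix and the signs $\varepsilon_k$. I would first record these eigenvalues $\mu_l$ explicitly (or at least their key arithmetic features): using that ${C^{-1}_{A_{N-1}}}_{j,k} = \min(j,k) - jk/N$, one finds that the differences $\mu_l - \mu_{l+1} = \alpha_l(v_\varepsilon) = \varepsilon_l \in \{+1,-1\}$, so consecutive eigenvalues of $a v_\varepsilon$ differ by exactly $\pm a i$. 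In particular the $\mu_l$ are rational and the set $\{\mu_l\}$ has denominators dividing $N$ (this is where tracelessness forces the $1/N$ shift).

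Next I would translate the geometric requirement into an arithmetic one. The curve $r \mapsto e^{a v_\varepsilon r} = \mathrm{diag}(e^{i a \mu_1 r},\dots,e^{i a \mu_N r})$ lies on the Cartan torus, and "covering a cycle once" as $r$ runs over an interval of length $T$ means: $e^{a v_\varepsilon T}$ belongs to the relevant closing set (the identity for $SO(3)$- and $SU(2)$-type, a center element for genuine solutions, consistent with the analysis of App.~\ref{app:SO-SU}), \emph{and} no proper sub-interval already closes the cycle. Concretely, writing the period of $e^{a v_\varepsilon r}$ as $T_{\mathrm{Cartan}} = 2\pi/(a \cdot d)$ where $d = \gcd$ of the relevant quantities built from the $\mu_l$, the fraction-of-period that $r$ must cover is dictated by whether we are in the $SO(3)$-type case ($T_{\mathrm{Cartan}}/2$, as stated in the $SU(2)$-type/$SO(3)$-type discussion preceding the proposition), and the genuine cases are handled analogously case by case. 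Imposing that this required extension equals the normalized value $2\pi$ and combining with the normalization $\|\underline c\| = 1$ already fixed above (which sets the scale of the other two coordinates), one solves for $a$ and gets a single admissible value.

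The main obstacle — and the step I would spend the most care on — is showing that the answer is \emph{uniformly} $a = \tfrac12$ independently of $N$ and of the choice of signs $\underline\varepsilon$. A priori the eigenvalues $\mu_l$ depend on $\underline\varepsilon$ and their $\gcd$-type invariant $d$ could vary; the claim is that the combination of (i) the spacing relation $\mu_l - \mu_{l+1} = \pm 1$, which makes the \emph{smallest} positive "frequency gap" always equal to $1$, and (ii) the factor-of-$2$ coming from the $SO(3)$-type/$SU(2)$-type covering rule (the $T/4$ versus $T_k/2$ bookkeeping already used in the $SU(3)$ examples) conspires to cancel all $N$- and $\underline\varepsilon$-dependence. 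I would prove this by establishing the key lemma that the period of $e^{a v_\varepsilon r}$ relevant for covering the cycle is always $\tfrac{4\pi}{a}$ (half of it, $\tfrac{2\pi}{a}$, being the piece $r$ must traverse), which follows from the spacing relation plus the observation that $v_\varepsilon$ has at least one eigenvalue-difference equal to $1$ and all eigenvalue-differences integral; setting $\tfrac{2\pi}{a} = 2\pi$ then gives $a = \tfrac12$. The remaining genuine cases, where $e^{a v_\varepsilon r}$ may hit a nontrivial center element before returning to the identity, are checked to give the same constraint because the center elements correspond to $r$-values that are rational multiples of $\tfrac{4\pi}{a}$ with the same normalization forcing $a = \tfrac12$; the $SU(3)$ computation $h_-(2\pi/a) = e^{2\pi i/3} I$ already exhibits this pattern.
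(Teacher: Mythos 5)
There is a genuine gap. Your argument hinges on the ``key lemma'' that the period of $e^{a v_{\underline\varepsilon} r}$ relevant for covering the cycle is always $\tfrac{4\pi}{a}$, with $r$ required to traverse half of it, $\tfrac{2\pi}{a}$. Neither half of this claim survives inspection. First, the actual period of the Cartan factor does depend on $N$ and on $\underline\varepsilon$: using your own spacing relation $\mu_l-\mu_{l+1}=\varepsilon_l$ one finds, e.g., period $\tfrac{2\pi}{a}$ for $\underline\varepsilon=(1,1)$ in $SU(3)$ and for $\underline\varepsilon_b=(1,1,-1)$ in $SU(4)$, but $\tfrac{6\pi}{a}$ for $\underline\varepsilon=(1,-1)$ in $SU(3)$ and $\tfrac{4\pi}{a}$ for $\underline\varepsilon_a,\underline\varepsilon_c$ in $SU(4)$ (the $\tfrac1N$ shifts from tracelessness are exactly what spoil uniformity). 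Second, the fraction of that period which $r$ must cover is not uniformly one half: it is $T/4$ for $SU(2)$-type, $T/2$ for $SO(3)$-type, and something else again for the genuine solutions (for the $SU(3)$ solution $h_-$ with $a=\tfrac12$ the range $[0,2\pi]$ is one sixth of the period $12\pi$). So the cancellation of the $N$- and $\underline\varepsilon$-dependence, which you correctly identify as the main obstacle, is precisely what your sketch does not establish. Finally, the closing arithmetic is inconsistent even on its own terms: setting $\tfrac{2\pi}{a}=2\pi$ yields $a=1$, not $a=\tfrac12$; the correct normalization is that the fundamental $r$-interval has length $\tfrac{\pi}{a}$, and $\tfrac{\pi}{a}=2\pi$ gives $a=\tfrac12$.

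The paper's proof avoids all of this case-by-case bookkeeping by a single measure-theoretic observation: the invariant (Haar) measure restricted to the hypothetical cycle is, up to normalization, the Baryon density $\rho_B$, which is computed explicitly in the appendix on the Baryonic number and depends on $r$ only through $\sin(ar)$. The good range for $r$ to cover the cycle exactly once is any interval between two consecutive zeros of this measure, i.e.\ $[0,\pi/a]$, \emph{independently} of $N$, of $\underline\varepsilon$, and of whether the solution is of $SU(2)$-type, $SO(3)$-type or genuine; equating $\pi/a$ with the normalized range $2\pi$ gives $a=\tfrac12$. If you want to salvage your eigenvalue route, you would have to prove, for each class of solutions, the precise covering fraction of the Cartan period and show the product ``period $\times$ fraction'' always equals $\tfrac{2\pi}{a}$ in length $\pi/a$ of $r$; that is exactly the uniformity statement the Haar-measure argument delivers for free.
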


\begin{proof}
The proof is simply based on the same strategy used for example in \cite%
{Bertini:2005rc}: one first constructs the invariant measure restricted to the
hypothetical cycle; the resulting measure will depend explicitly on some of
the coordinates and will vanish at specific value of that coordinate. The
good range for such a coordinate to cover just once a cycle is any range
between two vanishing points. The nice fact is now that the Haar measure
restricted to a cycle, a part from an eventual normalization constant, is
just $\rho_B$, which is computed in App.\ref{app:integrals}. Since it
results to depend on $r$ via $\sin (ar)$, we see that a suitable good
interval for $r$ is $[0,\pi/a]$. Since we want it to be $[0,2\pi]$, it must
be $a=\frac 12$.
\end{proof}

Therefore, we definitely have 
\begin{align}
a=\frac 12
\end{align}
in any case. Now, we can compute the $g$ factor for our solutions. To this
end, first note that 
\begin{widetext}
\begin{align}
T_{00}&=-\frac K2 \mathrm{Tr} \left( \frac 12 (R^\gamma
R_\gamma+R^rR_r)+\frac \lambda{16} F_{\rho\sigma} F^{\rho\sigma} +R_tR_t
+\frac \lambda4 g^{\alpha \beta} F_{t\alpha} F_{t\beta} \right)\cr & =
-\frac K4 \mathrm{Tr} \left( \frac {R_\gamma^2}{L_\gamma^2} +\frac {R_r^2}{%
L_r^2} \right)-\frac {K\lambda}{16} \mathrm{Tr} (F_{\gamma r})^2-\frac K2 
\mathrm{Tr} R_t^2- \frac {K\lambda}{8L_\varphi^2} \mathrm{Tr} \left( \frac {%
F_{\Phi r}^2}{L_r^2} +\frac {F_{\Phi\gamma}^2}{L_\gamma^2} \right).
\end{align} 
according to Appendix \ref{app:prop1}, and we used 
\begin{align}
R_{t}=\frac 1{L_\varphi} R_\Phi, \qquad F_{t\alpha}=\frac 1{L_\varphi}
F_{\Phi \alpha}.
\end{align}
According to (\ref{trakquadro}), (\ref{trhkhk}), (\ref{trxkxk}), with $%
a=\frac 12$, we have 
 \begin{align}
\mathrm{Tr} R_t^2=& \frac {\sigma^2}{L_\varphi^2} \mathrm{Tr} k_{\underline
c}^2=-\frac {2}{L_{\varphi}^2} \|\underline c\|^2 \sigma^2, \\
\mathrm{Tr} R_\gamma^2=& m^2 \mathrm{Tr} k_{\underline c}^2=-2m^2
\|\underline c \|^2, \\
\mathrm{Tr} R_r^2=& \frac 14 \mathrm{Tr} v_{\underline \varepsilon}^2=-\frac
14\sum_{j,k} {C^{-1}_{A_{N-1}}}_{j,k} \varepsilon_j \varepsilon_k\equiv
-\frac 14 \|v_{\underline \varepsilon}\|^2, \\
\mathrm{Tr} (F_{\gamma r})^2 =&m^2 \mathrm{Tr}([h^{\prime },k_{\underline
c}])^2= -\frac {m^2}2 \|\underline c\|^2, \\
\mathrm{Tr} (F_{\phi r})^2 =&\sigma^2 \mathrm{Tr}([x,h^{\prime}]^2)= -\frac {%
\sigma^2}2 \|\underline c\|^2, \\
\mathrm{Tr} (F_{\phi \gamma})^2 =\sigma^2 m^2\mathrm{Tr}([x,k_{\underline
c}])^2=& -8m^2\sigma^2 \sin^2 \frac r2 \left(\sum_{j=1}^{N-1}
|c_j|^4+\sum_{j=1}^{N-2} |c_j|^2 |c_{j+1}|^2 \frac 12(1-3\varepsilon_j
\varepsilon_{j+1}) \right).
\end{align}
Replacing in the expression for $T_{00}$ and using that the energy is 
\begin{align}
E=\int_0^{2\pi}dr\ \int_0^{2\pi}d\varphi\ \int_0^{2\pi} d\gamma\ L_r
L_\varphi L_\gamma T_{00}(r),
\end{align}
we get 
\begin{align}
E=& L_rL_\gamma L_\phi \|\underline c\|^2 \frac K2\pi^3 \left[ 16 \frac {%
\sigma^2}{L^2_\varphi} + \frac {\| v_{\underline\varepsilon}\|^2}{%
\|\underline c\|^2L_r^2}+\frac {\sigma^2 \lambda }{L^2_\varphi L^2_r}
\right. \cr & \left. + 8\frac {m^2 }{L_\gamma^2} \left( 1 + \frac {\lambda }{%
16 L_r^2} + \frac {\lambda \sigma^2}{L^2_{\varphi} \|\underline c\|^2}
\left(\sum_{j=1}^{N-1} |c_j|^4+\sum_{j=1}^{N-2} |c_j|^2 |c_{j+1}|^2
\left(\frac 12-\frac 32\varepsilon_j \varepsilon_{j+1}\right) \right)
\right) \right].
\end{align}
\end{widetext}
In a similar way one can compute the baryon number. This is done in appendix %
\ref{app:integrals}, with the result 
\begin{align}
B=2m\sigma \|\underline c \|^2.
\end{align}
From these results we immediately get the $g$-factor: 
\begin{widetext}
 \begin{align}  \label{formulone}
g(N,m,\underline c,\varepsilon)=& L_rL_\gamma L_\phi \frac {K\pi^3}{4\sigma m%
} \left[ 16 \frac {\sigma^2}{L^2_\varphi} + \frac {\|v_{\underline
\varepsilon}\|^2}{\|\underline c\|^2L_r^2}+\frac {\sigma^2 \lambda }{%
L^2_\varphi L^2_r} \right. \cr & \left. + 8\frac {m^2 }{L_\gamma^2} \left( 1
+ \frac {\lambda }{16 L_r^2} + \frac {\lambda \sigma^2}{L^2_{\varphi}
\|\underline c\|^2} \left(\sum_{j=1}^{N-1} |c_j|^4+\sum_{j=1}^{N-2} |c_j|^2
|c_{j+1}|^2 \left(\frac 12-\frac 32\varepsilon_j \varepsilon_{j+1}\right)
\right) \right) \right].
\end{align}
\end{widetext}
Up to now, we have assumed $\underline c$ to be normalised so that $%
g(\gamma)=e^{\gamma k_{\underline c}}$ has period $2\pi$. However, we
will not have really found a solution until we will be able to specify for
which $\underline c$ the function $g$ is periodic. Therefore, we cannot
further postpone to tackle this problem.\newline
However, before considering it in general, we want now concentrate on a very
particular case, when $\varepsilon_j=1$ for all $j$. In this case
\begin{widetext}
 \begin{align}
g(N,m,\underline c)=& L_rL_\gamma L_\phi\frac {K\pi^3}{4\sigma m} \left[ 16 
\frac {\sigma^2 }{L^2_\varphi} + \frac {\| v\|^2}{\|\underline c\|^2L_r^2}+%
\frac {\sigma^2 \lambda}{L^2_\varphi L^2_r} \right. \cr & \left. + 8\frac {%
m^2 }{L_\gamma^2} \left( 1 + \frac {\lambda }{16 L_r^2} + \frac {\lambda
\sigma^2}{L^2_{\varphi}\|\underline c\|^2} \left(\sum_{j=1}^{N-1}
|c_j|^4-\sum_{j=1}^{N-2} |c_j|^2 |c_{j+1}|^2 \right) \right) \right]\ .
\end{align}
\end{widetext} 
It is clear that, among all possible choices for $\varepsilon_j$, this
minimises the energy, apart from possible effects due to $\|v\|$. We want
also minimise with respect to the $c_j$, assuming the normalisation of $%
\|\underline c\|$ fixed. Introducing a Lagrange multiplicator $\Lambda$, we
have to extremize the function 
\begin{align}  \label{fc}
f(\underline c) =\sum_{j=1}^{N-1} |c_j|^4-\sum_{j=1}^{N-2} |c_j|^2
|c_{j+1}|^2 -\Lambda \|\underline c\|^2.
\end{align}
Deriving with respect to $|c_j|^2$ we get the system 
\begin{align}
C_{A_{N-1}} \underline {|c|^2} =\Lambda \underline 1,
\end{align}
$\underline 1$ being the vector in $\mathbb{R}^{N-1}$ having all elements
equal to 1. This gives the solution 
\begin{align}
|c_j|^2=\frac \Lambda2 j(N-j).  \label{cj}
\end{align}
Interestingly this also solves automatically the periodicity problem. It is
easy to see (App. \ref{periodicity}) that 
\begin{align}  \label{first-solution}
c_j =\zeta_j \sqrt {\frac \Lambda2 j(N-j)}, \quad \Lambda= 
\begin{cases}
\frac 12 & \mbox{ for odd } N \\ 
2 & \mbox{ for even } N%
\end{cases}%
\end{align}
where $\zeta_j$ are arbitrary phases, give a matrix $e^{\gamma
k_{\underline c}}$ that is periodic in $\gamma$ with period $2\pi$. For 
$v$ we find 
\begin{align}  \label{v}
v= \sum_{j,k} {C^{-1}_{A_{N-1}}}_{j,k} J_j.
\end{align}
Moreover, we have

\begin{prop}
If $c_j$ are given by (\ref{first-solution}), and $v$ is as in (\ref{v}),
then  
\begin{align}
\|\underline c\|^2& =\frac {\Lambda}{12} N(N^2-1), \\
\|v\|^2& =\frac {1}{12} N(N^2-1),
\end{align}
and 
\begin{eqnarray}
\sum_{j=1}^{N-1} |c_j|^4-\sum_{j=1}^{N-2} |c_j|^2 |c_{j+1}|^2&=&\frac {%
\Lambda^2}{24} N(N^2-1)\nonumber \\&=&\frac {\Lambda}2 \|\underline c\|^2.
\end{eqnarray}\end{prop}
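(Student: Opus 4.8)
The plan is to reduce all three claims to facts already in hand: the explicit values $|c_j|^2=\tfrac{\Lambda}{2}j(N-j)$ of (\ref{cj}) together with the linear relation $C_{A_{N-1}}\,\underline{|c|^2}=\Lambda\,\underline 1$ from which they were derived, and the identity $\|v\|^2=\sum_{j,k}{C^{-1}_{A_{N-1}}}_{j,k}$, which is the $\varepsilon_j\equiv 1$ specialisation of $\mathrm{Tr}\,v_{\underline\varepsilon}^2=-\sum_{j,k}{C^{-1}_{A_{N-1}}}_{j,k}\varepsilon_j\varepsilon_k$ used above to compute $\mathrm{Tr}\,R_r^2$. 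The one extra elementary ingredient is the tridiagonal form ${C_{A_{N-1}}}_{j,k}=2\delta_{jk}-\delta_{j,k+1}-\delta_{j,k-1}$, which gives, for any $x\in\mathbb{R}^{N-1}$ with the boundary convention $x_0=x_N=0$, the quadratic identity
\begin{align}
\tfrac12\sum_{j,k}{C_{A_{N-1}}}_{j,k}\,x_j x_k=\sum_{j=1}^{N-1}x_j^2-\sum_{j=1}^{N-2}x_j x_{j+1}. \label{qform}
\end{align}

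First I would record a relation between the two norms. Writing $v=\sum_j w_j J_j$, the property $\alpha_j(v)=1$ together with $\alpha_j(J_k)={C_{A_{N-1}}}_{j,k}$ gives $C_{A_{N-1}}\underline w=\underline 1$, hence $w_j=\sum_k{C^{-1}_{A_{N-1}}}_{j,k}$ and $\|v\|^2=\sum_{j,k}{C^{-1}_{A_{N-1}}}_{j,k}=\sum_j w_j$. Comparing with (\ref{cj}) one has $|c_j|^2=\Lambda w_j$, so summing over $j$ yields the clean relation $\|\underline c\|^2=\Lambda\|v\|^2$. It is then enough to evaluate a single closed sum, $\|\underline c\|^2=\tfrac{\Lambda}{2}\sum_{j=1}^{N-1}j(N-j)$, using $\sum_{j=1}^{N-1}j(N-j)=\tfrac{N(N^2-1)}{6}$; this gives $\|\underline c\|^2=\tfrac{\Lambda}{12}N(N^2-1)$ and, dividing by $\Lambda$, $\|v\|^2=\tfrac1{12}N(N^2-1)$. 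As an independent check one may instead use the explicit diagonal form $v=i\,\mathrm{diag}\!\big(\tfrac{N-1}{2},\tfrac{N-3}{2},\dots,-\tfrac{N-1}{2}\big)$, for which $\|v\|^2=-\mathrm{Tr}\,v^2=\sum_{k=1}^{N}\big(k-\tfrac{N+1}{2}\big)^2=\tfrac1{12}N(N^2-1)$.

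For the last identity I would apply (\ref{qform}) with $x_j=|c_j|^2$: the left-hand side is then exactly $\sum_{j=1}^{N-1}|c_j|^4-\sum_{j=1}^{N-2}|c_j|^2|c_{j+1}|^2$, while the right-hand side equals $\tfrac12\langle\underline{|c|^2},C_{A_{N-1}}\,\underline{|c|^2}\rangle=\tfrac12\langle\underline{|c|^2},\Lambda\underline 1\rangle=\tfrac{\Lambda}{2}\|\underline c\|^2$ by the extremality relation $C_{A_{N-1}}\underline{|c|^2}=\Lambda\underline 1$. Substituting the value of $\|\underline c\|^2$ just found gives $\tfrac{\Lambda^2}{24}N(N^2-1)$, which is both of the stated forms.

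There is no genuine obstacle here; the argument is pure bookkeeping. The only place demanding care is matching sign and normalisation conventions — in particular that $\|v_{\underline\varepsilon}\|^2$ is defined as $-\mathrm{Tr}\,v_{\underline\varepsilon}^2$, hence positive, and that it equals the sum of inverse-Cartan entries weighted by $\varepsilon_j\varepsilon_k$ — and keeping the summation ranges straight in (\ref{qform}), where the convention $x_0=x_N=0$ is precisely what cancels the would-be extra cross terms.
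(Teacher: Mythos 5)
Your proof is correct, and it takes a partly different route from the paper's. For $\|\underline c\|^2$ you do exactly what the paper does: sum $\tfrac{\Lambda}{2}\,j(N-j)$ with the standard formulas for $\sum j$ and $\sum j^2$. For $\|v\|^2$, however, the paper evaluates the double sum $\sum_{j,k}({C^{-1}_{A_{N-1}}})_{j,k}$ head-on from the explicit inverse Cartan matrix (\ref{Cartinversa}), splitting it into the regions $j<k$ and $j\geq k$; you instead observe that $\underline w=C^{-1}_{A_{N-1}}\underline 1$ and $\underline{|c|^2}=\Lambda\,C^{-1}_{A_{N-1}}\underline 1$ solve the same linear system up to the factor $\Lambda$, so $\|v\|^2=\|\underline c\|^2/\Lambda$ follows with no extra computation; your independent check via $v=i\,\mathrm{diag}\bigl(\tfrac{N-1}{2},\tfrac{N-3}{2},\ldots,-\tfrac{N-1}{2}\bigr)$ is a sound consistency test of the normalization $\|v\|^2=-\mathrm{Tr}\,v^2=\sum_{j,k}({C^{-1}_{A_{N-1}}})_{j,k}$, which is indeed the convention used in the energy computation. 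For the last identity the paper invokes Euler's theorem on the degree-4 and degree-2 homogeneous pieces of $f$ in (\ref{fc}) evaluated at the stationary point, while you recognize the quartic combination $\sum_j|c_j|^4-\sum_j|c_j|^2|c_{j+1}|^2$ as the Cartan quadratic form $\tfrac12\sum_{j,k}({C_{A_{N-1}}})_{j,k}|c_j|^2|c_k|^2$ and contract with the extremality relation $C_{A_{N-1}}\underline{|c|^2}=\Lambda\underline 1$; the two arguments are algebraically equivalent, since both rest only on that stationarity condition, yours being slightly more elementary and explicit about where the tridiagonal structure enters, the paper's phrasing generalizing more readily to other homogeneous potentials. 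The only stylistic caveat is the phrase ``comparing with (\ref{cj})'' for $|c_j|^2=\Lambda w_j$: strictly this follows most cleanly from the linear relation $C_{A_{N-1}}\underline{|c|^2}=\Lambda\underline 1$ that you already recorded, rather than from matching the closed-form expressions, but since you have that relation in hand there is no gap.
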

{\it Proof.}
The first result follows immediately by the well known formulas 
\begin{align}
\sum_{j=1}^{N-1} j&=\frac {N(N-1)}2, \\ 
\sum_{j=1}^{N-1} j^2&=\frac {%
N(N-1)(2N-1)}6.
\end{align}
For the second expression notice that, by using (\ref{Cartinversa}), 
\begin{align}
\|v\|^2=& \sum_{j,k} {C^{-1}_{A_{N-1}}}_{j,k} \cr =&\frac 1N \left[ \sum_{j<k}
j(N-k)+\sum_{j\geq k} k(N-j) \right] \cr = & -\frac 1N
\sum_{j,k}jk+\sum_{j<k} j+\sum_{j\geq k} k\cr =&-\frac
1N(\sum_{j=1}^{N-1}j)^2+\sum_{j=1}^{N-1}j(N-j-1)\cr &+\sum_{k=1}^{N-1} k(N-k),
\end{align}
and the final expression again follows after applying the above well known
formulas. \newline
For the last formula, notice that the $c_j$ are solutions of 
\begin{align}
\frac {\partial f}{\partial c_k}=0, \quad\ k=1,\ldots,N-1,
\end{align}
where $f$, given by (\ref{fc}). From this we get 
\begin{align}
\sum_{k=1}^{N-1}c_k\frac {\partial f}{\partial c_k}=0.
\end{align}
Now, $f$ is the sum of two homogeneous pieces, one of degree 4 end the other
of degree 2. Therefore, we can use the Euler theorem\footnote{%
t.i. for an homogeneous function $f:\mathbb{R}^N\rightarrow \mathbb{R}$ of
degree $L$ one has 
\begin{align*}
\vec x \cdot \mathrm{grad}f= Lf.
\end{align*}%
} to rewrite the last as 
\begin{align}
0=4\left(\sum_{j=1}^{N-1} |c_j|^4-\sum_{j=1}^{N-2} |c_j|^2
|c_{j+1}|^2\right) -2\Lambda \|\underline c\|^2,
\end{align}
which completes the proof. $\qquad\qquad\ \Box$

Using this results and noticing that $\sigma ^{2}\Lambda =1/2$, we find for
the energy per Baryon 
\begin{widetext}
 \begin{align}
g(N,m)_{min}=& L_{r}L_{\gamma }L_{\varphi }\frac{K\pi ^{3}}{\sigma m}\left[ 4%
\frac{\sigma ^{2}}{L_{\varphi }^{2}}+\frac{\sigma ^{2}}{2L_{r}^{2}}+\frac{%
\sigma ^{2}\lambda }{4L_{\varphi }^{2}L_{r}^{2}}+2\frac{m^{2}}{L_{\gamma
}^{2}}\left( 1+\frac{\lambda }{16L_{r}^{2}}+\frac{\lambda }{4L_{\varphi }^{2}%
}\right) \right] \ ,  \label{gmin} \\
\Lambda =& 2^{(-1)^{N}},\quad \sigma =2^{-\frac{(-1)^{N}+1}{2}}, \\
B=& 2^{\frac{1+(-1)^{N}}{2}}m\frac{1}{12}N(N^{2}-1)\ .
\end{align}%
\end{widetext}
Notice that $g(N,m)$ depends on $N$ only through $\sigma $.\newline
We can also notice that 
\begin{equation*}
I_{N}=\frac{N(N^{2}-1)}{6}
\end{equation*}%
is the \textit{Dynkin index} of the given representation of the principal
representation of $sl(2)$ in $sl(N)$, so, the fundamental Baryonic charge
associated is 
\begin{equation*}
B=\frac{1}{2\sigma }I_{N}\ .
\end{equation*}%
Notice that for $N$ odd $I_{N}$ is even, so $B$ is always integer. \newline
Finally, we are also interested in minimizing expression (\ref{gmin}) with
respects to $L_{a}$, $a=\varphi ,r,\gamma $. This is done in general in
appendix \ref{app:minimal energy}. By using the formulas therein and the
ones in the last proposition, we get that the minimum is reached at 
\begin{equation*}
L_{\varphi }=\frac{\sqrt{\lambda }}{2^{\frac{3}{4}}}\qquad L_{r}=\frac{\sqrt{%
\lambda }}{4}\qquad L_{\gamma }=\frac{m}{\sigma }\frac{\sqrt{\lambda }}{2^{%
\frac{5}{4}}}, \label{lati}
\end{equation*}%
with corresponding minimal value 
\begin{equation}
g_{min}=K\sqrt{\lambda }\pi ^{3}(1+2\sqrt{2}).\label{g-min-dim}
\end{equation}%
Using normalised units (corresponding to $\lambda =1$ and $K=(6\pi ^{2})^{-1}
$) we get 
\begin{equation}
g_{min,stand}=\pi \frac{1+2\sqrt{2}}{6}\approx 2.00456. \label{minimal g}
\end{equation}%
Notice that this is independent from $N$ and it is expected to be the
absolute minimum with respect to any choice of $\varepsilon _{j}$. We will
not try to prove this conjecture here, we will limit ourselves to check it
for $N=4$ here below. The comparison with \cite{Sutcliffe:2010et}\ is very
interesting. The present results are slightly above the bound in \cite%
{Sutcliffe:2010et} due to the time-dependence in the Ansatz. Note however that the
present time-dependence cannot be undone as the present solutions wrap in a
topologically non-trivial way also around the time direction. 
To the best of our
knowledge, this is the first analytic computations showing explicitly how
the \textit{closeness to the BPS bound} ``evolves'' with $N$ in the $SU(N)$ Skyrme model.

To be more specific, as it has been already emphasized, we are interested in
topologically non-trivial solutions. In the present context this means that
we only consider $SU(N)$ Ansatz such that%
\begin{equation*}
\rho _{B}=Tr\left( U^{-1}dU\right) ^{3}\ \neq 0\ .
\end{equation*}%
As it has been discussed in the \mbox{previous sections}, $\rho _{B}$ represents the Baryon density when it is non-vanishing along three-dimensional
space-like hyper\-surfaces $\Sigma _{t=const}$.

In these cases, the integral of $\rho _{B}$ over $\Sigma _{t=const}$ represents the Baryon charge. While,
mathematically, these integrals represent how many time the $SU(N)$-valued
Skyrmions wrap around $\Sigma _{t=const}$. On the other hand, $\rho _{B}$
can be topologically trivial also along time-like hypersurfaces. In this
case, one can also consider the wrapping of the $SU(N)$-valued
configurations along three-dimensional time-like~hypersurfaces. The
configurations which have been constructed here are, as a direct check
easily reveals, topologically non-trivial in two ways. Not only they possess
non-vanishing Baryonic charge, they are also wrapped non-trivially along
time-like hypersurfaces. Indeed, if one considers%
\begin{align*}
U_{\varepsilon }^{\underline{c}}[t,r,\varphi ,\gamma ]& =e^{\sigma \Phi
k _{\underline{c}}}e^{av_{\varepsilon }r}e^{m\gamma k _{\underline{%
c}}}, \\
\Phi & =\frac{t}{L_{\varphi }}-\varphi \ ,
\end{align*}%
then the corresponding topological density has one space-like component and
one time-like component:%
\begin{equation*}
\rho _{B}\sim dr\wedge d\varphi \wedge d\gamma -dr\wedge d\left( \frac{t}{L}%
\right) \wedge d\gamma \ .
\end{equation*}%
In particular, it implies that these $SU(N)$ Skyrmions wrap non-trivially
around the three-dimensional time-like $\left\{ \varphi =const\right\} $
hypersurfaces. The consequence of this fact is that the time-dependence of
the present configuration ``cannot be undone'' otherwise the winding number
corresponding to the $\left\{ \varphi =const\right\} $ hypersurfaces would
change.

\subsection{Solving the periodicity problem} The solution of this problem is
provided in App.\ref{app:solve the problem}.\footnote{S.L.C. is particularly grateful to Laurent Lafforgue for suggesting him how to tackle this problem in full generality.} 
We discuss here the main results. The vectors $\underline c\in \mathbb{C}^{N-1}$ having all
components different from zero and allowing for a periodic function $%
g(\gamma)=e^{\gamma k_{\underline c}}$, with period $2\pi$, form a family 
\begin{align}
\underline c =\underline c (\underline m, \underline \alpha, \underline t ),
\end{align}
where $\underline m=(m_1,\ldots,m_n)$, is a finite strictly increasing
sequence of strictly positive coprime integer numbers, $n$ is the integer
part of $N/2$, $\underline \alpha\in [0,2\pi)^{N-1}$, and $\underline t \in
W\subset \mathbb{R}^{N-n-1}$ is a set of parameters parametrizing the
strictly positive real solutions of the algebraic system 
\begin{align}
\sum_{j=1}^{N-1} \zeta_j=& \sum_{a=1}^n m_a^2, \\
\sum_{j_1\ll\ldots \ll j_k\leq {N-1}} \!\!\!\!\!\!\!\!\! \zeta_{j_1} \cdots
\zeta_{j_k}=&\!\!\!\! \!\!\! \sum_{a_1<\ldots<a_k\leq n}\!\!\!\!\!\!\! m_{a_1}^2\cdots
m_{a_k}^2, \cr k=2,\ldots,n,
\end{align}
in the real variables $\zeta_j$, $j=1,\ldots, N-1$.\newline
The parameters $\underline \alpha$ and $\underline t$ form a moduli space $%
\mathbb{T}^{N-1} \times W$. The relevant physical quantities depend only on $%
|c_j|$ so are independent on the components in $N-1$ dimensional torus.
Therefore, we can say that only $W$ represents the \textit{relevant moduli}.
As one could expect, in particular, the Baryon number associated to a
solution constructed with $\underline c (\underline m, \underline \alpha,
\underline t )$ depends only on $\underline m$ and not on the continuous
moduli: 
\begin{align}
B=2\sigma m \sum_{a=1}^n m_a^2.
\end{align}
The general form of $g(\gamma)$ is 
\begin{align}
e^{\gamma k_{\underline c}}=f_0(\gamma,\underline m) \mathbb{I}%
+\sum_{j=1}^{N-1} f_j(\gamma,\underline m) k_{\underline c (\underline m,
\underline \alpha, \underline t )}^j,
\end{align}
where the $f_\beta$, $\beta=0,\ldots,N-1$ are linear combinations of $1$ and 
$\sin(m_a\gamma)$, $\cos(m_a \gamma)$, with rational functions of $%
\underline m$ as coefficients, and satisfying $f_0(0,\underline m)=1$, $%
f_j(0,\underline m)=0$ for $j>0$. In particular, the dependence on the
continuous moduli is only through the $k_{\underline c}^j$.


\subsection{Back to $N=4$}

Following App.\ref{sec:N=4}, for any two coprime positive integers $p$ and $q
$ such that $p> q$, for $N=4$ we can find four families of solutions, each
one parametrized by three real phases $\alpha_1, \alpha_2, \alpha_3$ and a
real modulus $\tau\in [q,p]$. Each of these families is specified by one of
the four possible inequivalent choices for the discrete vector $\varepsilon$. 
Recall that in this case the inverse Cartan matrix is 
\begin{align}
C^{-1}_{A_3}=\frac 14 
\begin{pmatrix}
3 & 2 & 1 \\ 
2 & 4 & 2 \\ 
1 & 2 & 3%
\end{pmatrix}%
.
\end{align}
We also have, see App.\ref{sec:N=4}, 
\begin{widetext}
\begin{align}
g_4(x)\equiv e^{x k_{\underline c}} =& \left(\frac {p^2}{p^2-q^2} \cos (qx)-%
\frac {q^2}{p^2-q^2} \cos (px)\right) \mathbb{I }+ \left(\frac {p^2}{%
q(p^2-q^2)} \sin (qx)-\frac {q^2}{p(p^2-q^2)} \sin (px)\right) k_{\underline
c}\cr & +\left(\frac {1}{p^2-q^2} (\cos (qx)-\cos (px))\right) k_{\underline
c}^2+\left( \frac {1}{p^2-q^2} \left( \frac {\sin (qx)}q-\frac {\sin (px)}p
\right) \right) k_{\underline c}^3,
\end{align} 
with{\ 
\begin{align}
k_{\underline c}=& 
\begin{pmatrix}
0 & e^{i\alpha_1} \tau & 0 & 0 \\ 
-e^{-i\alpha_1} \tau & 0 & e^{i\alpha_2} \psi & 0 \\ 
0 & -e^{-i\alpha_2} \psi & 0 & \frac {pq}\tau e^{i\alpha_3} \\ 
0 & 0 & -\frac {pq}\tau e^{-i\alpha_3} & 0%
\end{pmatrix}%
, \\
k_{\underline c}^2=& 
\begin{pmatrix}
-\tau^2 & 0 & e^{i(\alpha_1+\alpha_2)} \tau\psi & 0 \\ 
0 & -(p^2+q^2 -\frac {p^2q^2}{\tau^2}) & 0 & e^{i(\alpha_2+\alpha_3)}\frac {%
pq}\tau \psi \\ 
e^{-i(\alpha_1+\alpha_2)} \tau\psi & 0 & -(p^2+q^2-\tau^2) & 0 \\ 
0 & e^{-i(\alpha_2+\alpha_3)}\frac {pq}\tau \psi & 0 & -\frac {p^2q^2}{\tau^2%
}%
\end{pmatrix}%
, \\
k_{\underline c}^3=&
\begin{pmatrix}
0 & -e^{i\alpha_1} \tau (p^2+q^2 -\frac {p^2q^2}{\tau^2}) & 0 & 
e^{i(\alpha_1+\alpha_2+\alpha_3)}pq \psi \\ 
e^{-i\alpha_1} \tau (p^2+q^2 -\frac {p^2q^2}{\tau^2}) & 0 & 
-e^{i\alpha_2}(p^2+q^2) \psi & 0 \\ 
0 & e^{-i\alpha_2}(p^2+q^2) \psi & 0 & -\frac {pq}\tau
e^{i\alpha_3}(p^2+q^2-\tau^2) \\ 
-e^{-i(\alpha_1+\alpha_2+\alpha_3)}pq \psi & 0 & \frac {pq}\tau
e^{-i\alpha_3}(p^2+q^2-\tau^2) & 0%
\end{pmatrix},
\end{align}
} and 
\begin{align}
\psi= \sqrt {p^2+q^2-\tau^2 -\frac {p^2q^2}{\tau^2}}.
\end{align}
\end{widetext}


\subsubsection{The almost $SU(2)$-type solutions}

\noindent The $SU(2)$ solution is expected to be identified by $%
\varepsilon_a =(1,1,1)$. Indeed, from (\ref{vepsilon}) we have 
\begin{align}
v_a=\frac i2\mathrm{diag}(3,1,-1,-3), \qquad \|v_a\|^2=5,
\end{align}
which is exactly the matrix representing the diagonal generator of $SU(2)$
in the spin $3/2$ representation. However, this is not true in general and
we will see that in this series only the one with $(p,q)=(3,1)$ is
deformable to an $SU(2)$ embedding. Let us first look at the coordinate
ranges. Regarding the range of $r$, it is completely fixed by
proposition \ref{prop:3}. As for the remaining ranges, they must correspond
to the period of $g_4$, unless there are (finite discrete) subgroups of the $%
U(1)$ group generated by $g_4$, which commute with $v_a$. Since $v_a$ does
not commute with $k_{\underline c}^j$, $j=1,2,3$ (or any linear combination
thereof), we have to look for the values of $x$, such that $f_j(x)=0$, $j=1,2,3
$ (App.~\ref{sec:N=4}). Looking at $f_3$, this means 
\begin{align}
\cos (px)=\cos (qx),
\end{align}
that is 
\begin{align}
px=\pm qx+2\ell \pi
\end{align}
for some integer $\ell$. For the $x$ satisfying this condition, call them $%
x_pm$, one has for $f_3$ 
\begin{align}
f_3(x\pm)=\frac 1{p^2-q^2} (\frac 1q-\frac 1p) \sin (qx_\pm),
\end{align}
which is zero for $x=j\pi$ for some integer $j$. Since our coordinates are
forced to vary in $[0,2\pi]$, the only non trivial possibility is $x=\pi$.
Putting this back into the previous condition, we must also have 
\begin{align}
p-q=2\ell,
\end{align}
which means that, since $p$ and $q$ are coprime, this happens only when
both $p$ and $q$ are odd. In this case $g_4(\pi)=-\mathbb{I}$. Therefore, we
see that for $p-q$ odd, there are no discrete symmetries, and the ranges of $%
\Phi$ and $\gamma$ must coincide with the whole period, so $\sigma=1$.
Instead, for $p-q$ even we have 
\begin{align}
&g_4(\Phi+\pi) e^{\frac 12 v_a r} g_4(\gamma+\pi) \cr &=g_4(\Phi) (-\mathbb{I}%
)e^{\frac 12 v_a r} (-\mathbb{I}) g_4(\gamma)\cr &=g_4(\Phi) e^{\frac 12 v_a r}
g_4(\gamma),
\end{align}
so we see that to any point on the image there correspond two different
coordinates $(\Phi, \gamma)$ and $(\Phi+\pi, \gamma+\pi)$, unless we restrict
one of the two ranges to half a period. We choose to do it with $\Phi$, and,
in order to keep its range to be $[0,2\pi]$, we fix $\sigma=1/2$.\newline
The field is 
\begin{align}
U_a=&g_4(\sigma_{p-q}\Phi) 
\begin{pmatrix}
e^{\frac 34 i r} & 0 & 0 & 0 \\ 
0 & e^{\frac 14 i r} & 0 & 0 \\ 
0 & 0 & e^{-\frac 14 i r} & 0 \\ 
0 & 0 & 0 & e^{-\frac 34 i r}%
\end{pmatrix} g_4(m\gamma), 
\end{align}
\begin{align}
\Phi=&\frac t{L_\varphi}-\varphi, \\
\sigma_{p-q}=& 
\begin{cases}
\frac 12 & \mbox{ if $p-q$ is even} \\ 
1 & \mbox{ if $p-q$ is odd }%
\end{cases}%
.
\end{align} 
The Baryon number is $B_a=2\sigma_{p-a}m(p^2+q^2)$, while for the $g$-factor
we get 
\begin{widetext}
\begin{align}
g_a(&p,q,m, \tau)= L_rL_\gamma L_\phi \frac {K\pi^3}{4\sigma_{p-q} m} \left[ 
\frac {16\sigma^2_{p-q}}{L^2_\varphi} + \frac {5}{(p^2+q^2)L_r^2}+\frac {%
\lambda \sigma^2_{p-q}}{L^2_\varphi L^2_r} + 8\frac {m^2 }{L_\gamma^2}
\left( 1 + \frac {\lambda }{16 L_r^2} \right) \right. \cr + & \left.
8\sigma^2_{p-q}\frac {m^2 }{L_\gamma^2} \frac {\lambda(p^2+q^2)}{%
L^2_{\varphi} } \left( 1-\frac {3\tau^2}{p^2+q^2} +\frac {3\tau^4}{%
(p^2+q^2)^2}+\frac {4p^2q^2}{(p^2+q^2)^2}+\frac {3p^4q^4}{(p^2+q^2)^2} \frac
1{\tau^4} -\frac {3p^2q^2}{p^2+q^2}\frac 1{\tau^2} \right) \right]\!\!.
\end{align} 
The corresponding minimal energy, expressed in normalized units, is given by
(\ref{energia minima}), which in this case becomes: 
\begin{align}
g_a(p,q,\tau)=\frac {\pi}{3\sqrt 2} \left[2+\sqrt 5 \left( 1-\frac {3\tau^2}{%
p^2+q^2} +\frac {3\tau^4}{(p^2+q^2)^2}+\frac {4p^2q^2}{(p^2+q^2)^2}+\frac {%
3p^4q^4}{(p^2+q^2)^2} \frac 1{\tau^4} -\frac {3p^2q^2}{p^2+q^2}\frac
1{\tau^2} \right)^{\frac 12} \right].
\end{align}
\end{widetext}
We can further minimise w.r.t. $\tau$. Setting $x=\tau^2$, we have to find
the stationary points in 
\begin{align}
q^2 < x< p^2.
\end{align}
Deriving the expression in the square root w.r.t. $x$ and multiplying by $%
(p^2+q^2)^2x^3/6$, we get the equation 
\begin{align}
0= \left( x^2-p^2q^2\right) \left(x^2 -\frac x2 (p^2+q^2) +p^2q^2 \right).
\end{align}
This gives the admissible solutions ($x$ is positive) 
\begin{align}
x_0=pq, \qquad x_\pm =\frac {p^2+q^2}4 \pm \sqrt {\frac {(p^2+q^2)^2}{16}%
-p^2q^2}.
\end{align}
$x_0$ is always present, while $x_\pm$ are stationary points only when the
square root is real, that is when 
\begin{align}
(p^2+q^2)^2-16p^2q^2>0.
\end{align}
Setting $z=p/q$ this means $x^4-14x^2+1>0$ so (since $p/q>1$) 
\begin{align}
x^2> 7+\sqrt {48}=(2+\sqrt 3)^2,
\end{align}
and, finally, 
\begin{align}
\frac pq> 2+\sqrt 3.
\end{align}
Taking the second derivative of the above expression and evaluating it in $%
x_0$, we get that $x_0$ is the absolute minimum (at fixed $p$ and $q$) if 
\begin{align}
9-\frac pq-\frac qp>0,
\end{align}
that is (recalling $p\geq q$), for 
\begin{align}
1\leq \frac pq < \frac 12 (9+\sqrt {77}),
\end{align}
otherwise the minimum is placed in $x_\pm$. In conclusion 
\begin{widetext}
\begin{align}
g_{a,min}(p,q)&=\frac {\pi}{3\sqrt 2} \left[2+\sqrt 5 \chi_a(p,q) \right], \\
\chi_a(p,q)&= 
\begin{cases}
1+10 \frac {p^2 q^2}{(p^2+q^2)^2}-6\frac {pq}{p^2+q^2}, & {\quad \mathrm{if}
\quad 1<\frac pq < \frac 12 (9+\sqrt {77})} \\ 
1-2 \frac {p^2 q^2}{(p^2+q^2)^2}, & {\quad \mathrm{otherwise}}.%
\end{cases}%
\end{align} 
\end{widetext}
The absolute minimum in the family is the minimum of the first row. Setting $%
x=pq/(p^2+q^2)$, we see that $1+10x^2-6x$ has a minimum for $x=3/10$, which
correspond to $p=3$, $q=1$. The corresponding absolute minimal energy is
exactly (\ref{minimal g}). This is not surprising at all, since the $%
(p,q)=(3,1)$, $\varepsilon=(1,1,1)$ corresponds to solution (\ref{gmin}) for 
$N=4$ (use (\ref{cj}) with $\Lambda=2$ in (\ref{lambdapm})). This
corresponds to the undeformed $SU(2)$ embedding, as anticipated.


\subsubsection{The case $\protect\varepsilon_b =(1,1,-1)$}

In this case we get 
\begin{align}
v_b=i\mathrm{diag}(1,0,-1,0), \qquad \|v_b\|^2=2.
\end{align}
Regarding the ranges, we can do de same exact reasoning as for the
previous case, so we get 
\begin{equation}
U_b(t,\varphi,r,\gamma)=g_4(\sigma_{p-q}\Phi) 
{\scriptstyle \begin{pmatrix}
e^{\frac 12 i r} & 0 & 0 & 0 \\ 
0 & 1 & 0 & 0 \\ 
0 & 0 & e^{-\frac 12 i r} & 0 \\ 
0 & 0 & 0 & 1%
\end{pmatrix}}
g_4(m\gamma),
\end{equation}
\begin{equation}
\Phi=\frac t{L_\varphi}-\varphi, \qquad \sigma_{p-q}= 
\begin{cases}
1 & \mbox{ for $p-q$ odd} \\ 
\frac 12 & \mbox{ for $p-q$ even}%
\end{cases}%
.
\end{equation}
\begin{widetext}
The Baryonic charge is 
\begin{align}
B=2\sigma_{p-q} m (p^2+q^2).
\end{align}
For the $g$-factor we get 
\begin{align}
g_b(p,q,m,& \tau)= L_rL_\gamma L_\phi \frac {K\pi^3}{4\sigma_{p-q} m} \left[ 
\frac {16\sigma^2_{p-q}}{L^2_\varphi} + \frac {2}{(p^2+q^2)L_r^2}+\frac {%
\sigma^2_{p-q}\lambda }{L^2_\varphi L^2_r} + 8\frac {m^2 }{L_\gamma^2}
\left( 1 + \frac {\lambda }{16 L_r^2} \right) \right. \cr + & \left. 8\frac {%
m^2 }{L_\gamma^2} \frac {\sigma^2_{p-q}\lambda(p^2+q^2)}{L^2_{\varphi} }
\left( 1+\frac {p^2q^2}{(p^2+q^2)^2}+\frac {3\tau^4}{(p^2+q^2)^2} -\frac {%
3\tau^2}{p^2+q^2} \right) \right]\!\!.
\end{align} 
The corresponding minimal energy, given by (\ref{energia minima}), in this
case becomes: 
\begin{align}
g_b(p,q,\tau)=\frac {\pi}{3\sqrt 2} \left[2+\sqrt 2 \left( 1+\frac {p^2q^2}{%
(p^2+q^2)^2}+\frac {3\tau^4}{(p^2+q^2)^2} -\frac {3\tau^2}{p^2+q^2}
\right)^{\frac 12} \right].
\end{align}
\end{widetext}
We can further minimise w.r.t. $\tau$. Setting $x=\tau^2$, it is immediate
to see that in this case the minimum is reached for 
\begin{align}
x_0=\frac {p^2+q^2}{2},
\end{align}
to which it corresponds the value 
\begin{align}
g_{b,min}(p,q)&=\frac {\pi}{3\sqrt 2} \left[2+\sqrt 2 \left( \frac 14+ 
\frac {p^2 q^2}{(p^2+q^2)^2} \right)^{\frac 12} \right].
\end{align}
For fixed $q$, this is a monotonic decreasing function of $p$, so there is
no an absolute minimum in this family. However, notice that the lower bound
is 
\begin{align}
g_{b,bound}&=\frac {\pi}{3\sqrt 2} \lim_{p\to\infty}\left[2+\sqrt 2 \left(
\frac 14+ \frac {p^2 q^2}{(p^2+q^2)^2} \right)^{\frac 12} \right]\cr &=\frac \pi6
(1+2\sqrt 2),
\end{align}
which is (\ref{minimal g}).\newline
We finally notice that this kind of solutions are not deformations of an $%
SU(2)$ or $SO(3)$ embedding, despite one could suspect it. Indeed, $v_b$ may
at most belong to the representation\footnote{%
We are using the convention that $\pmb s$ indicates the representation of
spin $s$} $\pmb {\frac 12}\oplus \pmb 0 \oplus \pmb 0$, or $\pmb 1 \oplus %
\pmb 0$ embedded in $SU(4)$. If so, there should exist a deformation of $%
k_{\underline c}$, t.i. a particular value of the moduli, such that $%
k_{\underline c}$ belongs into the same representation. But in both cases
the particular solution would be embedded in $SU(3)$ also and then it would
require $q=0$ or $p=0$.


\subsubsection{The case $\protect\varepsilon_c =(1,-1,1)$}

In this case we have 
\begin{align}
v_c=\frac i2(1,-1,1,-1), \qquad \|v_c\|^2=1.
\end{align}
Reasoning as before, we see that the field is now 
\begin{align}
U_c=&g_4(\sigma_{p-q}\Phi) 
\begin{pmatrix}
e^{\frac 14 i r} & 0 & 0 & 0 \\ 
0 & e^{-\frac 14 i r} & 0 & 0 \\ 
0 & 0 & e^{\frac 14 i r} & 0 \\ 
0 & 0 & 0 & e^{-\frac 14 i r}%
\end{pmatrix}
g_4(m\gamma), \\
\Phi=&\frac t{L_\varphi}-\varphi, \cr  \sigma_{p-q}= &
\begin{cases}
1 & \mbox{ for $p-q$ odd} \\ 
\frac 12 & \mbox{ for $p-q$ even}%
\end{cases}%
.
\end{align}
The Baryonic charge is 
\begin{align}
B=2\sigma_{p-q} m (p^2+q^2).
\end{align}
For the $g$-factor we get 
\begin{widetext}
 \begin{align}
g_c(p,q,m,& \rho)= L_rL_\gamma L_\phi \frac {K\pi^3}{\sigma_{p-q} m} \left[ 
\frac {16\sigma_{p-q}^2}{L^2_\varphi} + \frac {1}{2(p^2+q^2)L_r^2}+\frac {%
\lambda \sigma^2_{p-q}}{L^2_\varphi L^2_r} + 8\frac {m^2 }{L_\gamma^2}
\left( 1 + \frac {\lambda }{16 L_r^2} \right) \right. \cr + & \left.
8\sigma_{p-q}^2\frac {m^2 }{L_\gamma^2} \frac {\lambda(p^2+q^2)}{%
L^2_{\varphi} } \left( 1-\frac {2p^2q^2}{(p^2+q^2)^2} \right) \right]\!\!.
\end{align}
\end{widetext}
The corresponding minimal energy, given by (\ref{energia minima}), in this
case becomes: 
\begin{align}
g_c(p,q,|\rho|)=\frac {\pi}{3\sqrt 2} \left[2+\left( 1-2 \frac {p^2q^2}{%
(p^2+q^2)^2} \right)^{\frac 12} \right].
\end{align}
This is independent on $\tau$ and for fixed $q$ it is a monotonic increasing
function of $p$. It follows that the lower bound is reached for $p=q=1$ (the
value 1 is enforced by the request that $p$ and $q$ are coprime, but the
result depends only on $p/q$) 
\begin{align}
g_{c,bound}&=g_c(1,1)=\frac \pi6 (1+2\sqrt 2),
\end{align}
which, again, is (\ref{minimal g}). However, this is not allowed, since for $%
p=q=1$ the functions $f_j$ are not periodic and the solution of the
equations does not yield a well defined global solution! In this particular
family the absolute minimum is instead 
\begin{align}
g_{c,bound}&=g_c(2,1)=\frac \pi{3\sqrt 2} (2+\frac {\sqrt {17}}5)\simeq
2.0916,
\end{align}


\subsubsection{The case $\protect\varepsilon_d =(1,-1,-1)$}

In this case 
\begin{equation*}
v_{d}=i(0,-1,0,1),\qquad \Vert v_{d}\Vert ^{2}=2.
\end{equation*}%
This case looks to be very similar to the case $b$. Indeed, the reader can
be easily check that the matrices $v_{b}$, $k_{\underline{c}}$ transform
into $v_{d}$, $k_{\underline{c}}$ under the map 
\begin{align}
\mathrm{Mat}(N,\mathbb{C})& \longrightarrow \mathrm{Mat}(N,\mathbb{C}),\cr
a_{j,k} &\longmapsto a_{N-j,N-k}, \\
\mathbb{T}^{3}\times W& \longrightarrow \mathbb{T}^{3}\times W,\cr 
(e^{i\alpha_{1}},e^{i\alpha _{2}},e^{i\alpha _{3}},\tau )& \longmapsto (e^{i\alpha
_{3}},e^{i\alpha _{2}},e^{i\alpha _{1}},pq/\tau ). \cr
\end{align}%
Under this map the inverse Cartan matrix is invariant and $\underline{%
\varepsilon }_{b}\mapsto -\underline{\varepsilon }_{d}\equiv \underline{%
\varepsilon }_{d}$, where the last equivalence is by a global rescaling.
This sort of duality makes the two families perfectly equivalent and giving
the same minima.\newline
\textbf{Remark:} We see that of the four predicted sequences of families the
true inequivalent ones are the first three ones, while the $d$ case is not
really new. It is natural to expect that such duality extends to any $N$,
but this would require a deeper understanding of the global properties of
the relevant moduli space $W$. To this aim, it would be interesting to
investigate the explicit cases $N=5$ and $N=6$. This, however, goes beyond
the scope of the present work.


\section{Shear modulus for lasagna states}

On the
crust of ultra compact objects, like neutron stars, nucleons form large structures called
pasta states. Knowing the elasticity properties of the crust may be very
important to understand the structure of the gravitational waves emitted in
a collision with a black hole.
An important recent result has been found in \cite{Caplan:2018gkr}
where, using numerical simulations based on the phenomenological
nucleon-nucleon potential, the authors showed that the shear modulus for
nuclear lasagna can have a value much larger than previous estimates. Here
we give a first principle explanation of it as an application of the Skyrmionic model.  To compute the shear modulus
associated to lasagna states, our strategy will be to first compute it for
the $SU(2)$ case for the solutions determined in \cite{Alvarez:2017cjm, Canfora:2019asc}, by
employing its relation with the $1+1$ computations presented in \cite%
{Takayama:1992eu}.

Let us begin with a review \cite{Canfora:2019asc}. We will consider the
symmetric case\footnote{%
Notice that we are referring to the $p$ and $q$ in \cite{Canfora:2019asc}, which
have different meaning than the $p$ and $q$ used in the previous section} in
Equations (13) and (16) of \cite{Canfora:2019asc}, namely%
\begin{equation*}
p=q\ ,\ l_{2}=l_{3}=\sqrt{A}\ .
\end{equation*}%
This means that we are considering configurations in which the $SU(2)$
Skyrmions live in a box of volume $V_{tot}$,%
\begin{equation*}
V_{tot}=16\pi ^{3}Al_{1}
\end{equation*}%
where $l_{1}$ is the length along the $r$ direction (which is the coordinate
of the profile $H$ in Eq. (13) of \cite{Canfora:2019asc}). The Baryonic charge
corresponding to the Ansatz in Equations (12), (13) and (14) of \cite%
{Canfora:2019asc} is%
\begin{equation}
B=pq=p^{2}  \label{rescaled0}
\end{equation}%
(see below Eq. (24) page 5 of \cite{Canfora:2019asc}). Then, the SU(2) field
equations for the Ansatz in Equations (12), (13), (14) and (16) of \cite%
{Canfora:2019asc} with a static profile $H=H(r)$, reduce to%
\begin{equation}
-\frac{d^{2}u}{dr^{2}}+\Gamma ^{2}\sin u=0\ ,  \label{rescaled1}
\end{equation}%
where%
\begin{eqnarray}
u(r) &=&4H(r)\ ,\ \ \ 0\leq r\leq 2\pi \ ,  \label{rescaled2} \\
\Gamma ^{2} &=&\left( \frac{B}{A}\right) ^{2}\frac{\lambda l_{1}^{2}}{%
4+2\lambda \frac{B}{A}}\ ,  \label{rescaled3}
\end{eqnarray}%
where $\frac{B}{A}$ can be interpreted as the Baryon density per unit of
area of the Lasagna configuration (up to $\pi $ factors). In order to
compare directly the present results with the ones in \cite{Takayama:1992eu}, it
is convenient to define the rescaled coordinate $y$ as follows%
\begin{equation}
y=\Gamma r\ ,\ \ 0\leq y\leq 2\pi \Gamma \ ,  \label{rescaled4}
\end{equation}%
so that the field equation (\ref{rescaled1}) becomes%
\begin{equation}
-\frac{d^{2}u}{dy^{2}}+\sin u=0\Leftrightarrow \frac{\left( \frac{du}{dy}%
\right) ^{2}}{2}=1-\cos u+C\ ,  \label{rescaled5}
\end{equation}%
and the boundary conditions in order to have Baryonic charge $B=pq=p^{2}$ are%
\begin{equation}
H(2\pi )=\frac{\pi }{2}\Leftrightarrow u\left( 2\pi \Gamma \right) =2\pi \ .
\label{rescaled6}
\end{equation}%
Now, equations (\ref{rescaled4}), (\ref{rescaled5}) and (\ref{rescaled6})
(which are equivalent to the results in \cite{Canfora:2019asc}) can be compared
directly with equations (2.4), (2.7) and (2.9) of \cite{Takayama:1992eu}. In
particular, the dictionary between the results of \cite{Takayama:1992eu} and the
present ones is%
\begin{eqnarray}
\phi \left( x\right)  &\rightarrow &u\left( y\right) \ ,  \label{dictionary1}
\\
L &\rightarrow &2\pi \Gamma \ ,  \label{dictionary2} \\
k &\rightarrow &\sqrt{\frac{2}{C+2}}\equiv \tau \ ,\   \label{dictionary2.5}
\\
k^{\prime } &\rightarrow &\sqrt{\frac{C}{C+2}}  \label{dictionary2.6}
\end{eqnarray}%
where the left hand side (with respect to the \textquotedblleft $\rightarrow 
$\textquotedblright ) is from \cite{Takayama:1992eu} while the right hand side
comes from the above equations. Equations (2.9) and (2.10) of \cite%
{Takayama:1992eu} read%
\begin{eqnarray*}
L &=&2\tau I_{-1/2}\left( \tau \right) \ , \\
I_{-1/2}\left( \tau \right)  &=&\int_{0}^{\pi /2}dy\left( 1-\tau ^{2}\sin
^{2}y\right) ^{-1/2}\ ,
\end{eqnarray*}%
that is 
\begin{equation}
\pi \Gamma =\sqrt{\frac{2}{C+2}}\int_{0}^{\pi /2}dy\left( 1-\frac{2}{C+2}%
\sin ^{2}y\right) ^{-1/2}\ , \label{dictionary3} 
\end{equation}%
which fixes the integration constant $C$ in Eq. (\ref{rescaled5}) in terms
of $\Gamma $ 
\begin{equation*}
C=C\left( \Gamma \right) \ ,
\end{equation*}%
which depends on the Baryon charge as well as the size of the box in which
the configuration lives. Now, with the above dictionary, we can write the
speed of sound of the phonons using Eq. (3.15) of \cite{Takayama:1992eu}:%
\begin{align*}
V_{phonons}=&\sqrt{\frac{C}{2}}\frac{\pi \Gamma }{\int_{0}^{\pi /2}dy\left( 1-%
\frac{2}{C+2}\sin ^{2}y\right) ^{1/2}}\cr =&\sqrt{\frac{G_{SU(2)}}{T_{00}}},
\end{align*}%
where the $T_{00}$ is given in Eq. (28) of \cite{Canfora:2019asc}. Thus we have the
following expression for the shear modulus $G_{SU(2)}$ in the $SU(2)$ case%
\begin{equation*}
G_{SU(2)}=\left( V_{phonons}\right) ^{2}T_{00}\ .
\end{equation*}%
We can then estimate it as follows. In place of $T_{00}$ we use its mean
value computed as 
\begin{equation*}
\bar{T}_{00}=\frac{E_{min}^{0}B}{16\pi ^{3}l_{1}A},
\end{equation*}%
where $E_{min}^{0}$ is the minimal energy corresponding to $B=1$. From Table
1 of  \cite{Takayama:1992eu} we see that $B/A$ is independent from $B$ for the
minimal energy configuration. Using the values in the table\footnote{%
Notice that with these values the baryon density is $n\simeq
0.0468fm^{-3}\approx 0.05fm^{-3}$, the same value used in the simulations of 
\cite{Caplan:2018gkr}} we get 
\begin{equation*}
\bar{T}_{00}\simeq 1.26\ 10^{34}erg/cm^{3}.
\end{equation*}%
With the same values, from (\ref{rescaled3}) we obtain 
\begin{equation*}
\Gamma \simeq 0.371,\quad \ \pi \Gamma \simeq 1.166.
\end{equation*}%
Therefore condition (\ref{dictionary3}), which is easily solved numerically
after noticing that $I_{-\frac{1}{2}}(\tau )=K(\tau ^{2})$, the first
complete elliptic integral, gives 
\begin{equation*}
C\simeq 2.73
\end{equation*}%
and 
\begin{equation*}
V_{phonons}\simeq 0.1198.
\end{equation*}%
Finally, 
\begin{equation*}
G_{SU(2)}\simeq 1.8\ 10^{32}erg/cm^{3}.
\end{equation*}%
Notice that the present value is expected to an approximation from above,
since we are using a Skyrmionic effective model. From the above analysis,
taking into account (\ref{minimal g}), we can infer that in any case the
true value should be $G_{SU(2)}\gtrsim 10^{31}erg/cm^{3}$. The comparison with \cite%
{Caplan:2018gkr} is very good especially taking into account that we only used the
Skyrme model.
\newline
At this point we can use the new solutions found in the present work to
relate the shear modulus for $SU(N)$ case to the one for $SU(2)$.\newline
Let us consider the minimal energy per nucleon (\ref{g-min-dim}). After
multiplying by $B$ and dividing by the volume, which, because of (\ref{lati}%
) is proportional to $\lambda ^{\frac{3}{2}}$, we get 
\begin{equation*}
{\bar{T}}_{00}\propto \frac{K}{\lambda }N(N^{2}-1).
\end{equation*}%
On the other hand, the Baryon density is 
\begin{equation*}
n=\frac{B}{8\pi ^{2}L_{\varphi }L_{r}L_{\gamma }}\propto \frac{N(N^{2}-1)}{%
\lambda ^{3/2}},
\end{equation*}%
which solved for $\lambda $ and replaced in ${\bar{T}}_{00}$ gives 
\begin{equation*}
{\bar{T}}_{00}\propto n^{2/3}\sqrt[3]{N(N^{2}-1)}.
\end{equation*}%
Assuming the speed of sound to be essentially independent from $N$, as
suggested by the fact that all the component of $T_{\mu \nu }$ scale in the
same way with $N$, we get that the dependence of the shear modulus from $N$
is 
\begin{equation*}
G_{SU(N)}\propto \sqrt[3]{N(N^{2}-1)},
\end{equation*}%
so that we get the final estimate for the value of the shear modulus $%
G_{SU(N)}$ of the $SU(N)$ Skyrme model as 
\begin{eqnarray*}
G_{SU(N)} &=&a(N)G_{SU(2)}\ , \\
a(N) &=&\sqrt[3]{\frac{N(N^{2}-1)}{6}}\ .
\end{eqnarray*}


\section{Conclusion and perspectives}

\textit{In conclusion}, we constructed the first examples of analytic
(3+1)-dimensional Skyrmions living at finite Baryon density in the SU(N)
Skyrme model (which are not trivial embeddings of $SU(2)$ into $SU(N)$) for
any N. These results allow to compute explicitly the energy to Baryon charge
ratio for any N and to discuss its smooth large N limit as well as the
closeness to the BPS bound. The energy density profiles of these finite
density Skyrmions have \textit{lasagna-like} shape.  A quite
remarkable by-product of the present analysis is that we have been able to
estimate analytically 
the shear modulus of lasagna-shaped configurations
which appear at finite Baryon density. Our estimate agrees with recent
results \cite{Caplan:2018gkr} based on many body simulations in nuclear physics
using phenomenological nucleon-nucleon interaction potentials.


\begin{acknowledgments}
This work has been funded by Fondecyt Grant 1200022, MINEDUC-UA project ANT 1755 and by Semillero de Investigaci\'on SEM 18-02 of the Universidad de Antogasta. Centro de Estudios Cient\'{\i}ficos (CECs) is funded by the Chilean Government through the Centers of Excellence Base Financing Program of Conicyt
\end{acknowledgments}

\appendix
\section{General facts and conventions about $SU(N)$}\label{app:su(n)}
In this section we collect some general facts we applied for getting the solutions. Let $V, (|)$ the $N$-dimensional complex vector space isomorphic to $\mathbb C^N$, endowed with the canonical hermitian product
\begin{align*}
 & (\underline z| \underline w )=\sum_{j=1}^n z^*_j w_j, \qquad z,w\in \mathbb C^n, \cr
 & (x+iy)^*=x-iy, \quad x,y\in \mathbb R. 
\end{align*}
The unitary group $U(N)\equiv U(V)$ is the group of unitary transformations of $V$. Looking $U(V)$ as automorphisms of $V$ determines the smallest fundamental representation, simply called $V$. The action of $U(V)$ over $V$ induces an action on the
external products $\wedge^k V$ of $V$, and the corresponding homomorphisms 
\begin{align*}
 U(V) \longrightarrow {\rm Aut}(\wedge^k V), \quad, k=1,\ldots, N
\end{align*}
are all representations, also called  $\wedge^k V$. For $k=1,\ldots, N-1$ are all faithful (t.i. the kernel of the map is the identity transformation) and are called {\it the fundamental representations}. Any other finite dimensional representation is obtained by
their tensor products. $\wedge^N V$ is not faithful. The corresponding kernel is a normal subgroup of $U(N)$ called the special group $SU(N)\equiv SU(V)$. \\
$SU(N)$ is a compact simply connected simple Lie group of rank $N-1$. It essentially means that it contains a maximal abelian torus $T_N$ of dimension $N-1$. On $V$, it is represented by the diagonal $N\times N$ matrices $T$ such that
\begin{align*}
 \prod_{j=1}^N T_{jj}=1, \qquad  |T_{jj}|=1, \quad j=1,\ldots,N. 
\end{align*}
The center $Z_N$ of $SU(N)$ is the subgroup of $T$ consisting of the elements commuting with the whole $SU(N)$ (equivalently, it is the kernel of the adjoint representation). It consists of matrices of the form $\omega \mathbb I$, where 
$\omega^N=1$ and $\mathbb I$ is the identity matrix. Therefore, $Z_N\simeq \mathbb Z_{N}$. All the other compact simple Lie groups locally isomorphic to $SU(N)$ are the quotients $SU(N)_{\Gamma}:=SU(N)/\Gamma$, where $\Gamma$ is any given 
subgroup of $Z_N$. They are not simply connected, since their first homotopy group is $\pi_1(SU(N)_{\Gamma})=Z_N/\Gamma$. $SU(N)$ is the universal covering of all of them. In particular, for $N=2$ we have just two groups, which are $SU(2)$ and 
$SU(2)_{\mathbb Z_2} \simeq SO(3)$.

To any Lie group $G$ one associates the corresponding Lie algebra $L(G)$, which is the algebra of left invariant vector 
fields\footnote{t.i. the vector fields invariant under the action of the left translation $L_g: G\rightarrow G$, $L_g(h)=gh$, for any given $g\in G$} 
over $G$, endowed with the Lie bracket product. In matrix representation it reduces to the commutator $[,]$. Since the groups $SU(N)_\Gamma$ are locally isomorphic to $SU(N)$, their Lie algebras are all isomorphic. One gets
\begin{align}
\mathfrak{su}(N)&\equiv Lie(SU(N))\cr
=&\{ X\in Mat(N)| X^\dagger =-X,\  {\rm Tr} X=0 \},
\end{align}
t.i. the antihermitian traceless $N\times N$ complex matrices.\\
In particular, $H:=Lie(T_N)$ is a maximal abelian subalgebra of $\mathfrak{su}(N)$, having the property that, for any $X\in H$, the linear map $ad_X : \mathfrak{su}(N)\rightarrow \mathfrak{su}(N)$ defined 
by\footnote{This is called the adjoint action and defines the adjoint representation of the algebra over itself} $ad_X(Y)=[X,Y]$ for any $Y\in \mathfrak{su}(N)$, is diagonalizable (on the complexification
of the algebra).\\
We see from the definition that $\mathfrak{su}(N)$ is a real vector space of dimension $N^2-1$. A basis can be easily obtained as follows. For any $j,k=1,\ldots,N$ we define the matrix $E_{j,k}$ with elements 
\begin{align}
(E_{j,k})_{mn}=\delta_{jm} \delta_{kn}.
\end{align}
They are called the {\it elementary matrices}. With these notations, a basis of $\mathfrak{su}(N)$ is given by
\begin{align}
A_{j,k}&= (E_{j,k}-E_{k,j}), \quad S_{j,k}=i(E_{j,k}+E_{k,j}), \cr & 1\leq j<k \leq N, \\
J_h&=i(E_{j,j}-E_{j+1,j+1}), \quad \ h=1,\ldots,N-1.
\end{align} 
In particular, the matrices $J_h$ span the Cartan subalgebra $H$.
\subsection{Roots and simple roots}\label{sec: simple roots}
A concept that is particularly helpful for most of the calculations we need is the one of roots. These are related to the above observation regarding the diagonalizability of $ad_X$ for any $X\in H$.
The diagonalizability must be checked on $\mathfrak{su}(N)\otimes \mathbb C$, which is generated by the complex span of the basis given above, in place of the real span. Notice that the complex span contains the matrices $E_{i,j}$, $i\neq j$. This is sufficient
to determine all the eigenvectors and eigenvalues of $ad_X$ for any given $X\in H$.\\
To this aim, let us specify $H$ as follows:
\begin{align}
H=\left\langle X=i\sum_{j=1}^{n} c_j E_{j,j} | \sum_i c_i=0 \right\rangle_{\mathbb{R}},
\end{align}
where with $\langle \cdots \rangle_{\mathbb R}$ we mean the span over $\mathbb R$ of $\cdots$.
Thus, we immediately see that
\begin{align}
[X,E_{j,k}]&=i(c_j-c_k)E_{j,k},\\
[X,J_h]&=0,
\end{align}
so that $E_{j,k}$ and $J_h$ are eigenmatrices of the adjoint action of $X$, with eigenvalues $i(c_j-c_k)$ and $0$ respectively. The point is that the eigenvalues depend linearly on $X$. Let us consider the linear operators $L_j$, $j=1,\ldots,N$ defined
by
\begin{align*}
L_j: Mat(N) \longrightarrow \mathbb C, \ A\longmapsto A_{j,j}.
\end{align*}
Then, we can write $ic_j=L_j(X)$ so that 
\begin{align*}
 ad_X(E_{j,k}) =(L_j-L_k)(X) E_{j,k}.
\end{align*}
The linear operators
\begin{align}
\beta_{i,j} :=L_j-L_k : H\longrightarrow \mathbb C
\end{align}
are said the non vanishing {\it roots} of $\mathfrak{su}(N)$. The corresponding eigenspaces are one dimensional. Beyond these, there is the vanishing root defining the $0$ eigenvalue, which eigenspace is $H$, so that has degeneration equal to the
rank $r=N-1$.\\
In particular, the set of non vanishing roots contains a set of $r$ linearly independent roots, having the property that all the remaining roots are are combination of them with all non positive or all non negative integer coefficients. These are called
the {\it simple roots} and are  
\begin{align}
 \beta_j:=L_j -L_{j+1},\quad\ j=1,\ldots,N-1.
\end{align}
Finally, for convenience, we introduce the less conventional concept of {\it real valued roots} $\alpha_{j,k}=-i\beta_{j,k}$, $\alpha_j=-i\beta_j$, which we will simply call again roots and simple roots. With this convention, for the simple roots $\alpha_j$, 
we can also write
\begin{align}
 \alpha_j: H\longrightarrow \mathbb R, \quad X\longmapsto -{\rm Tr} (J_j X), \label{tralphaj}
\end{align}
useful for practical purposes. This also shows that the $\alpha_j$ are linearly independent. 
We name the corresponding eigenvectors $\lambda_{\alpha_j}\equiv \lambda_j=E_{j,j+1}$, so that
\begin{align}
[X,\lambda_j]=i\alpha_j(X)\lambda_j , \qquad \forall X\in H.  \label{simple roots}
\end{align}
Notice that $\lambda_{-\alpha_j}=\lambda_{\alpha_j}^\dagger$, so
\begin{align}
[X,\lambda^\dagger_j]=-i\alpha(X)\lambda^\dagger_j . 
\end{align}
\subsection{Some further technical facts}\label{app:sftf}
There is a canonical way to introduce a scalar product on the real space spanned by the simple roots. We however bypass the historical construction and employ (\ref{tralphaj}) to define the scalar product
\begin{align}
(\alpha_j | \alpha_k) := -{\rm Tr} (J_j J_k). 
\end{align}
On 
\begin{align*}
H^*_{\mathbb R}:= \langle \alpha_1,\ldots,\alpha_{N-1} \rangle_{\mathbb R} 
\end{align*}
it is an euclidean scalar product. One then defines the $r\times r$ {\it Cartan matrix}\footnote{The name comes from the fact that in the Dynkin classification the algebra $su(N)$ is called $A_r$, where $r$ is the rank} $C_{A_{N-1}}$ with components
\begin{align*}
(C_{A_{N-1}})_{j,k}:=& 2 \frac {( \alpha_j | \alpha_k)}{(\alpha_j|\alpha_j)}=( \alpha_j | \alpha_k) \cr
=& 2 \delta_{j,k}-\delta_{j,k+1}-\delta_{j,k-1}.  
\end{align*}
The Cartan matrix is strictly positive definite. Indeed, for any vector $(x^1,\ldots, x^r)\in \mathbb R^r$ we have
\begin{align}
\sum_{j,k}&  x^jx^k  (C_{A_{N-1}})_{j,k}= 2\sum_{j=1}^{r}  x_j^2 - \sum_{j=1}^{r-1} 2 x_j x_{j+1} \cr
&=x_1^2+x_r^2 +\sum_{j=1}^{r-1} (x_j-x_{j+1})^2,
\end{align}
which is strictly positive and vanishes only for $x_j=0$ for all $j$.
In particular, the Cartan matrix is invertible and, indeed, one easily checks that
\begin{align}
(C_{A_{N-1}}^{-1})_{j,k} =\frac 1N {\rm min}(j,k) (N-{\rm max}(j,k)). \label{Cartinversa}
\end{align}
Another important fact to notice is that for $j,k$ one has
\begin{align}
 [\lambda_j,\lambda_k]= \delta_{j+1,k} E_{j,j+2}. \label{lambdajlambdak}
\end{align}

\section{Proof of Proposition \ref{prop:eqmotion}}\label{app:prop1}
In order to prove the proposition, it is convenient to work with the coordinates $\Phi$ and $T=t+L_\varphi \varphi$. The metric takes the form $ds^2=-L_\varphi dT d\Phi+L_r^2 dr^2+L_\gamma^2 d\gamma^2$. With the given Ansatz, after replacing
$\Phi$ with $\sigma \Phi$ for constant $\sigma$ (for convenience), for the $L_\mu$ we get
\begin{align*}
 R_T&=0, & R^T&=-\frac 2{L_\varphi} R_\Phi, \\
 R_\Phi&=\sigma e^{-m\gamma k} e^{-h(r)} k e^{h(r)} e^{m\gamma k},  & R^\Phi &=0, \\
 R_r&=e^{-m\gamma k} h'(r) e^{m\gamma k}, & R^r&=\frac 1{L_r^2} R_r, \\
 R_\gamma &= mk, & R^\gamma&=\frac 1{L_\gamma^2} R_\gamma.
\end{align*}
For $F_{\mu\nu}=[R_\mu,R_\nu]$, with $x=e^{-h(r)}ke^{h(r)}$, we get the non vanishing components
\begin{align*}
 & F_{\Phi r}=-F_{r\Phi}=\sigma e^{-m\gamma k}[x,h'] e^{m\gamma k}, \\
 & F_{\Phi \gamma}=-F_{\gamma\Phi}=\sigma m e^{-m\gamma k}[x,k] e^{m\gamma k}, \\
 & F_{r\gamma}=-F_{\gamma r}=me^{-m\gamma k}[h',k] e^{m\gamma k}.
\end{align*}
Setting $\mathcal L_\mu:=[L^\nu,F_{\mu\nu}]$, the equations of motion are
\begin{align}
 0=\partial^\mu R_\mu +\frac \lambda4 \partial^\mu \mathcal L_\mu. \label{equazione del moto}
\end{align}
Using that nothing depends on $T$ and that there are no lower $T$ components, these reduce to
\begin{align*}
 0=\frac 1{L_r^2} \partial_r (R_r+\frac \lambda4 \mathcal L_r)+\frac 1{L_\gamma^2} \partial_\gamma (R_\gamma+\frac \lambda4 \mathcal L_\gamma).
\end{align*}
But
\begin{align*}
 \partial_\gamma R_\gamma&=0, \\
 \partial_r R_r&=e^{-m\gamma k} h'' e^{m\gamma k}, \\
 \partial_r \mathcal L_r&=\partial_r \left( \frac {m^2}{L_\gamma^2} e^{-m\gamma k} [k,[h',k]] e^{m\gamma k} \right) \cr
 &=\frac {m^2}{L_\gamma^2} e^{-m\gamma k} [k,[h'',k]] e^{m\gamma k} , \\
 \partial_\gamma \mathcal L_\gamma&=-\frac {m}{L_r^2}\partial_\gamma \left(  e^{-m\gamma k} [h',[h',k]] e^{m\gamma k} \right)\cr
& =\frac {m^2}{L_r^2} e^{-m\gamma k} [k,[h',[h',k]]] e^{m\gamma k} ,
\end{align*}
so (\ref{equazione del moto}) becomes
\begin{widetext}
 \begin{align*}
 0=\frac 1{L_r^2} e^{-m\gamma k} \left( h''-\frac \lambda4 \frac {m^2}{L_\gamma^2} \left( [k,[k,h'']]-[k,[h',[h',k]]] \right) \right) e^{m\gamma k},
\end{align*}
which proves the proposition.

\subsection{Further details}\label{app:further details}
Making use of (\ref{kappa}) and (\ref{simple roots}) we can write
\begin{align}
 [h',k]&=\sum_{j=1}^{N-1} (c_j[h',\lambda_j]-c^*_j [h',\lambda_j^\dagger])=i\sum_{j=1}^{N-1}\alpha_j(h') (c_j \lambda_j+c^*_j \lambda_j^\dagger).\label{hprimok}
\end{align}
Repeating the same calculation:
\begin{align*}
[h',[h',k]]&=i\sum_{j=1}^{N-1} (\alpha_j(h') c_j[h',\lambda_j]+\alpha_j(h') c^*_j [h',\lambda_j^\dagger])=-\sum_{j=1}^{N-1}\alpha_j(h')^2 (c_j \lambda_j-c^*_j \lambda_j^\dagger).
\end{align*}
Finally, 
 \begin{align*}
[k,[h',[h',k]]]=&-\sum_{k=1}^{N-1} \sum_{j=1}^{N-1} \alpha_j(h')^2 \left\{ c_kc_j [\lambda_k,\lambda_j] +c^*_k c^*_j [\lambda^\dagger_k,\lambda^\dagger_j] -c_kc^*_j [\lambda_k,\lambda^\dagger_j] -c^*_k c_j [\lambda^\dagger_k, \lambda_j] \right\}\\
=&-\sum_{k=1}^{N-1} \sum_{j=1}^{N-1} \alpha_j(h')^2 \left\{ c_kc_j [\lambda_k,\lambda_j] +c^*_k c^*_j [\lambda^\dagger_k,\lambda^\dagger_j] -c_kc^*_j [\lambda_k,\lambda^\dagger_j] +c^*_k c_j [\lambda_j, \lambda^\dagger_k] \right\}.
\end{align*}
The last two terms cancel after summation, while the first terms vanish for $j=k$, so we get
 \begin{align*}
[k,[h',[h',k]]]=&-\sum_{j<k} \alpha_j(h')^2 \left(c_j c_k [\lambda_k,\lambda_j]+c^*_jc^*k[\lambda^\dagger_k,\lambda^\dagger_j]\right)-\sum_{k<j} \alpha_j(h')^2 \left(c_j c_k [\lambda_k,\lambda_j]+c^*_jc^*k[\lambda^\dagger_k,\lambda^\dagger_j]\right)\\
=&\sum_{j<k} (\alpha_j(h')^2-\alpha_k(h')^2) \left(c_j c_k [\lambda_j,\lambda_k]+c^*_jc^*k[\lambda^\dagger_j,\lambda^\dagger_k]\right),
\end{align*}
where we have changed the order of commutators in the first sum and exchanged the name of variable in the second sum. Therefore
\begin{align}
[k,[h',[h',k]]]=&\sum_{j<k} \left(\alpha_j(h')^2-\alpha_k(h')^2\right)\left( c_jc_k [\lambda_j,\lambda_k] +c^*_j c^*_k[\lambda^\dagger_j,\lambda^\dagger_k] \right)\ . \label{khhk}
\end{align}
Similarly,
\begin{align*}
 [k,h'']=-[h'',k]=-i\sum_{j=1}^{N-1} \alpha_j(h'') (c_j \lambda_j+c^*_j \lambda^\dagger_j),
\end{align*}
and 
\begin{align*}
 [k,[k,h'']]=-i\sum_{j=1}^{N-1}\sum_{k-1}^{N-1} \alpha_j(h'') \left[ c_jc_k[\lambda_k,\lambda_j] -c_j^*c_k^* [\lambda_k^\dagger,\lambda_j^\dagger] -c_jc_k^* [\lambda_k^\dagger,\lambda_j]+c_j^* c_k[\lambda_k,\lambda^\dagger_j] \right].
\end{align*}
\end{widetext}
The first two terms can be treated as above, giving the contribution
\begin{align*}
i\sum_{j<k} (\alpha_j(h'')-\alpha_k(h''))\left( c_jc_k [\lambda_j,\lambda_k] -  c^*_jc^*_k [\lambda^\dagger_j,\lambda^\dagger_k]  \right) ,
\end{align*}
while the last two terms, after renaming the labels in the first of the sums, give the contribution
\begin{align*}
 -i \sum_{j=1}^{N-1}\sum_{k-1}^{N-1} \left(\alpha_j(h'')+\alpha_k(h'')\right)c_kc_j^* [\lambda_k,\lambda_j^\dagger].
\end{align*}
Now,
\begin{align*}
 [\lambda_k,\lambda^\dagger_j]=[E_{k,k+1}, E_{j+1,j}],
\end{align*}
which in components is
\begin{align*}
[E_{k,k+1}, E_{j+1,j}]_{m,r}
&=\delta_{j,k}(E_{j,j}-E_{j+1,j+1})_{m,r}
\end{align*}
so that
\begin{align}
 [\lambda_k,\lambda^\dagger_j]=-i\delta_{j,k} J_j.\label{commutatore}
\end{align}
We finally get 
\begin{eqnarray}
&&[k,[k,h'']]= i\sum_{j<k} \left(\alpha_j(h'')-\alpha_k(h'')\right)\cdot \label{kkh} \\
&&\cdot \left( c_jc_k [\lambda_j,\lambda_k] -
c^*_j c^*_k[\lambda^\dagger_j,\lambda^\dagger_k]  \right)-2 \sum_{j=1}^{N-1} \alpha_j(h'')|c_j|^2 J_j\ . \nonumber
\end{eqnarray} 

\subsection{A further proposition}
We want now to state another technical proposition:
\begin{prop}
 Assume $k_{\underline c}=\sum_{j=1}^{N-1} (c_j E_{j,j+1}-c_j^* E_{j+1,j})$, $h'\in H$ a matrix such that $\alpha_j(h')=:\varepsilon_j a$ where $\varepsilon_j$ is a sign, $j=1,\ldots,N-1$, and $x:=e^{-h' r} k_{\underline c}e^{h' r}$. Then
 \begin{align}
{\rm Tr} k^2_{\underline c}&=-2\|\underline c\|^2, \label{trakquadro}
\end{align}
\begin{align}
{\rm Tr} ([h',k_{\underline c}][h',k_{\underline c}])&={\rm Tr} ([h',x][h',x])=-2a^2 \| \underline c \|^2, \label{trhkhk}
\end{align}
and
\begin{eqnarray}
&&{\rm Tr} ([x,k_{\underline c}][x,k_{\underline c}])=-8\sin^2 (ar)  \label{trxkxk}\\
&&\left(\sum_{j=1}^{N-1} |c_j|^4+\sum_{j=1}^{N-2} |c_j|^2 |c_{j+1}|^2 \frac 12(1-3\varepsilon_j \varepsilon_{j+1}) \right). 
\nonumber
\end{eqnarray}   
\end{prop}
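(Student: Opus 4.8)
The plan is to derive all three trace identities from the root-space orthogonality $\mathrm{Tr}(E_{a,b}E_{c,d})=\delta_{bc}\delta_{ad}$ recorded in App.~\ref{app:su(n)}, which forces every cross term between matrices supported on different roots to drop out under the trace.

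The first two identities are straightforward expansions. For (\ref{trakquadro}) one writes $k_{\underline c}^2$ as a double sum in the $E_{j,j\pm1}$; only the mixed products $E_{j,j+1}E_{j+1,j}$ and $E_{j+1,j}E_{j,j+1}$ are diagonal, each contributing $-|c_j|^2$ to the trace, so $\mathrm{Tr}\,k_{\underline c}^2=-2\sum_j|c_j|^2=-2\|\underline c\|^2$. For (\ref{trhkhk}) I would use (\ref{hprimok}) with $\alpha_j(h')=\varepsilon_j a$ to obtain $[h',k_{\underline c}]=ia\sum_j\varepsilon_j(c_j\lambda_j+c_j^*\lambda_j^\dagger)$, and then square and trace exactly as before, the only change being the scalar factor $-a^2\varepsilon_j^2=-a^2$, which yields $-2a^2\|\underline c\|^2$. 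That $\mathrm{Tr}([h',x][h',x])$ equals the same number is immediate because $h'$ commutes with $e^{h'r}$, so $[h',x]=e^{-h'r}[h',k_{\underline c}]e^{h'r}$ is a conjugate of $[h',k_{\underline c}]$ and the trace of its square is conjugation invariant.

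The third identity is the substantial one, and I would attack it in two moves. First, make $x$ explicit: writing $h'=i\,\mathrm{diag}(d_1,\dots,d_N)$ with $d_j-d_{j+1}=\alpha_j(h')=\varepsilon_j a$, conjugation by $e^{h'r}$ multiplies $E_{j,j+1}$ by $e^{-i\varepsilon_j ar}$ and $E_{j+1,j}$ by $e^{i\varepsilon_j ar}$, so $x$ has the same shape as $k_{\underline c}$ with $c_j$ replaced by $\tilde c_j:=c_je^{-i\varepsilon_j ar}$, in particular $|\tilde c_j|=|c_j|$. Second, expand $[x,k_{\underline c}]$ and sort its terms by root: one gets a Cartan piece $D=\sum_{j}(\tilde c_j^*c_j-\tilde c_jc_j^*)(E_{j,j}-E_{j+1,j+1})$ and a second-neighbour piece $P+P'$ with $P=\sum_{j=1}^{N-2}(\tilde c_jc_{j+1}-\tilde c_{j+1}c_j)E_{j,j+2}$ and $P'=-P^\dagger$. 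By orthogonality, $\mathrm{Tr}([x,k_{\underline c}]^2)=\mathrm{Tr}(D^2)+2\,\mathrm{Tr}(PP')$.

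It then remains to compute the two pieces. Using $\tilde c_j^*c_j-\tilde c_jc_j^*=2i\varepsilon_j|c_j|^2\sin(ar)$ together with $\mathrm{Tr}[(E_{j,j}-E_{j+1,j+1})(E_{l,l}-E_{l+1,l+1})]=(C_{A_{N-1}})_{j,l}$ gives $\mathrm{Tr}(D^2)=-8\sin^2(ar)\big(\sum_j|c_j|^4-\sum_j\varepsilon_j\varepsilon_{j+1}|c_j|^2|c_{j+1}|^2\big)$; and $\mathrm{Tr}(PP')=-\sum_{j=1}^{N-2}|\tilde c_jc_{j+1}-\tilde c_{j+1}c_j|^2$ with $\tilde c_jc_{j+1}-\tilde c_{j+1}c_j=c_jc_{j+1}(e^{-i\varepsilon_jar}-e^{-i\varepsilon_{j+1}ar})$, whose squared modulus is $2\sin^2(ar)(1-\varepsilon_j\varepsilon_{j+1})|c_j|^2|c_{j+1}|^2$ — the uniform formula $|e^{-i\varepsilon_jar}-e^{-i\varepsilon_{j+1}ar}|^2=2\sin^2(ar)(1-\varepsilon_j\varepsilon_{j+1})$ being checked in the two cases $\varepsilon_j=\varepsilon_{j+1}$ and $\varepsilon_j=-\varepsilon_{j+1}$. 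Adding $\mathrm{Tr}(D^2)+2\,\mathrm{Tr}(PP')$, the coefficient of $|c_j|^2|c_{j+1}|^2$ becomes $-\varepsilon_j\varepsilon_{j+1}+\frac12(1-\varepsilon_j\varepsilon_{j+1})=\frac12(1-3\varepsilon_j\varepsilon_{j+1})$, which is exactly (\ref{trxkxk}). The one genuinely delicate step is sorting $[x,k_{\underline c}]$ into its root components: one must keep the index shifts and the signs $\varepsilon_j$ consistent when relabelling the two halves of each double sum. Everything else is mechanical once $x$ has been recast as a $k$-type matrix.
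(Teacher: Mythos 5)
Your proof is correct and follows essentially the same route as the paper's: the first two identities by direct expansion in the $E_{j,k}$ basis plus conjugation invariance, and the third by recasting $x$ as a $k$-type matrix with phases $e^{-i\varepsilon_j ar}$, splitting $[x,k_{\underline c}]$ into a Cartan piece and $E_{j,j\pm 2}$ pieces, and tracing each separately. Your minor variations (obtaining $x$ by diagonal conjugation rather than the adjoint series, and using the Cartan-matrix form of $\mathrm{Tr}\bigl[(E_{j,j}-E_{j+1,j+1})(E_{l,l}-E_{l+1,l+1})\bigr]$ instead of the $\mathrm{Tr}(J_jJ_k)$ identity) are equivalent to the paper's steps and lead to the same bookkeeping and the same final coefficient $\frac 12(1-3\varepsilon_j\varepsilon_{j+1})$.
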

\begin{proof}
First, we have 
\begin{align}
 {\rm Tr}k^2_{\underline c}=\sum_{j=1}^{N-1} \sum_{k=1}^{N-1} \{ c_jc_k{\rm Tr} (\lambda_j \lambda_k) +c^*_jc^*_k{\rm Tr} (\lambda^\dagger_j \lambda^\dagger_k)\nonumber\\-
 c^*_jc_k{\rm Tr} (\lambda^\dagger_j \lambda_k) -c_jc^*_k{\rm Tr} (\lambda_j \lambda^\dagger_k) \},
\end{align}  
where we used the notation $\lambda_j=E_{j,j+1}$. Since $\lambda_j$ is upper diagonal, so is $\lambda_j \lambda_k$, hence ${\rm Tr}(\lambda_j \lambda_k)=0$. Similarly, ${\rm Tr}(\lambda^\dagger_j \lambda^\dagger_k)=0$.
On the other hand \newpage
\begin{align}
 {\rm Tr} (\lambda^\dagger_j \lambda_k)&=\sum_{n=1}^N \sum_{m=1}^N (E_{j+1,j})_{nm} (E_{k,k+1})_{mn}\cr &=\sum_{n=1}^N \sum_{m=1}^N \delta_{j+1,n} \delta_{jm}\delta_{k+1,n} \delta_{km}\cr &=\delta_{kj}= {\rm Tr} (\lambda_j\lambda^\dagger_k ).
\end{align}
This proves (\ref{trakquadro}).\\
Now, notice that 
\begin{align}
 [h',x]=[h',e^{-h'r}k_{\underline c} e^{h'r}]=e^{-h'r} [h',k_{\underline c}]e^{h'r}
\end{align}
since $h'$ commutes with $e^{h'r}$. Therefore,
\begin{align}
 {\rm Tr} ([h',x][h',x])&={\rm Tr} (e^{-h'r}[h',k_{\underline c}][h',k_{\underline c}]e^{h'r})\cr &={\rm Tr} ([h',k_{\underline c}][h',k_{\underline c}]),
\end{align}
because of the cyclicity property of the trace. So we are left with the computation of ${\rm Tr} ([h',k_{\underline c}][h',k_{\underline c}])$.
Using (\ref{hprimok}) and the fact that the only non vanishing traces are ${\rm Tr}(\lambda_j^\dagger \lambda_k)=\delta_{j,k}$, we get
\begin{widetext}
 \begin{align}
 {\rm Tr} ([h',k_{\underline c}][h',k_{\underline c}])&={\rm Tr} \left( \sum_{j=1}^{N-1} (i\alpha_j(h')\lambda_j c_j+i \alpha_j(h') c^*_j \lambda^\dagger_j)\sum_{k=1}^{N-1} (i\alpha_k(h')\lambda_k c_k+i \alpha_k(h') c^*_k \lambda^\dagger_k)  \right)\cr
 &=-2\sum_{j=1}^{N-1} \alpha_j (h')^2 c_jc^*_j =-2a^2 \|\underline c\|^2,
\end{align}
\end{widetext}
where we used that $ \alpha_j (h')^2=(\varepsilon_j a)^2=a^2$. This proves (\ref{trhkhk}).\\
Let us write $h=h'r$. Therefore,
\begin{align}
 x&=e^{-h} k_{\underline c} e^h=\sum_{j=1}^{N-1} (c_j e^{-h} \lambda_j e^{h} -h.c.).
\end{align}
Using the notation $ad_X(Y)=[X,Y]$ for any pair of matrices $X,Y\in \mathfrak{su}(N)$, we first notice the identity
\begin{align}
 e^{tX} Y e^{-tX}=\sum_{n=0}^\infty \frac 1{n!} t^n ad_X^n (Y), \label{XAX}
\end{align}
where with $ad_X^n$ we mean the iterated application of $ad_X$. Indeed,
\begin{align}
 \frac d{dt} (e^{tX} Y e^{-tX}) 
 &=e^{tX} ad_X(Y) e^{-tX}.
\end{align}

Hence
\begin{align}
 \left. \frac {d^n}{dt^n} \right|_{t=0} (e^{tX} Y e^{-tX})&=e^{tX} ad^n_X(Y) e^{-tX}|_{t=0}\cr &=ad^n_X(Y),
\end{align}
so that (\ref{XAX}) is the Taylor expansion of $e^{tX} Y e^{-tX}$. For $Y=k_{\underline c}$, $X=h$ and $t=-1$, and using that $ad_h(\lambda_j)=i\alpha_j(h)=i\varepsilon_j at$ we then have
\begin{align}
 e^{-h} \lambda_j e^{h}&=\sum_{n=0}^\infty \frac 1{n!} (-1)^n ad_h^n (\lambda_j)\cr &=\sum_{n=0}^\infty \frac 1{n!} (-i\varepsilon_j ar)^n  \lambda_j\cr &=e^{-i\varepsilon_jar} \lambda_j.
\end{align}
So
\begin{align}
 x=\sum_{j=1}^{N-1} (c_j e^{-i\varepsilon_j ar} \lambda_j -c_j^* e^{i\varepsilon_j ar} \lambda^\dagger_j),
\end{align}
and
\begin{widetext}
\begin{align}
[x,k_{\underline c}]=\sum_{j,k}\left( c_j c_k e^{-i\varepsilon_jar} [\lambda_j,\lambda_k]+c^*_j c^*_k e^{i\varepsilon_j ar} [\lambda^\dagger_j,\lambda^\dagger_k] - c_j c^*_k e^{-i\varepsilon_j ar} [\lambda_j,\lambda^\dagger_k]
-c^*_j c_k e^{i\varepsilon_j ar} [\lambda^\dagger_j,\lambda_k] \right) .
\end{align} 
\end{widetext}
By using (\ref{commutatore}), we see that the last two terms sum up to
\begin{align}
 -\sum_{j=1}^{N-1} |c_j|^2 J_j &i(e^{-i\varepsilon_jar}-e^{i\varepsilon_jar}) \cr &=-2  \sum_{j=1}^{N-1} |c_j|^2 \sin(\varepsilon_jar) J_j.
\end{align}
On the other hand
\begin{align}
 [\lambda_j,\lambda_k]&=[E_{j,j+1},E_{k,k+1}]\cr &=\delta_{k,j+1} E_{j,j+2}-\delta_{k+1,j} E_{j,j+2},
\end{align}
so that 
\begin{align}
 &\sum_{j,k} c_j  c_k e^{-i\varepsilon_jar} [\lambda_j,\lambda_k]\cr &=\sum_{j=1}^{N-2} c_jc_{j+1} (e^{-i\varepsilon_j ar}-e^{-i\varepsilon_{j+1}ar})E_{j,j+2},
\end{align}
and, similarly, by taking the hermitian conjugate,
\begin{align}
& \sum_{j,k} c^*_j c^*_k e^{i\varepsilon_jar} [\lambda^\dagger_j,\lambda^\dagger_k]\cr &=-\sum_{j=1}^{N-2} c^*_jc^*_{j+1} (e^{i\varepsilon_j ar}-e^{i\varepsilon_{j+1}ar})E_{j+2,j}.
\end{align}
This leads to
\begin{widetext}
 \begin{align}\label{xk}
 [x,k_{\underline c}]=&\sum_{j=1}^{N-2} \left[ c_jc_{j+1} (e^{-i\varepsilon_j ar}-e^{-i\varepsilon_{j+1}ar})E_{j,j+2}- c^*_jc^*_{j+1} (e^{i\varepsilon_j ar}-e^{i\varepsilon_{j+1}ar})E_{j+2,j}\right]\cr
 &-2  \sum_{j=1}^{N-1} |c_j|^2 \sin(\varepsilon_jar) J_j.
\end{align}
Using that the only non vanishing traces are
\begin{align}
 {\rm Tr} (E_{j,j+2} E_{k+2,k})={\rm Tr} (E_{j+2,j} E_{k,k+2})=\delta_{jk}, \quad {\rm Tr} (J_jJ_k)=-2\delta_{jk} +\delta_{j,{k+1}}+\delta_{j+1,k}.
\end{align}
we get
\begin{align}
 {\rm Tr}([x,k_{\underline c}][x,k_{\underline c}])=& -2 \sum_{j=1}^{N-2} |c_j|^2|c_{j+1}|^2 \left| e^{-i\varepsilon_j ar}-e^{-i\varepsilon_{j+1}ar} \right|^2-8\sum_{j=1}^{N-1} |c_j|^4 \sin^2 (\varepsilon_j a r)\cr
 &+\sum_{j=1}^{N-2} |c_j|^2 |c_{j+1}|^2 \sin (\varepsilon_j ar) \sin (\varepsilon_{j+1} ar).
\end{align}
\end{widetext}
Now,
\begin{align}
 \left| e^{-i\varepsilon_j ar}-e^{-i\varepsilon_{j+1}ar} \right|^2 &=2(1-\cos (ar(\varepsilon_j-\varepsilon_{j+1})))\cr &=4\sin^2 \left( ar \frac {\varepsilon_j-\varepsilon_{j+1}}2 \right).
\end{align}
Since $(\varepsilon_j-\varepsilon_{j+1})/2=0,\pm1$, we can write 
\begin{align}
 \sin^2 \left( ar \frac {\varepsilon_j-\varepsilon_{j+1}}2 \right)&=\sin^2(ar) \left(\frac {\varepsilon_j-\varepsilon_{j+1}}2 \right)^2\cr &= \frac 12 (1-\varepsilon_j \varepsilon_{j+1}) \sin^2(ar).
\end{align}
Also
\begin{align}
 \sin (\varepsilon_j ar) \sin (\varepsilon_{j+1} ar)=\sin^2 (ar) \varepsilon_j \varepsilon_{j+1}
\end{align}
and $\sin^2 (\varepsilon_j a r)=\sin^2 (a r)$, so that summing all up we get (\ref{trxkxk}). \end{proof}

\section{$SU(2)$ versus $SO(3)$.}\label{app:SO-SU}
Despite these being very well known facts, in this appendix we want to discuss the difference between $SU(2)$ and $SO(3)$, since it is crucial to identify our solutions. Locally, the two groups coincide, they have the same Lie algebra.
However, $SU(2)$ is simply connected, while $SO(3)$ is not. Indeed, $SU(2)$ is the universal covering of $SO(3)$. It has a nontrivial center $Z_{SU(2)}=\pm I$, $I$ being the unit element, and there is a surjective projection
\begin{align}
 \pi : SU(2)\longrightarrow SO(3)
\end{align}
having $Z_{SU(2)}$ as kernel. $SO(3)$ has trivial kernel and $\pi_1(SO(3))\simeq Z_{SU(2)}$. We can also write
\begin{align}
 SO(3)\simeq SU(2)/Z_{SU(2)}.
\end{align}
Now, let us illustrate the crucial difference we are interested in. Let $\tau_i$, $i=1,2,3$ a canonical basis of $Lie(G)$, $G$ being one of the two groups. We can then realise the group by means of the Euler parametrisation. This means that the generic 
element $g$ of the group has the form
\begin{align}
 g(a,b,c)= e^{a\tau_3} e^{b\tau_2} e^{c\tau_3}.
\end{align}
$a,b,c$ are the Euler angles. Each of the exponentials has a period (depending on the normalisation of the matrices), say $T_3$ for $a$ and $c$, and $T_2$ for $b$. The strategy to correctly cover $G$ exactly one time is explained in
\cite{Cacciatori:2012qi} and works as follows. To be sure to cover $G$ one integer number of times one first allow the coordinates to run each one in the respective period. This number, in general, is larger than one because of redundancies, due to
two reasons. The first reason is that the central element, parametrised by $b$, is chosen in the maximal torus (the exponential of the Cartan matrix). The redundancies correspond to the action of the Weyl group to the torus. This action is
determined by the algebra and is the same for both $SU(2)$ and $SO(3)$. It shows that indeed moving $b$ along a period quadruplicates the determination of the points for $SU(2)$ and duplicates for $SO(3)$, and one can reduce the range of 
$b$ down to $T_2/4$ or $T_2/2$ respectively. At this point,
the difference between $SU(2)$ and $SO(3)$ appears. Indeed, for $SO(3)$ this is the end of the story, it is already covered just one time, while for $SU(2)$ it remains a redundancy and we covered it twice. This redundancy is due to the fact that
\begin{align}
 e^{b\tau_2}\cap e^{c\tau_3}=
\begin{cases}
 I & {\rm if }\ G=SO(3) \\
 \Delta=e^{(T_3/2) \tau_3} & {\rm if }\ G=SU(2)
\end{cases}\ .
\end{align}
Therefore, since $\Delta^2=I$
\begin{align}
 g(a,b,c)&= e^{a\tau_3} e^{b\tau_2} e^{c\tau_3}\cr &=e^{a\tau_3} e^{b\tau_2} \Delta^{-2}e^{c\tau_3}\cr &=e^{a\tau_3} \Delta^-1e^{b\tau_2} \Delta^{-1}e^{c\tau_3}\cr &=g(a-T_3/2,b,c-T_3/2).
\end{align}
This redundancy is eliminated by reducing the range of $a$ down to $T_3/2$ for $SU(2)$. This is the way, relevant to our case, to distinguish the two kind of solutions: if the above intersection id $\Delta$, then the ranges of the variables
$a,b,c$ are $T_3/2, T_2/4, T_3$ respectively, and the group is $SU(2)$, otherwise the ranges are $T_3, T_2/2, T_3$, and the group is $SO(3)$.\\
Finally, we want to add a final remark relevant for recognising genuine solutions: for $SO(3)$ generator $\tau$ it happens of course that the orbit $\exp (x\tau)$ never meets the center, while if $\tau$ is an $SU(2)$ generator, then 
$\exp (x/2\tau)$ is the only non trivial generator of the center of $SU(2)$. No other elements of the center of $SU(N)$ can be met these kind of orbits.

\section{Representations of $SU(2)$ and periodicity}\label{periodicity}
It is well known from representation theory that spin $J$ representation of $SU(2)$ has generators $T_1$, $T_2$, $T_3$ given by the $N\times N$ matrices, with $N=2J+1$
\begin{align}
 (T_1)_{m,n}=& \frac i2 \sqrt {m(N-m)} \delta_{m,n-1} \cr +&\frac i2 \sqrt {m(N-m)} \delta_{m-1,n} \ ,  \\
 (T_2)_{m,n}=& \frac 12 \sqrt {m(N-m)} \delta_{m,n-1} \cr -&\frac 12 \sqrt {m(N-m)} \delta_{m-1,n} \ ,  \\
 (T_3)_{m,n}=& i (J+1-m) \delta_{m,n} .
\end{align}
Each of these matrices is diagonalizable with eigenvalues given by the ones of $T_3$. Since
\begin{align}
 U^\dagger \exp (x T_j) U=  \exp (x U^\dagger T_j U) 
\end{align}
it follows that the periodicity of
\begin{align}
f_j(x)= \exp (xT_j)
\end{align}
depends only on the eigenvalues and so all $f_j$ have the same periodicity, which is obviously $2\pi$ fo odd $N$ and $4\pi$ for even $N$.\\
On the other hand, let us consider the matrices $k_{\underline c}$ and $g(x)=\exp (xk_{\underline c})$. The possible periodicity of $g$ depends on the eigenvalues of $k_{\underline c}$. It is easy to see that the coefficients 
of the characteristic polynomial of $k_{\underline c}$ depend only on the $|c_j|^2$ so the phases of the $c_j$ are irrelevant for the periodicity. In particular, this means that the matrix $\exp (x\tilde T_2)$ with
\begin{align}
 (T_2)_{m,n}=&\frac {\zeta_m}2 \sqrt {m(N-m)} \delta_{m,n-1} \cr -&\frac {\zeta^*_n}2 \sqrt {n(N-n)} \delta_{m-1,n} , \quad |\zeta_j|=1,
\end{align}
has the same periodicity of $f_2(x)$.

\section{Solving the periodicity problem}\label{app:solve the problem}
In section \ref{sec:global} we have seen that for $N$ higher than 3 there is a further difficulty to overcome in order to find a global solution: generically the matrix $g(x)=e^{xk}$ is not periodic and its orbit densely fills a torus of dimension strictly higher 
than one. 
This phenomenon corresponds to the fact that the one parameter subgroup $g(x)$ is not indeed a Lie subgroup but only an imbedded subgroup. Therefore, for arbitrary choices of the coefficients $c_j$, the matrix
\begin{align}
 k_{\underline c}=\sum_{j=1}^{N-1} (c_j E_{j,j+1}-c^*_j E_{j+1,j}). 
\end{align}
cannot be used to generate global solutions unless the corresponding $g(x)$ is periodic. We will now tackle this problem in general. For the sake of completeness we will first show that no problems arise in the case $N=3$.
\subsection{The case $N=3$}\label{app:su3case}
In this simple case we have
\begin{align}
k_{\underline c}=
\begin{pmatrix}
 0 & c_1 & 0 \\
 -c_1^* & 0 & c_2 \\
 0 & -c_2^* & 0
\end{pmatrix}\ .
\end{align}
The corresponding characteristic polynomial is 
\begin{align}
 P_k(\lambda) := {\det} (\lambda \mathbb I -k_{\underline c})=\lambda(\lambda^2 +\|\underline c\|^2).
\end{align}
The eigenvalues are therefore $0,\pm i\|\underline c \|$, which are in rational ratios so $g(\gamma)=\exp(\gamma k_{\underline c})$ is periodic, in particular, with period $2\pi/\|\underline c \|$. For other purposes, we compute explicitly $g(\gamma)$. To this aim,
let us first notice that, by Cayley-Hamilton theorem, $k_{\underline c}$ satisfies
\begin{align}
 k_{\underline c} (k_{\underline c}^2 +\|\underline c\|^2 \mathbb I )=\mathbb O,
\end{align}
where $\mathbb I$ and $\mathbb O$ are the identity and the null $3\times3$ matrices. This implies $k_{\underline c}^3=-\|\underline c \|^2 k_{\underline c}$ so that any power of $k_{\underline c}$ can be reduced to a power lower than three. Hence
\begin{align}
 e^{\gamma k_{\underline c}}=g_1(\gamma) \mathbb I+g_2(\gamma) k_{\underline c}+ g_3(\gamma) k_{\underline c}^2, \label{expgammak}
\end{align}
for three functions satisfying $g_1(0)=1$, $g_2(0)=g_3(0)=0$, since $e^{\mathbb O}=\mathbb I$. Deriving (\ref{expgammak}) w.r.t. $\gamma$ and using the characteristic equation, we get
\begin{align}
g_1'(\gamma) \mathbb I&+g_2'(\gamma) k_{\underline c}+ g_3'(\gamma) k_{\underline c}^2=k_{\underline c} e^{\gamma k_{\underline c}}\cr &=g_1(\gamma) k_{\underline c}+g_2(\gamma) k_{\underline c}^2+ g_3(\gamma) k_{\underline c}^3\cr
& =(g_1(\gamma)-\|\underline c \|^2 g_3(\gamma)) k_{\underline c}+g_2(\gamma) k_{\underline c}^2,
\end{align}
so that
\begin{align}
 g_1'(\gamma)&=0, \\
 g_2'(\gamma)&= g_1(\gamma)-\|\underline c \|^2 g_3(\gamma), \\
 g_3'(\gamma)&= g_2(\gamma),
\end{align}
with the Cauchy conditions $g_1(0)=1, g_2(0)=g_3(0)=0$ (so that $g'_2(0)=1$). From the first equation we immediately get $g_1(\gamma)=1$, while deriving the second one and replacing from the third, we get
\begin{align}
 g_2''(\gamma)=-\|\underline c \|^2 g_2(\gamma), \qquad g_2(0)=0, \ g'_2(0)=1,
\end{align}
which has solution 
\begin{align}
 g_2(\gamma)= \frac {\sin (\|\underline c \|\gamma)}{\|\underline c \|}.
\end{align}
Finally, from the third equation we get
\begin{align}
 g_3(\gamma)&=\int_{0}^\gamma dx\ \frac {\sin (\|\underline c \|\gamma)}{\|\underline c \|}=\frac {1-\cos (\|\underline c \|\gamma)}{\|\underline c \|^2}\cr &=2 \frac {\sin^2 (\frac {\|\underline c \|}2\gamma )}{\|\underline c \|^2}.
\end{align}
Therefore
\begin{align}\label{expgammakappa}
 e^{\gamma k_{\underline c}} =I+\frac {\sin (\|\underline c\|\gamma)}{\|\underline c\|} k_{\underline c} +2\frac {\sin^2 (\frac {\|\underline c\|}2 \gamma)}{\|\underline c\|^2} k^2_{\underline c}\ .
\end{align}

\subsection{The general case}\label{app:general case}
One can in principle solve this problem as follows. Since $k$ is anti hermitian, it can be diagonalized in $\mathbb C$, with pure imaginary eigenvalues.
Moreover, if $\lambda$ is an eigenvalue, also $-\lambda=\lambda^*$ is. Therefore, if $S$ is the integer part of $N/2$ (so that $N=2S$ or $N=2S+1$ for $N$ even and odd respectively), generically we have $S$ distinct non vanishing eigenvalues. Let $U$ be
a unitary matrix such that 
\begin{align}
 k =U^\dagger \sigma U,
\end{align}
where $\sigma$ is the diagonal form of $k$, say
\begin{align}
 \sigma=
\begin{cases}
{\rm diag} (i\lambda_1,-i\lambda_1, \ldots, i\lambda_S, -i\lambda_S), & {\rm   N \ \ even} \\
{\rm diag} (i\lambda_1,-i\lambda_1, \ldots, i\lambda_S, -i\lambda_S,0), & {\rm   N \ \ odd}
\end{cases},
\end{align}
with $\lambda_j>0$.
Since
\begin{align}
 e^{xk}=e^{x U^\dagger\sigma U}=U^\dagger e^{x\sigma} U,
\end{align}
$ e^{xk}$ is periodic if and only if $e^{x\sigma}$ is. Now, $e^{T\sigma}$ is the identity iff and only if
\begin{align}
 e^{iT\lambda_j}=1
\end{align}
for all $j=1,\ldots,S$, that means $T\lambda_j=n_j 2\pi$, with $n_j$ a positive integer (obviously, we assume $T>0 $) for any $j=1,\ldots,S$. Therefore,
\begin{align}
 \frac {\lambda_j}{\lambda_k}=\frac {n_j}{n_k} \label{rational condition}
\end{align}
so that all pairs of eigenvalues must have rational quotients. Of course, this condition is satisfied for $N\leq 3$, and any choice of $\underline c$ is allowed. But for $N\geq 4$ we cannot choose the $c_j$ arbitrarily: only those values, such that $k$ admits eigenvalues with rational ratios are allowed. Notice that $\underline c$ remains defined up to a real multiplicative constant: if $t\in \mathbb R$, then $k_{t\underline c}=tk_{\underline c}$.\\
The eigenvalues are the solutions of the characteristic polynomial
\begin{align}
 P_N(x)=\det (x\mathbb I-k_{\underline c}),
\end{align}
of degree $N$ in $x$. 
Since $k_{\underline c}$ is antihermitian, its eigenvalues are purely imaginary and, moreover, if $\mu$ is a nonvanishing eigenvalue, then also $\mu^*=-\mu$ is an eigenvalue. So the non vanishing eigenvalues are in pairs and, if $N$ is odd, there is
at least one zero eigenvalue. Moreover, since in the factorization of the polynomial the nonvanishing eigenvalues $\mu$ must appear in the factors $(x-\mu)(x+\mu)=x^2-\mu^2$, we see that the general form of the polynomial must be 
\begin{widetext}
 \begin{align}
P_N(x)=
\begin{cases}
 (x^2)^{n}+a_{1} (x^2)^{n-1}+\ldots+a_{n}, & \mbox{for $N=2n$},\\
 x[(x^2)^{n}+a_{1} (x^2)^{n-1}+\ldots+a_{n}], & \mbox{for $N=2n+1$}.
\end{cases}
\end{align}
\end{widetext}
The coefficients $a_j$ are not the same for $N$ odd and for $N$ even, but it is convenient to keep the same name so that we can generically write the equation for the non vanishing eigenvalues as
\begin{align}
  y^{n}+a_{1} y^{n-1}+\ldots+a_{n-1}y+a_{n}=0, \qquad\ y=x^2. \label{eq-char}
\end{align}
We can be more precise:
\begin{prop}
 Using the notation $j\ll k$ for $k-j\geq 2$, we have
 \begin{align}
 a_1&=\|\underline c \|^2, \\
 a_k&=\!\!\!\sum_{j_1\ll j_2\ll\ldots\ll j_k} \!\! |c_{j_1}|^2 |c_{j_2}|^2\cdots |c_{j_k}|^2, \qquad k=2,\ldots, n. 
 \end{align}
\end{prop}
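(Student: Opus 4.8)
The plan is to read off the coefficients $a_j$ directly from the tridiagonal shape of $k_{\underline c}$, without ever diagonalising it. The matrix $x\mathbb{I}-k_{\underline c}$ is tridiagonal, with $x$ along the diagonal, $-c_j$ in the slot $(j,j+1)$ and $c_j^*$ in the slot $(j+1,j)$. Writing $D_m$ for the determinant of its upper-left $m\times m$ block and expanding along the last row, one gets the classical continuant recurrence
\[
D_m = x\,D_{m-1} + |c_{m-1}|^2\,D_{m-2},\qquad D_0=1,\quad D_1=x,
\]
so that $P_N(x)=D_N$. The $+$ sign appears because the off-diagonal product entering the recurrence is $(-c_{m-1})(c_{m-1}^*)=-|c_{m-1}|^2$; in particular all coefficients of $D_m$ come out non-negative, consistently with the claim.

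Next I would establish, by induction on $m$, the closed form
\[
D_m=\sum_{S}\Big(\prod_{j\in S}|c_j|^2\Big)\,x^{\,m-2|S|},
\]
the sum running over all subsets $S\subseteq\{1,\dots,m-1\}$ no two of whose elements are consecutive (the $\ll$-sparse subsets, in the notation of the statement). The cases $m=0,1$ are immediate. For the inductive step one sorts the $\ll$-sparse subsets of $\{1,\dots,m-1\}$ according to whether they contain $m-1$: those that do not are precisely the $\ll$-sparse subsets of $\{1,\dots,m-2\}$, and they reproduce the term $x\,D_{m-1}$; those that do are $\{m-1\}$ together with a $\ll$-sparse subset of $\{1,\dots,m-3\}$ (the index $m-2$ being excluded, while $m-1$ is automatically non-adjacent to everything in $\{1,\dots,m-3\}$), and they reproduce the term $|c_{m-1}|^2 D_{m-2}$. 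Matching these two families with the two terms of the recurrence closes the induction.

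Finally I would specialise $m=N$ and sort by powers of $y=x^2$. For $N=2n$ the $\ll$-sparse subsets of $\{1,\dots,N-1\}$ have at most $n$ elements, and collecting terms gives $P_N(x)=\sum_{k=0}^n\big(\sum_{|S|=k}\prod_{j\in S}|c_j|^2\big)y^{\,n-k}$; comparison with $P_N=y^n+a_1y^{n-1}+\dots+a_n$ identifies $a_k$ with the sum of $\prod_{j\in S}|c_j|^2$ over $\ll$-sparse $k$-element subsets $S\subseteq\{1,\dots,N-1\}$, which is exactly $\sum_{j_1\ll\dots\ll j_k\le N-1}|c_{j_1}|^2\cdots|c_{j_k}|^2$. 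For $N=2n+1$ one factors out the forced zero eigenvalue, $D_N=x\big(\sum_{k=0}^n(\cdots)y^{\,n-k}\big)$ — every $\ll$-sparse subset of $\{1,\dots,N-1\}=\{1,\dots,2n\}$ still has at most $n$ elements, and $N$ is odd — so the bracket reproduces the same $a_k$. In both cases $a_1=\sum_{j=1}^{N-1}|c_j|^2=\|\underline c\|^2$, which is the first assertion.

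The only point requiring real care is the combinatorial bookkeeping in the induction step: checking that the two families of subsets are disjoint and exhaustive, and that adjacency is handled correctly at the boundary index $m-1$. Everything else is a routine determinant expansion. An equivalent route would invoke the standard formula for the determinant of a Jacobi (tridiagonal) matrix as a sum over non-adjacent matchings, but the short induction above is self-contained and cleaner.
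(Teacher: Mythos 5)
Your proposal is correct and takes essentially the same route as the paper: a Laplace expansion along the last row of the tridiagonal matrix $x\mathbb{I}-k_{\underline c}$ gives the continuant recurrence $P_{m}(x)=x\,P_{m-1}(x)+|c_{m-1}|^{2}P_{m-2}(x)$, and the induction matching the two terms with the sparse index sets that do or do not contain the last index is exactly the paper's argument. Your write-up is simply more explicit about the combinatorial bookkeeping and the final reduction in the variable $y=x^{2}$ for $N$ even and odd, which the paper leaves implicit.
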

\begin{proof}
It can be easily proven by induction. We have already seen it for $N=3$. A direct computation shows that it is true also for $N=4$, since $P_4(x)=x^4+x^2(|c_1|^2+|c_2|^2+|c_3|^2)+|c_1|^2|c_3|^2$. Now, assume it to be true for $N$ and $N-1$. Let 
$k_{n}$ be the matrix $n\times n$ defined as $k_{\underline c}$ with components $c_1,\ldots,c_n$. This way, we se $k_n$ as a submatrix of $k_{n+1}$ obtained erasing the last row and column. Let 
\begin{align}
 P_n(x)= \det (x\mathbb I_{n\times n}-k_n).
\end{align}
Developing the determinant with the Laplace rule applied to the last row, we easily find
\begin{align}
 P_{N+1}(x)=x P_N(x)+|c_N|^2 P_{N-1}(x).
\end{align}
The first addendum contains all the monomials of the stated form but the terms containing $|c_N|^2$. The second addendum contains all the terms of the stated form containing $|c_{N}|^2$. The proposition is proved.
\end{proof}
So, for example,
\begin{align}
 P_4(x)=&x^4+x^2(|c_1|^2+|c_2|^2+|c_3|^2)+|c_1|^2|c_3|^2, \\
 P_5(x)=&x (x^4 +x^2 \|\underline c\|^2\cr +&(|c_4|^2|c_1|^2+|c_4|^2|c_2|^2+|c_3|^2|c_1|^2)), 
\end{align}
\begin{widetext}
 \begin{align}
 P_6(x)=&x^6+\|\underline c\|^2 x^4+x^2(|c_4|^2(|c_1|^2+|c_2|^2)+|c_1|^2|c_3|^2+|c_5|^2(|c_1|^2+|c_2|^2+|c_3|^2))+|c_1c_3c_5|^2.
\end{align}
\end{widetext}
Notice that, assuming that all $c_j$ are different from zero, we have always $a_n\neq 0$, so these are truly non zero eigenvalues. Now, condition (\ref{rational condition}) is equivalent to require that there must exist a positive real number $z$ and $n$
positive integers $m_j$, $j=1,\ldots,n$ such that the non vanishing eigenvalues of $k_{\underline c}$ must have the form $\lambda_j^{\pm}=\pm i m_j z.$
This happen if the solutions of (\ref{eq-char}) are
\begin{align}
 y_j=- z^2 m_j^2.
\end{align}
At this point, we can notice that the coefficient of the above polynomial can be written in terms of the solutions as:
\begin{align}
 a_1=&-\sum_{j=1}^N y_j, \\
 a_2=&\sum_{j_1<j_2} (-y_{j_1}) (-y_{j_2}), \\
& \qquad\qquad\ldots \\
 a_n=& \!\!\!\!\!\sum_{j_1<\cdots<j_n} (-y_{j_1})\cdots (-y_{j_n}).
\end{align}
\begin{widetext}
Comparing with the last proposition, we get the following set of equations for the $|c_j^2|=: \zeta_j$:
\begin{align}
 \sum_{j=1}^{N-1} \zeta_j=& z^2 \sum_{a=1}^n m_a^2, 
\end{align}
\begin{align}
 \sum_{j_1\ll\ldots \ll j_k\leq {N-1}} \zeta_{j_1} \cdots \zeta_{j_k}=&z^{2k}\!\!\!\! \!\!\! \sum_{a_1<\ldots<a_k\leq n} m_{a_1}^2\cdots m_{a_k}^2, \qquad k=2,\ldots,n.
\end{align} 
\end{widetext}
This is a set of $n$ equations in $N-1$ real positive variables. We will now show that it has generically an $(N-1-n)$-dimensional space of solutions in the interesting region, which is for $\zeta_j$ positive. To this end, we assume the generic situation
where all $m_a$ are different, and order them in an increasing sequence $m_1<m_2<\cdots<m_n$. We will show later that the condition on the $m_a$ cannot be weakened in order to get periodic solutions.
Then, we show that there is a simple solution on the boundary of the region of interest, which is (if $N$ is odd we assume the null eigenvalue to be the 
last one, $\lambda_{2n+1}=0$)
\begin{align}
 \zeta_{2a}=0, \qquad \zeta_{2a-1}=z^2 m^2_a, \quad a=1,\ldots,n. \label{punto particolare}
\end{align}
Next, {\it we claim that starting from this point, we can find a smooth family of solutions $\zeta_{2a-1}(\{\zeta_{2b}\})$ in a small open neighbourhood of $\zeta_{2b}=0$}. In particular, it implies that there are positive (by continuity) $\zeta_{2a-1}$'s 
parametrized by small positive $\zeta_b$'s. This is sufficient to show that there is generically a moduli space of real dimension $N-n-1$ for the solutions for the above system.
\begin{proof}[Proof of the claim] 
To prove prove the claim, let us consider te functions
\begin{align}
 F_1(\zeta_1,\ldots,\zeta_{N-1})=&\sum_{j=1}^{N-1} \zeta_j, \\
 F_k(\zeta_1,\ldots,\zeta_{N-1})=& \sum_{j_1\ll\ldots \ll j_k\leq {N-1}} \zeta_{j_1} \cdots \zeta_{j_k},\cr
  &\qquad k=2,\ldots,n,
\end{align}
and the square submatrix $M$ of its Jacobian defined by
\begin{align}
 \left. M_{a,b}=\frac {\partial F_a}{\partial {\zeta_{2b-1}}}\right|_{\zeta_j=\bar z_j},
\end{align}
where $\bar\zeta_j$ are defined by (\ref{punto particolare}). Therefore, we have
\begin{align}
 M_{1,b}=&1, \\
 M_{2,b}=&\sum_{c\neq b} z^2 m^2_c, \\
 M_{3,b}=&\sum_{\scriptstyle
\begin{array}{c}
c_1 < c_2 \\ c_j\neq b 
\end{array}
}
\!\!\!\!\!\!\!
z^4 m^2_{c_1} m^2_{c_2}, \\
\cdots & \cdots \cdots \cdots \\
 M_{k,b}=&\sum_{\scriptstyle
\begin{array}{c}
c_1 <\ldots <c_{k-1} \\ c_j\neq b 
\end{array}
}\!\!\!\!\!\!\!
z^{2k-2} m^2_{c_1}\cdots m^2_{c_{k-1}}, \\
\cdots & \cdots \cdots \cdots \\
 M_{n,b}=&\sum_{\scriptstyle
\begin{array}{c}
c_1 <\ldots <c_{n-1} \\ c_j\neq b 
\end{array}
}\!\!\!\!\!\!\!
z^{2n-2} m^2_{c_1}\cdots m^2_{c_{n-1}}. 
\end{align}
We want to compute the determinant of this matrix. It does not changes if we subtract the first column to all the other ones. In doing this, the first line becomes $\delta_{1,j}$, so that we can compute the determinant by applying the Laplace formula
to the first line. So, the determinant is equal to the determinant of the new matrix with the first row and the first column canceled out. To understand how this matrix appears, let us notice that the second row is
\begin{align}
 M_{2,b}-M_{2,1}=\sum_{c\neq b} z^2 m^2_c-\sum_{c\neq 1} z^2 m^2_c=z^2 (m^2_b-m^2_1),
\end{align}
and, more in general,
\begin{eqnarray}
 M_{k,b}-M_{k,1}&=&\!\!\!\!\!\!\!\!\!\sum_{\scriptstyle
\begin{array}{c}
c_1 <\ldots <c_{k-1} \\ c_j\neq b 
\end{array}
}\!\!\!\!\!\!\!
z^{2k-2} m^2_{c_1}\cdots m^2_{c_{k-1}} \nonumber \\
&&-\!\!\!\!\!\!\!\!\!\!\!\!\!\! \sum_{\scriptstyle
\begin{array}{c}
c_1 <\ldots <c_{k-1} \\ c_j\neq 1 
\end{array}
}\!\!\!\!\!\!\!
z^{2k-2} m^2_{c_1}\cdots m^2_{c_{k-1}} \\
&=&z^2(m_b^2-m_1^2)\!\!\!\!\!\!\!\!\!\!\!\!\!\! \sum_{\scriptstyle
\begin{array}{c}
c_1 <\ldots <c_{k-2} \\ 1\neq c_j\neq b 
\end{array}
}\!\!\!\!\!\!\!
 m^2_{c_1}\cdots m^2_{c_{k-2}}.\nonumber 
\end{eqnarray} 
Therefore, from the $b$-th column of the reduced matrix, $b=2,\ldots,n$, has a factor $z^2 (m^2_b-m^2_1)$ and since the determinant is multilinear on the columns, we get
\begin{align}
 \det(M)=\prod_{b=2}^n z^2 (m^2_b-m^2_1) \ \det (\tilde M),
\end{align}
where $\tilde M$ is a $(n-1)\times (n-1)$ matrix whose first row has all elements equal to 1 and 
\begin{align}
 \tilde M_{k,b}=&\sum_{
\begin{array}{c}
c_1 <\ldots <c_{k-1} \\ 1\neq c_j\neq b 
\end{array}
}\!\!\!\!\!\!\!
z^{2k-2} m^2_{c_1}\cdots m^2_{c_{k-1}}.
\end{align}
In other words, we see that $\tilde M$ is like $M$ but in one lower dimension and where $m_1$ has disappeared. We can then repeat inductively the same construction, finally arriving to the conclusion
\begin{align}
 \det(M)=\prod_{a<b} z^2 (m^2_b-m^2_a).
\end{align}
Since $m^2_a<m^2_b$ for $a<b$, we see that this determinant is different from zero. The proof of the claim then is an immediate consequence of the implicit function theorem.
\end{proof}
Going back to the $c_j$, we then see that in general
\begin{align}
 c_j= \xi_j \sqrt {\zeta_j (\underline m, \underline t)},
\end{align}
for arbitrary phases $\xi_j$, $j=1,\ldots, N-1$, with $\underline m\in \mathbb N_>^n$, $\underline t \in W\subset \mathbb R^{N-n-1}$. The parameters $t_j$ parametrize the above family of solutions. We can always assume that the integer $m_j$ are
coprime. Indeed, if $m$ is a common divisor of $m_j$ so that $m_j=m s_j$, then we can write $\underline m=m\underline s$ and $m$ can be reabsorbed in $z$. Having assumed this, we can now fix $z$ in such the way that $e^{xk_{\underline c}}$
has period $2\pi$. Indeed, since the non vanishing eigenvalues of $k_{\underline c}$ are $\lambda^\pm_j=\pm i z m_j$, since the $m_j$ are coprime, the common period of the associated exponential is $2\pi/z$. This fixes $z=1$. \\
Notice in particular that in this case
\begin{align}
 \|\underline c\|^2=\sum_{j=1}^n m_j^2\equiv \|\underline m \|^2.
\end{align}
In particular, the associated baryon number is
\begin{align}
 B=2\sigma m \|\underline m \|^2.
\end{align}
We have proved:
\begin{prop}
 For $N=2n$ or $N=2n+1$ and for any $n$-tuple of strictly increasing coprime positive integers $m_a$, $a=1,\ldots,n$,  
 the matrices $k_{\underline c}$ such that $e^{xk_{\underline c}}$ has period $2\pi$ is a family of dimension $2N-2+n$, where $n$ is the integer part of $N/2$. Beyond $\underline m$, this family is described by $N-1$ phases and 
 by $N-n-1$ real parameters varying in a set $W$, parametrizing the solutions of the system
 \begin{align}
 \sum_{j=1}^{N-1} \zeta_j=& \sum_{a=1}^n m_a^2, \\
 \sum_{j_1\ll\ldots \ll j_k\leq {N-1}} \zeta_{j_1} \cdots \zeta_{j_k}=&\!\!\!\! \!\!\! \sum_{a_1<\ldots<a_k\leq n} m_{a_1}^2\cdots m_{a_k}^2, \cr &\qquad k=2,\ldots,n.
\end{align}
Correspondingly, the fundamental Baryon number is $B_0=2\sigma \|\underline m \|^2$.
\end{prop}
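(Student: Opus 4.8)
The plan is to convert the periodicity requirement on $g(x)=e^{xk_{\underline c}}$ into a purely algebraic system for the moduli $\zeta_j:=|c_j|^2$, and then to count its real positive solutions with the implicit function theorem. First I would note that $k_{\underline c}$ is antihermitian; after conjugating by a suitable diagonal phase matrix it becomes real antisymmetric, so its spectrum consists of conjugate pairs $\pm i\lambda_a$ with $\lambda_a>0$, together with one extra $0$ when $N$ is odd, giving generically exactly $n=\lfloor N/2\rfloor$ distinct positive eigenvalues. Diagonalising $k_{\underline c}=U^\dagger\sigma U$ one has $e^{xk_{\underline c}}=U^\dagger e^{x\sigma}U$, so $g$ is periodic if and only if $e^{x\sigma}$ is, which happens exactly when the $\lambda_a$ are pairwise commensurable, i.e.\ $\lambda_a=z\,m_a$ for some $z>0$ and positive integers $m_a$. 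Factoring out the gcd of the $m_a$ (it can be absorbed into $z$) one may take them coprime, and in the generic situation of pairwise distinct eigenvalues they may be arranged into a strictly increasing tuple $\underline m$; finally $z$ is pinned to $z=1$ by demanding that the common period be exactly $2\pi$ (since the $m_a$ are coprime, the common period of the one-parameter subgroup is $2\pi/z$). The $N=3$ case is trivial here, as there is a single positive eigenvalue and no commensurability constraint at all.

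Next I would invoke the earlier proposition on the characteristic polynomial: writing $y=x^2$, the nonzero eigenvalues solve $y^n+a_1y^{n-1}+\ldots+a_n=0$ with $a_1=\|\underline c\|^2$ and $a_k=\sum_{j_1\ll\ldots\ll j_k}\zeta_{j_1}\cdots\zeta_{j_k}$. Substituting the prescribed roots $y_a=-m_a^2$ and matching the $a_k$ against Vieta's formulas for the elementary symmetric functions of the $m_a^2$ produces exactly the system displayed in the statement. In particular $\|\underline c\|^2=\sum_a m_a^2=\|\underline m\|^2$, and combining this with the Baryon-number computation $B=2\sigma m\|\underline c\|^2$ of App.~\ref{app:integrals} yields the fundamental charge $B_0=2\sigma\|\underline m\|^2$. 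The remaining $N-1$ phases of $\underline c$, writing $c_j=\xi_j\sqrt{\zeta_j}$, do not affect the characteristic polynomial and so are unconstrained; they contribute the torus factor of the moduli space.

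The real content — and the step I expect to be the main obstacle — is proving that the above system of $n$ equations in the $N-1$ positive unknowns $\zeta_j$ generically cuts out an $(N-n-1)$-dimensional set $W$. My plan is to exhibit the explicit boundary solution $\zeta_{2a}=0$, $\zeta_{2a-1}=m_a^2$ (where the off-diagonal $2\times 2$ blocks of $k_{\underline c}$ decouple and each carries the eigenvalue pair $\pm i m_a$), and then to check that the $n\times n$ Jacobian submatrix $M_{a,b}=\partial F_a/\partial\zeta_{2b-1}$, evaluated there, is invertible, where $F_a$ denotes the left-hand side of the $a$-th equation. The determinant of $M$ is the crux: subtracting the first column from every other column turns the top row into $\delta_{1,b}$, so a Laplace expansion along it reduces $\det M$ to an $(n-1)\times(n-1)$ determinant of exactly the same shape but with $m_1$ removed, after extracting a factor $m_b^2-m_1^2$ from each column (using the telescoping identity $M_{k,b}-M_{k,1}=(m_b^2-m_1^2)\sum_{c_1<\ldots<c_{k-2},\,1\ne c_j\ne b}m_{c_1}^2\cdots m_{c_{k-2}}^2$); iterating gives $\det M=\prod_{a<b}(m_b^2-m_a^2)$, which is nonzero precisely because $\underline m$ is strictly increasing, so the genericity hypothesis on $\underline m$ is exactly what the argument consumes. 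The implicit function theorem then produces a smooth family $\zeta_{2a-1}=\zeta_{2a-1}(\{\zeta_{2b}\})$ near $\zeta_{2b}=0$, with the $\zeta_{2a-1}$ staying positive by continuity, so all components of $\underline c$ can be kept nonzero; the small positive $\zeta_{2b}$ then parametrize $W\subset\mathbb{R}^{N-n-1}$.

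Finally I would close by checking exhaustiveness in the other direction: if $k_{\underline c}$ has all $c_j\ne 0$ and $e^{xk_{\underline c}}$ is periodic, then its nonzero eigenvalues are commensurable, hence after the normalisation above it corresponds to some coprime strictly increasing $\underline m$ and some point of the associated $W$. Assembling the discrete data $\underline m$, the $N-1$ phases $\xi_j$, and the $N-n-1$ parameters of $W$ gives the asserted parametrization and dimension count, and the period of $g$ equals $2\pi$ by construction.
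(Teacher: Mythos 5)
Your proposal is correct and follows essentially the same route as the paper's own proof: diagonalize $k_{\underline c}$, reduce periodicity to commensurability of the eigenvalues, match the characteristic-polynomial coefficients against Vieta's formulas to obtain the system for $\zeta_j=|c_j|^2$, and then establish the $(N-n-1)$-dimensional solution set $W$ via the boundary point $\zeta_{2a}=0$, $\zeta_{2a-1}=m_a^2$, the Jacobian determinant $\prod_{a<b}(m_b^2-m_a^2)$, and the implicit function theorem, with the phases supplying the torus factor and $z=1$ fixing the period to $2\pi$. The minor extras you add (real antisymmetric form after a diagonal phase conjugation, the decoupled $2\times 2$ block interpretation of the boundary solution, the explicit converse check) do not change the argument.
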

One says that these matrices have a moduli space 
\begin{align}
 \mathcal M= \mathbb T^{N-1}\times W,
\end{align}
where $\mathbb T^{N-1}$ is the torus generated y the phases and $W\subset \mathbb R^{N-n-1}$ is the moduli space of the system. It is difficult to say something of general about the global properties of $W$. We will study in general the case 
$N=4$ where all computations are feasible explicitly.\\
{\bf Remark:} for $N=3$ we have $n=1$ and, therefore, only one integer $m$ that must be equal to 1 (to be ``coprime''). So $\underline c$ must have norm 1 and the fundamental Baryon number is $B=2\sigma$.

\subsection{The $N=4$ case}\label{sec:N=4}
Let us apply the above results to the case of $SU(4)$. We have $n=2$, so we expect the dimension of $W$ to be 1.
The eigenvalues equation for $k$ is
\begin{align}
 0=\lambda^4+\lambda^2\|\underline c \|^2+|c_1|^2|c_3|^2.
\end{align}
The four solutions are $i\lambda_+, i\lambda_-, -i\lambda_+, -i\lambda_-$, with
\begin{align}
 \lambda_\pm =\sqrt {\frac {\|\underline c\|^2}4+\frac {|c_1||c_3|}2}\ \pm \sqrt {\frac {\|\underline c\|^2}4-\frac {|c_1||c_3|}2}. \label{lambdapm}
\end{align}
Let $q\leq p$ a pair of positive coprime integer numbers. Then, we have to solve the system
\begin{align}
 \zeta_1+\zeta_2+\zeta_3&=p^2+q^2, \\
 \zeta_1\zeta_3&=p^2q^2.
\end{align}
Notice that this gives 
\begin{align}
 \lambda_+=p, \qquad \lambda_-=q.
\end{align}
Now, let us replace 
\begin{align}
\zeta_3=p^2q^2/\zeta_1  
\end{align}
in the first equation, so that
\begin{align}
 \zeta_1 +\frac {p^2q^2}{\zeta_1}-(p^2+q^2)=-\zeta_2.
\end{align}
Since we have to require $\zeta_2>0$, we see that it must be 
\begin{align}
 \zeta_1^2-(p^2+q^2)\zeta_1 +p^2q^2<0.
\end{align}
This is equivalent to say
\begin{align}
  q^2<\zeta_1<p^2.
\end{align}
So we can use $\tau=\sqrt{\zeta_1}$ as a modulus to represent $W$. The moduli space, including the boundary, is therefore
\begin{align}
 \mathcal M_4=\mathbb T^3\times [q,p].
\end{align}
For
\begin{align}
(e^{i\alpha_1}, e^{i\alpha_2}, e^{i\alpha_3}, \tau)\in \mathcal M_4, 
\end{align}
we have
 \begin{align}
 \underline c=\left(e^{i\alpha_1} \tau; \ e^{i\alpha_2} \sqrt {p^2+q^2-\tau^2 -\frac {p^2q^2}{\tau^2}};\ \frac {pq}\tau e^{i\alpha_3}\right).
\end{align}
The corresponding period is of course
\begin{align}
 T=2\pi 
\end{align}
and the fundamental Baryon number is
\begin{align}
 B_0=2\sigma (p^2+q^2).
\end{align}
Finally, we can compute the exponential.
Rewriting the characteristic polynomial as
\begin{align}
 P(x)=x^4+x^2(p^2+q^2)+p^2q^2,
\end{align}
we see that the matrix $k\equiv k_{\underline c}$ satisfies
\begin{align}
 k^4=-(p^2+q^2) k^2-p^2q^2 \mathbb I.
\end{align}
This implies that there must exist four functions $f_j(x)$, $j=0,1,2,3$ such that
\begin{align}
 e^{xk}=f_0(x) \mathbb I+f_1(x) k+f_2(x) k^2+ f_3(x) k^3,
\end{align}
with $f_0(0)=1$, $f_a(0)=0$, $a=1,2,3$.
From
\begin{align}
 \frac d{dx} e^{xk}=k e^{xk}
\end{align}
we get
\begin{align}
 f'_0(x) \mathbb I&+f'_1(x) k+f'_2(x) k^2+ f'_3(x) k^3 \cr &= f_0(x) k+f_1(x) k^2+f_2(x) k^3\cr &+ f_3(x) (-(p^2+q^2) k^2-p^2q^2 \mathbb I),
\end{align}
which gives the system of differential equations
\begin{align}
 f'_0&=-p^2q^2 f_3, \\
 f'_1&=f_0,\\
 f'_2&=f_1-(p^2+q^2)f_3, \\
 f'_3&=f_2, 
\end{align}
with the Cauchy conditions $f_j(0)=\delta_{j,0}$. Using the fourth equation in the third one we get
\begin{align}
 f''_3=f_1-(p^2+q^2) f_3, \quad f''_3(0)=0.
\end{align}
Deriving again and using the second equation:
\begin{align}
 f'''_3=f_0-(p^2+q^2) f'_3, \quad f'''_3(0)=1.
\end{align}
Deriving a last time and using the first equation, we finally get the Cauchy problem
\begin{align}
& f_3''''+(p^2+q^2) f''_3 +p^2q^2 f_3=0,\\
& f_3(0)=0, \quad f'_3(0)=0, \quad f''_3(0)=0, \quad f'''_3(0)=1.
\end{align}
This is easily solved and gives also $f_2=f'_3$, $f_1=f'_2+(p^2+q^2)f_3$, and finally $f_0=f'_1$. For $p>q$, we get
\begin{align}
 f_0(x)&=\frac {p^2}{p^2-q^2} \cos (qx)-\frac {q^2}{p^2-q^2} \cos (px),\\
 f_1(x)&=\frac {p^2}{q(p^2-q^2)} \sin (qx)-\frac {q^2}{p(p^2-q^2)} \sin (px),\\
 f_2(x)&=\frac {1}{p^2-q^2} (\cos (qx)-\cos (px)),\\
 f_3(x)&=\frac {1}{p^2-q^2} \left( \frac {\sin (qx)}q-\frac {\sin (px)}p \right).
\end{align}
In the case $p=q=1$ we have 
\begin{align}
 f_3(x)=-\frac 12 x\cos x +\frac 12 \sin x. 
\end{align}
This is sufficient to show that the case when $p=q$ must be excluded, since the solution is no more periodic.
 

\section{The Baryonic number}\label{app:integrals}
The Baryon number is defined by the integral
\begin{align}
 B=\frac 1{24\pi^2} \int \epsilon^{ijk} {\rm Tr} (R_i R_j R_k) \sqrt{g} dr\ d\phi\ d\gamma.
\end{align}
Now,
\begin{align*}
 \epsilon^{ijk} {\rm Tr} (R_i R_j R_k)&=\frac 3{L_r L_\gamma L_\varphi} \epsilon^{r\gamma\phi} {\rm Tr}(R_r[R_\gamma,R_\varphi])\cr &=-\frac {3\sigma m}{L_r L_\gamma L_\varphi}{\rm Tr} (h'[k_{\underline c},x]),
\end{align*}
where we used the explicit expressions for the $R_a$. After using (\ref{xk}), we get
\begin{align*}
 \epsilon^{ijk} {\rm Tr} (R_i &R_j R_k)\cr &=-\frac {6\sigma m}{L_r L_\gamma L_\varphi} \sum_{j=1}^{N-1} |c_j|^2 \varepsilon_j \sin (ar)\  {\rm Tr} (h'J_j),
\end{align*}
and using that
\begin{align*}
 -\varepsilon_j {\rm Tr} (h' J_j)= a,
\end{align*}
we finally get
\begin{align*}
 \epsilon^{ijk} {\rm Tr} (R_i R_j R_k)=\frac {6\sigma m}{\sqrt g} \|\underline c \|^2 a \sin (ar).
\end{align*}
Replacing in the integral and integrating we get
\begin{align}
 B=2m\sigma \|\underline c\|^2.
\end{align}
{\bf Remark:} the form
\begin{align}
\omega= \epsilon^{ijk} {\rm Tr} (R_i R_j R_k) \sqrt{g} dr\ d\phi\ d\gamma
\end{align}
is nothing but the pull back on the rectangular box of the volume form ${\rm Tr} (R\wedge R \wedge R)$ over the cycle, see for example \cite{Bertini:2005rc}.

\section{Minimal energy per Baryon}\label{app:minimal energy}
Let us minimise expression (\ref{formulone}) w.r.t. the $L_a$, $a=\varphi, r, \gamma$. Let us rewrite it in the form
\begin{widetext}
 \begin{align}
 g(L_\varphi,L_r,L_\gamma)=D L_\varphi L_r L_\gamma \left[ \frac {A^2}{L_\varphi^2} +\frac {B^2}{L_r^2}+\frac {C^2}{L_\varphi^2L_r^2} +\frac {M^2}{L_\gamma^2} \left( 1+\frac {\alpha^2}{L_\varphi^2} +\frac {\beta^2}{L_r^2} \right) \right],
\end{align}
where
\begin{align}
 D&=\frac {K\pi^3}{4\sigma m},  \qquad\ A=4\sigma,\qquad\ B=\frac {\|v_{\underline\varepsilon}\|}{\|\underline c\|}, \cr
 C&=\sigma\sqrt\lambda, \qquad\  M=2\sqrt 2\ m, \qquad\ \beta=\frac {\sqrt\lambda}{4}, \cr
 \alpha&= \sqrt\lambda \frac {\sigma}{\|\underline c\|} \left(\sum_{j=1}^{N-1} |c_j|^4+\sum_{j=1}^{N-2} |c_j|^2 |c_{j+1}|^2  \left(\frac 12-\frac 32\varepsilon_j \varepsilon_{j+1}\right)\right)^{\frac 12}.
\end{align}
\end{widetext}
Deriving w.r.t. $L_j$ and setting 
\begin{align}
 x=\frac 1{L_\varphi^2}, \qquad y=\frac 1{L_r^2}, \qquad z=\frac {M^2}{L_\gamma^2},
\end{align}
we get the equations for the stationary points:
\begin{align}
 A^2x+B^2y+C^2xy-z(1+\alpha^2x+\beta^2y)&=0, \\
 A^2x-B^2y+C^2xy-z(1-\alpha^2x+\beta^2y)&=0, \\
 -A^2x+B^2y+C^2xy-z(1+\alpha^2x-\beta^2y)&=0.
\end{align}
Solving the first equation w.r.t. $z$ and replacing in the remaining equations, we get
\begin{align}
z&=\frac {A^2x+B^2y+C^2xy}{1+\alpha^2x+\beta^2y},\label{zeta}\\
0&=B^2 y+B^2\beta^2y^2-\alpha^2 x^2 (A^2+C^2 y), \\
0&=A^2 x(1+\alpha^2x)-\beta^2 y^2 (B^2+C^2 x).
\end{align}
From the third equation we get
\begin{align}\label{ipsilon}
 y^2 =\frac {A^2 x}{\beta^2} \frac {1+\alpha^2 x}{B^2+C^2x},
\end{align}
which replaced in the second term of the second equation gives
\begin{align}
 (\alpha^2 x^2 C^2-B^2) (y+\frac {A^2 x}{B^2+C^2 x})=0.
\end{align}
Since we are looking for positive $x,y,z$, the second factor is strictly positive and the only allowed solution is $x=\frac {B}{\alpha C}$. Replacing in (\ref{ipsilon}) and then in (\ref{zeta}), we get

\begin{align}
 x=\frac {B}{\alpha C}, \qquad y=\frac {A}{\beta C}, \qquad z=\frac {AB}{\alpha \beta}.
\end{align}
Therefore,
\begin{eqnarray}
 \frac 1{L_\varphi^2}&=& \frac {\|v_{\underline\varepsilon}\|}{\lambda \sigma^2} \left(\sum_{j=1}^{N-1} |c_j|^4+\right.\\
 &&+\left.\sum_{j=1}^{N-2} |c_j|^2 |c_{j+1}|^2
  \left(\frac 12-\frac 32\varepsilon_j \varepsilon_{j+1}\right)
 \right)^{-\frac 12}, \nonumber \\
 \frac 1{L_r^2}&=& \frac {16}\lambda, \\
 \frac 1{L_\gamma^2}&=& \frac {2\|v_{\underline \varepsilon}\|}{\lambda m^2} \left(\sum_{j=1}^{N-1} |c_j|^4+\right.\\
 &&+\left. \sum_{j=1}^{N-2} |c_j|^2 |c_{j+1}|^2
  \left(\frac 12-\frac 32\varepsilon_j \varepsilon_{j+1}\right)
 \right)^{-\frac 12},\nonumber
\end{eqnarray}
\begin{widetext}
and the corresponding energy per Baryon in standard units ($K=(6\pi^2)^{-1}$, $\lambda=1$) is
\begin{align}\label{energia minima}
 g(\underline c,\varepsilon)=\frac \pi{3\sqrt 2} \left[ 2+\frac {\|v_{\underline\varepsilon}\|}{\|\underline c\|^2}  \left(\sum_{j=1}^{N-1} |c_j|^4+\sum_{j=1}^{N-2} |c_j|^2 |c_{j+1}|^2
  \left(\frac 12-\frac 32\varepsilon_j \varepsilon_{j+1}\right)
 \right)^{\frac 12} \right].
\end{align} 
\end{widetext}

\bibliography{skyrmions}

\end{document}